\documentclass[a4paper,11pt]{article}

\usepackage{amsmath}
\usepackage{amssymb}
\usepackage{amsthm}
\usepackage[backend=bibtex,style=numeric,giveninits=true,maxbibnames=99]{biblatex}
\usepackage[labelfont=bf,labelsep=period]{caption}
\usepackage{comment}
\usepackage[nodayofweek]{datetime}
\usepackage{enumitem}
\usepackage{float}
\usepackage[margin=2.5cm]{geometry}
\usepackage{graphicx}
\usepackage{hyperref}
\usepackage[utf8]{inputenc}
\usepackage{lipsum}
\usepackage{multicol}
\usepackage{newtx}
\usepackage{numbertabbing} %
\usepackage{thm-restate}
\usepackage{url}
\usepackage{xcolor}
\usepackage{xspace}
\usepackage{tikz}

\renewbibmacro*{doi+eprint+url}{%
  \iftoggle{bbx:url}
  {\iffieldundef{doi}{\usebibmacro{url}}{}}
  {}%
  \newunit\newblock
  \iftoggle{bbx:eprint}
  {\usebibmacro{eprint}}
  {}%
  \newunit\newblock
  \iftoggle{bbx:doi}
  {\printfield{doi}}
  {}
}

\allowdisplaybreaks[2] %

\newtheoremstyle{plain-boldhead}%
  {\topsep}%
  {\topsep}%
  {\itshape}%
  {}%
  {\bfseries}%
  {.}%
  { }%
  {\thmname{#1}\thmnumber{ #2}\thmnote{ (\bfseries #3)}}%
\newtheoremstyle{definition-boldhead}%
  {\topsep}%
  {\topsep}%
  {\normalfont}%
  {}%
  {\bfseries}%
  {.}%
  { }%
  {\thmname{#1}\thmnumber{ #2}\thmnote{ (\bfseries #3)}}%
\theoremstyle{plain-boldhead}
\newtheorem{theorem}{Theorem}
\newtheorem{proposition}[theorem]{Proposition}
\newtheorem{lemma}[theorem]{Lemma}
\newtheorem{corollary}[theorem]{Corollary}

\theoremstyle{definition-boldhead}
\newtheorem{definition}{Definition}

\newtheorem{example}{Example}

\newdateformat{simple}{\THEDAY\ \monthname[\THEMONTH]\ \THEYEAR}
\simple

\floatstyle{ruled}
\newfloat{algo}{htbp}{algo}
\floatname{algo}{Algorithm}

\def \ifempty#1{\def\temp{#1} \ifx\temp\empty }

\newcommand{\var}[1]{\textit{#1}}
\newcommand{\op}[1]{\textsl{#1}}

\newcommand{\strf}[1]{\textsf{#1}}

\newcommand{\false}{\textsc{false}\xspace}
\newcommand{\true}{\textsc{true}\xspace}

\newcommand{\BF}{\ensuremath{\mathbb{F}}\xspace}

\newcommand{\BN}{\ensuremath{\mathbb{N}}\xspace}

\newcommand{\BQ}{\ensuremath{\mathbb{Q}}\xspace}
\newcommand{\BR}{\ensuremath{\mathbb{R}}\xspace}

\newcommand{\CA}{\ensuremath{\mathcal{A}}\xspace}

\newcommand{\CC}{\ensuremath{\mathcal{C}}\xspace}

\newcommand{\CF}{\ensuremath{\mathcal{F}}\xspace}

\newcommand{\CK}{\ensuremath{\mathcal{K}}\xspace}

\newcommand{\CO}{\ensuremath{\mathcal{O}}\xspace}
\newcommand{\CP}{\ensuremath{\mathcal{P}}\xspace}
\newcommand{\CQ}{\ensuremath{\mathcal{Q}}\xspace}
\newcommand{\CR}{\ensuremath{\mathcal{R}}\xspace}

\newcommand{\reliable}{reliable\xspace}

\newcommand\node{\var{node}\xspace}

\def \ifempty#1{\def\temp{#1} \ifx\temp\empty }

\begin{document}

\title{\bf Monotone Erasure Codes}

\author{Vivien Bammert\footnote{%
    Contact authors: Institute of Computer Science, University of Bern,
    Neubr\"{u}ckstrasse 10, 3012 CH-Bern, Switzerland.}\\
  University of Bern\\
  \url{vivien.bammert@unibe.ch}
  \and Annalisa Cimatti\footnotemark[1]\\
  University of Bern\\
  \url{annalisa.cimatti@unibe.ch}
  \and Orestis Alpos\\
  Common Prefix\\
  \url{orapos@gmail.com}
  \and Giuliano Losa\\
  Stellar Development Foundation\\
  \url{giuliano@stellar.org}\\
  \and Christian Cachin\\
  University of Bern\\
  \url{christian.cachin@unibe.ch}
}

\date{\today}

\maketitle

\begin{abstract}\noindent
Erasure codes are a critical component in reliable storage systems today, and
many blockchain systems use consensus protocols that involve erasure codes to
reduce their communication cost. Existing erasure codes, such as Reed-Solomon
codes, make it easy to design systems that rely on a threshold failure
assumption, meaning that at most a fixed threshold of nodes may fail, regardless
of which nodes fail. However, recent blockchain systems have departed from this
simple model.  Instead, they use generalized failure assumptions or general
Byzantine quorum systems that allow expressing that, e.g., some nodes may be
more trustworthy than others.

This paper introduces \emph{monotone erasure codes} that respect arbitrary
trust assumptions on a set of nodes, characterized by a monotone
access structure or the related notion of a quorum system.  The paper first
describes a method for constructing a monotone erasure code from any access
structure given by a monotone Boolean formula (MBF).  Next, the important
notion of a \emph{linear monotone erasure code} is introduced, which works
on vectors over a finite field and where the encoding is a linear operation.
A method to build a linear monotone erasure  code for any access structure is presented. This method is  computationally efficient, but it does not provide an optimal code in general.
An alternative algorithm to construct an optimal linear monotone erasure code
for an access structure given as an MBF is shown that uses an ordinary,
maximum-distance separable erasure code underneath.  An important class of
non-threshold access structures used in the practice of cryptocurrencies is
then examined, and an optimal monotone erasure code for these access
structures is constructed. 

Last but not least, this work also shows how to use monotone erasure codes to
obtain a communication-efficient, generalized version of the well-known
asynchronous verifiable information dispersal (AVID) primitive, which is a key
building block for developing efficient reliable broadcast and consensus protocols.
\end{abstract}

\section{Introduction}
\label{sec:introduction}

As computing systems collect and process more information than ever today, it
becomes critical to store and communicate tremendous amounts of information
securely, reliably, and economically.
An \emph{erasure code} provides a well-known
method to divide large volumes of data into pieces so that some pieces may be
lost but the data itself can still be recovered.
This has many applications: Distributed storage
systems~\cite{DBLP:journals/tos/ShenCCLLHS25} allocate pieces of information
across multiple disks or storage nodes for increased resilience,
and in the neighboring field of reliable communication, network coding
protocols~\cite{DBLP:journals/ccr/FragouliBW06} split large files into blobs
that are subsequently transmitted individually for efficiency and can be
recombined at the receiver. %
In the realm of blockchains and cryptocurrencies, transaction data and system state have become so large that erasure codes are
used broadly to store and communicate such data efficiently.  For example,
some so-called data-availability networks use erasure codes to provide transaction data as a
service~\cite{DBLP:journals/cic/HallAndersenSW24}, and many recent
Byzantine-tolerant consensus protocols disseminate data with erasure
codes~\cite{DBLP:journals/csur/YangCWWLZ24}.

Many widely used erasure codes, such as maximum-distance separable (MDS) codes
(e.g., Reed--Solomon codes), divide their input data into $k$ pieces and encode
it into $n>k$ so-called \emph{fragments} of equal size such that any $k$ of the $n$ fragments
are sufficient to reconstruct the original data.  By allocating each fragment to
a distinct node in a set~\CP of cardinality~$n$ (e.g., disks, storage servers,
or consensus nodes) and by using cryptographic authentication (e.g., a Merkle
commitment or per-fragment hashes) to filter out invalid fragments, one can
recover the original data even if $n-k$ nodes fail.  This is more efficient than
replicating the full data $n-k+1$ times.

In many contexts, however, some nodes are more likely to fail than others or may
be more trustworthy than others.  In this case, allocating exactly one fragment
out of $n$ to each node~-- regardless of how likely it may fail or how
trustworthy it is~-- may not guarantee that $k$ fragments will be available.
This applies, in particular, to specific deployments of distributed cryptography
and secure multiparty computation, to voting power in blockchains that rely on
stake, and to the trust models of consensus protocols used in the XRP
Ledger~\cite{CohenSB25} and the Stellar
network~\cite{DBLP:conf/sosp/LokhavaLMHBGJMM19}.  In those scenarios, the
underlying assumption is typically captured by an access
structure~\cite{DBLP:conf/crypto/Leichter88,DBLP:journals/joc/HirtM00} or a
quorum system~\cite{DBLP:journals/dc/MalkhiR98}.

In this paper, we introduce \emph{monotone erasure codes} that respect arbitrary
trust assumptions on a set of nodes~\CP characterized by any \emph{access
structure}~$\CA \subseteq 2^\CP$; such codes assign fragments to nodes and
ensure that, for every \emph{access set}~$A$ contained in \CA, the fragments
assigned to the members of \(A\) are sufficient to reconstruct the data.  We
assume that all access sets are minimal, but every set of nodes in \CP that
extends an access set may also reconstruct the data; this makes the data
reconstruction capability monotone on~\CP.
In more detail, a monotone erasure code consists of a pair of algorithms called
\textsc{Encode} and \textsc{Decode}: Algorithm \textsc{Encode} receives a
\emph{file} as input and outputs a vector of $n$ fragments such that each node
in $\mathcal{P}$ receives one fragment.  In contrast to standard erasure codes, the
fragments may have different sizes.  The \textsc{Decode} algorithm, on input a
collection of fragments, outputs some information.  The code respects $\mathcal{A}$ in the
sense that \textsc{Decode} returns the original information whenever it receives
the fragments from an access set as input. %
Analogously to standard coding
theory, we focus on the important case of \emph{linear monotone erasure codes},
where the file and the fragments are vectors over a finite field $\BF_q$ and the
encoding involves only linear operations.

Important parameters for monotone erasure codes 
are the size of the input file, denoted $\kappa$, and the combined size of all fragments,
called~$\mu$.  Therefore, the efficiency of a code is measured by its
\emph{overhead} $\beta = \frac{\mu-\kappa}{\kappa}$, and this should be as
small as possible.  In other words, an efficient monotone erasure code adds as
little redundancy as possible.

In Section~\ref{sec:monotone_erasure_codes}, we first formalize the notion of a monotone erasure code and its overhead.
Then we provide a simple but inefficient method to
construct monotone erasure codes.
It works for any
access structure that is given by a monotone Boolean formula (MBF) with
threshold operators on \CP, i.e., from formulas in variables that represent the
nodes in~\CP.  Such an expression may not be a compact description of the access
structure; in this case, the resulting monotone erasure code can have a large
overhead.

As for other codes, monotone erasure codes that work over finite fields and where encoding is a linear operation are an important notion, as they have a rich mathematical structure.
We formalize them in Section~\ref{sec:linear} and show how they
generalize MDS erasure codes.
The most prominent MDS erasure codes are
Reed-Solomon codes~\cite{ReedS60,DBLP:journals/jacm/Rabin89};
we also call MDS codes threshold codes because their access structure consists
of all subsets of \CP with cardinality~$k$.

In Section~\ref{sec:matrix_construction}, we present an efficient construction of a linear monotone erasure code for any access structure. However, in general, this approach does not always achieve minimal storage overhead. Therefore, in Section~\ref{sec:construction_from_threshold} we focus on constructing \emph{optimal} linear monotone erasure codes. An algorithm to generate a linear monotone erasure code for an arbitrary access
structure from an MDS erasure code is provided. To optimize the overhead of the monotone erasure code, our algorithm involves solving a rational linear programming problem whose complexity depends on the number of access sets.

An important class of non-threshold access structures used in practice are access structures stemming from a hierarchy of domains, and we examine
them in Section~\ref{sec:partitioned_access}.  Such access structures form
a hierarchy of entities, where each entity at one level is comprised of
multiple entities at a lower level.  The Stellar network~\cite{DBLP:conf/sosp/LokhavaLMHBGJMM19} is a real-world system using this kind of access structures.
Formally, these correspond to MBFs where each variable appears at most once.  
We give an efficient algorithm to construct an optimal monotone erasure code for such a \emph{partitioned access structure}, and characterize its overhead.

In Section~\ref{sec:gavid}, we turn to distributed protocols.
Erasure codes are found in many Byzantine consensus and broadcast
algorithms for blockchains today~\cite{DBLP:conf/podc/CamenischDHPS022,
DBLP:conf/wdag/Shoup24, DBLP:conf/eurocrypt/LocherS25, KniepSW25, DBLP:journals/csur/YangCWWLZ24}
because they allow for efficient information dispersal, often using protocols in
the vein of the protocol of Cachin and Tessaro~\cite{DBLP:conf/srds/CachinT05}
for \emph{asynchronous verifiable information dispersal (AVID)}.  We show how to
use monotone erasure codes to generalize this classic protocol to obtain a
communication-efficient \emph{generalized AVID} protocol that works not only for Byzantine quorum systems that tolerate $f$ failures out of $n$ nodes but also for general Byzantine quorum systems. The key ingredients that allow the generalization are, of course, monotone erasure codes, but also the notion of \emph{kernels} (generalizing sets of \(f+1\) nodes) and our novel notion of \emph{reliable sets}.
From generalized AVID, or GAVID, one can also obtain a communication-efficient
Byzantine reliable broadcast protocol for an arbitrary Byzantine quorum
system.  This paves the way to integrating monotone erasure codes into other
protocols like consensus or into distributed storage systems.

Finally, erasure codes for reducing communication complexity in distributed
computing have become a prominent topic recently. We review some of the
literature and compare our notions to those from related work in Section~\ref{sec:related_work}.

\paragraph{Contributions.} The four principal contributions of this work
are:
\begin{itemize}
\item To propose monotone erasure codes that generalize classical erasure
  codes, which address only threshold access structures;
\item To provide two methods to build a linear monotone erasure code for any
  monotone access structure: one that is computationally efficient and one that minimizes the overhead of the code;
\item To analyze the special case of a partitioned access structure,
  for which we build a linear monotone erasure code with optimal overhead; and
\item To formulate a general asynchronous verifiable information dispersal
  protocol (GAVID) for building communication-efficient Byzantine-tolerant
  reliable broadcast with non-threshold quorum systems.
\end{itemize}

\section{Access Structures}
\label{sec:access_struct}

Let $\mathcal{P}$ be a set of $n$ nodes $\{p_1, \ldots, p_n\}$. For each $n \in \mathbb{N}$, we denote the set $\{1,\ldots, n\}$ by $[n]$. 
In many applications, it is necessary to specify which groups of nodes are able to perform certain actions.
For example, in a distributed storage system, we want to ensure resilience against data loss at some nodes. In other words, even if certain nodes lose their stored data, the set of remaining nodes should still be able to recover the original information.
The collection of all such sets is called an \emph{access structure}.

\begin{definition}[Access structure]\label{def:access.structure}
    An \emph{access structure} $\mathcal{A}$ on $\CP$ is a collection of subsets of~$\mathcal{P}$ such that no set is contained in another. Each set $A \in \mathcal{A}$ is called an \emph{access set}.
\end{definition}

Note that in the literature, the structure described in Definition~\ref{def:access.structure} is usually called a \emph{minimal} access structure. However, in our context, if an access set is able to reconstruct the stored data, then all supersets of this set can also reconstruct it. This makes the ability to reconstruct data \emph{monotone} in \CP. Thus, it suffices to consider only the minimal collection of access sets and show that these sets can reconstruct the data.

It is possible to represent any monotone access structure using a \textit{monotone Boolean formula} (MBF). An MBF is a formula composed of AND, OR, and threshold operators, along with atoms. %
In our case, each atom corresponds to a node $p_i$. The threshold operator $\Theta_k^m(q_1, \ldots, q_m)$ evaluates to 1 whenever $k$ out of the $m$ functions $q_1, \ldots, q_m$ evaluate to 1, where each $q_i$ can be either an atom or an operator. Notice that $\Theta_m^m$ corresponds to the AND operator and $\Theta_1^m$ corresponds to the OR operator evaluated on $m$ inputs. %

An MBF $\Gamma$ describes a monotone function $\varphi :2^\mathcal{P} \to \{0,1\}$ in the following way: 
\begin{itemize}
    \item when $\Gamma$ consists only of an atom, then the value of $\Gamma$ on input $S \subseteq \CP$ is $1$ if and only if $\Gamma \in S$;
    \item when $\Gamma$ is the threshold operator 
$\Theta^m_k(q_1, \ldots, q_m)$, then $\Gamma(S) = 1$ if at least $k$ of the $q_1, \ldots, q_m$ are recursively evaluated to $1$ on input $S$. 
\end{itemize}

Note that a monotone access structure $\mathcal{A}$ can be described using a monotone function $\alpha :2^\mathcal{P} \to \{0,1\}$ that returns 1 for every access set and 0 otherwise. We can therefore represent the access structure using the MBF that describes $\alpha$. 
We refer to the rooted tree representing such an MBF as the \emph{access tree} of $\mathcal{A}$.
In more detail, an access tree is a rooted, labeled tree in which each internal vertex \(v\) has a \emph{label} that is an integer between 1 and the number of children of \(v\) (inclusive), and each leaf vertex is labeled with a node identifier.

An important subclass of access structures is what we call \emph{partitioned access structures}.
Those are access structures that can be described by an MBF where each atom appears at most once. Note that in the access tree of a partitioned access structures each leaf is labeled with a unique node identifier.

\section{Monotone Erasure Codes}
\label{sec:monotone_erasure_codes}

\subsection{Definition}
\label{ssec:definition}

Consider an access structure $\mathcal{A}$ on $\mathcal{P}$. Suppose we want to store a file~$f$, also called an information vector, among the nodes in~$\mathcal{P}$, with the guarantee that every access set $A$ is able to reconstruct~$f$. %
Additionally, we seek an efficient storage scheme, meaning that the total amount of information stored across all nodes is as small as possible.
To address this need, we propose a new kind of erasure code, called \textit{monotone erasure code}. Let $\mathcal{F}$ be a finite set and let $\mathcal{G}=\mathcal{G}_1 \times \ldots \times \mathcal{G}_n$, where each $\mathcal{G}_i$ is a finite set. 

\begin{definition}[Monotone erasure code]\label{Monotone erasure code}
     Let $\mathcal{A}$ be an access structure on $\mathcal{P}$ and let $f$ be a file from a set $\mathcal{F}$. A \textit{monotone erasure code $\mathcal{C}$ for $\mathcal{A}$} is a scheme that distributes $f$ among the nodes in $\mathcal{P}$ and assigns to each $p_i \in \mathcal{P}$ a piece of information $g_i \in \mathcal{G}_i$ called \textit{fragment} of $p_i$.
    Such a code consists of two functions:
\begin{description}
    \item \textsc{Encode}$(f)$: given a file $f$, 
    returns a fragment vector $g=(g_1, \ldots, g_n) \in \mathcal{G}$; %
    \item \textsc{Decode}$(u)$: given %
    a vector $u=(u_i)_{i \in [n]} \in \mathcal{G}$, %
    returns either a file $f' \in \mathcal{F}$ or $\bot$.
\end{description}
A monotone erasure code for $\mathcal{A}$ satisfies the following property:
\begin{description}
    \item \textbf{Completeness:} for all $A\in\mathcal{A}$, the fragments $g_A$ of the nodes in $A$ generated by \textsc{Encode}($f$) are enough to reconstruct $f$, i.e., $g_A$ satisfies $$\textsc{Decode}(g_A)=f.$$ 
\end{description}
\end{definition}
For any $B \subseteq \CP$, we denote by $g_B$ the vector whose $i$-th entry equals $g_i$ if $p_i \in B$, and $\bot$ otherwise.
Moreover, we assume w.l.o.g. that $\bot \in \mathcal{G}_i$, for all $i \in [n]$, and allow that $\mathcal{G}_i = \{\bot\}$, meaning that the fragment of $p_i$ is the constant $\bot$. We introduce this constant because, for certain structures, it may be convenient to assign no actual data to some nodes, in which case their fragment is $\bot$.

Let $g=\text{\textsc{Encode}}(f)$ for some file $f$. Then, we call a set $B\subseteq \mathcal{P}$ \textit{sufficient} if \textsc{Decode}$(g_B)=f$. Otherwise, if \textsc{Decode}$(g_B) \neq f$, i.e., \textsc{Decode}$(g_B)$ outputs a vector different from $f$ or \textsc{Decode}$(g_B)= \bot$, then $B$ is called \textit{insufficient}. %
In particular, this means that each access set $A$ is sufficient, namely, it is able to reconstruct the original file $f$ using the fragments $g_A$. However, we do not require all sufficient sets to be in $\mathcal{A}$. As a consequence, it may happen that some sets are able to reconstruct $f$, even though they are not in $\mathcal{A}$. This is an important difference from secret sharing, where sets outside $\mathcal{A}$ are only able to recover negligible or no information.

A monotone erasure code does not provide integrity, in the sense that a group of nodes $B$ does not know if the retrieved file is the original one. In other words, $B$ does not know if $f'=f$, where $\textsc{Decode}(g_B)=f'$ and $f' \neq \bot$. Nevertheless, this issue can be fixed by employing a verification method, such as a Merkle tree.

The efficiency of a monotone erasure code used to store a file $f\in \{0,1\}^\kappa$ can be measured through a metric called \textit{overhead}. It measures the amount of redundancy added per information symbol during the encoding phase.

\begin{definition}[Overhead]
    Let $\mathcal{C}$ be a monotone erasure code that takes as input files of $\kappa$ bits and produces fragments of $\mu$ total bits. The \textit{overhead} $\beta$ of $\mathcal{C}$ is $\beta=\frac{\mu-\kappa}{\kappa}$.
\end{definition}

\subsection{A basic construction of a monotone erasure code}
\label{sec:basic}

In this section, we present an algorithm that, given an access structure \(\mathcal{A}\) described as a MBF, builds a monotone erasure code that is complete for $\mathcal{A}$.
For illustration purposes, we assume here that each operator in the MBF is either AND or OR, but note that threshold operators can always be converted into a composition of AND and OR. In other words, the formula can be expressed as  
\[
F(F^{(1)}_1(F^{(2)}_1(\ldots), \ldots, F^{(2)}_{\ell_1}(\ldots)), \ldots, F^{(1)}_\ell(\ldots))
\]

where $F^{(i)}_j$ is an AND or OR operator at depth $i$. 

The goal is to assign enough information to each node to achieve completeness. In particular, let $f \in \{0,1\}^\kappa$ be the file we want to store. 
Then, the construction works as follows. Consider the access tree $T$ representing $\mathcal{A}$ and let $\mathcal{V}$ be the set of all vertices in $T$. The first step of the algorithm %
is to build a function $\eta: \mathcal{V} \xrightarrow[]{} \{0,1\}^*$ that assigns a substring of $f$ to each vertex.%

We start by considering the root of the tree, which corresponds to the operator $F$, and we set $\eta(F)=f$. Then, if $F$ is an AND with $\ell$ arguments, we divide $f$ in $\ell$ different chunks $f_1, \ldots, f_\ell$ with length $\frac{\kappa}{\ell}$, assuming w.l.o.g. that such fractions are integers.
We then set $\eta(v_i)=f_i$ for each child $v_i$ of the root.
Otherwise, if $F$ is an OR, we set $\eta(v_i)=f$ for each child $v_i$. 
Next, we move to the first child of $F$ and we 
iterate the procedure until we have visited each leaf of the tree and defined $\eta$ for it. 

Notice that each node $p_i \in \mathcal{P}$ might correspond %
to multiple leaves $v_{i_1}, \ldots, v_{i_{w_i}}$. The fragment of $p_i$ is then defined as the list of the chunks $\eta(v_{i_1}), \ldots, \eta(v_{i_{w_i}})$. %
This ensures that each access set has enough information to reconstruct $f$.

\begin{example}
    Let $\mathcal{P} = \{a,b,c,d,e \}$ be the set of nodes and \[\mathcal{A}=\{\{a,c,d,e\}, \{b,c,d,e\}, \{a,b,c\}, \{a,b,d\}, \{a,b,e\}\}\] be the access structure. Then, the corresponding MBF is 
\[\Theta_1^2(\Theta_2^2(\Theta_1^2(a,b), \Theta_3^3(c,d,e)) , \Theta_2^2(\Theta_2^2(a,b), \Theta_1^3(c,d,e))).\]
Suppose we want to store the file $f=(0,1,1,1,0,0,1,0,1,1,0,1) \in \{0,1\}^{12}$.
\begin{figure}[h!]
\centering
\begin{tikzpicture}[grow=down, level distance=2cm]
\node[draw, circle, label=left:{$f$}]{$\vee$} %
  child [sibling distance=5cm]{node[draw, circle, label=left:{$f$}]{$\wedge$}
    child [sibling distance=2.5cm]{node[draw, circle, label=left:{$f_1$}]{$\vee$}
        child [sibling distance=1cm]{node[draw, circle, label=below:{$f_1$}]{a}}
        child [sibling distance=1cm]{node[draw, circle, label=below:{$f_1$}]{b}}}
    child [sibling distance=2.5cm]{node[draw, circle, label=left:{$f_2$}]{$\wedge$}
        child [sibling distance=1cm]{node[draw, circle, label=below:{$f'_1$}]{c}}
        child [sibling distance=1cm]{node[draw, circle, label=below:{$f'_2$}]{d}}
        child [sibling distance=1cm]{node[draw, circle, label=below:{$f'_3$}]{e}}}
  }
  child [sibling distance=5cm]{node[draw, circle, label=left:{$f$}]{$\wedge$}
    child [sibling distance=2.5cm]{node[draw, circle, label=left:{$f_1$}]{$\wedge$}
        child [sibling distance=1cm]{node[draw, circle, label=below:{$f''_1$}]{a}}
        child [sibling distance=1cm]{node[draw, circle, label=below:{$f''_2$}]{b}}}
    child [sibling distance=2.5cm]{node[draw, circle, label=left:{$f_2$}]{$\vee$}
        child [sibling distance=1cm]{node[draw, circle, label=below:{$f_2$}]{c}}
        child [sibling distance=1cm]{node[draw, circle, label=below:{$f_2$}]{d}}
        child [sibling distance=1cm]{node[draw, circle, label=below:{$f_2$}]{e}}}
  };
\end{tikzpicture}
\caption{Access tree representing $\mathcal{A}$ and $\eta$.}
\label{Access tree 1}
\end{figure} 
Figure \ref{Access tree 1} shows the access tree that represents $\mathcal{A}$, along with the evaluation of $\eta$ on the nodes, where 
\begin{multicols}{3}
    \begin{itemize}
    \item $f_1=(0,1,1,1,0,0)$
    \item $f_2=(1,0,1,1,0,1)$
    \item $f'_1=(1,0)$
    \item $f'_2=(1,1)$
    \item $f'_3=(0,1)$
    \item $f''_1=(0,1,1)$
    \item $f''_2=(1,0,0)$
\end{itemize}
\end{multicols}
 \hfill $\triangle$
\end{example}

\section{Linear Monotone Erasure Codes}
\label{sec:linear}

We now consider monotone erasure codes for data represented by symbols that are elements of a finite field~$\mathbb{F}_q$ .
This means the file is a vector $f \in \mathbb{F}_q^k$ of length~$k$.
Moreover, we take the fragments $(g_1, \ldots, g_n)$ from the set
$\mathcal{G}= \mathcal{G}_1 \times \cdots \times \mathcal{G}_n$, where
$\mathcal{G}_i= \mathbb{F}_q^{m_i} \cup \{\bot\}$ or
$\mathcal{G}_i= \{\bot\}$.  Then $m_i\in \mathbb{N}$ denotes the size of
$p_i$'s fragment, and we set $m_i=0$ whenever $\mathcal{G}_i=\{\bot\}$.  The
total size of the stored information is $m=\sum_{i=1}^n m_i$ symbols.

\begin{definition}[Linear monotone erasure code]
  An $[m,k]$-\emph{linear monotone erasure code $\mathcal{C}$ for
    $\mathcal{A}$} is a monotone erasure code for an access structure \CA and
  files of length~$k$ that produces fragments with combined size~$m$, where
    the encoding and decoding algorithms are based on a full-rank matrix $G\in \mathbb{F}_q^{k \times m}$ and a function $\phi:\{1,\ldots,m\}\rightarrow \mathcal{P}$ that assigns each column of $G$ to a node in $\mathcal{P}$. Let $G_{p_i}$ be the matrix consisting of the columns of $G$ assigned to $p_i$ by $\phi$, i.e., 
    \[
    G_{p_i}= ( G_{j_1} | \cdots |G_{j_{m_i}} )
    \]
    where $\phi(j_1) = \ldots = \phi(j_{m_i})=p_i$. Then, the algorithms are defined as follows:  
     
      \begin{description}
        \item \textsc{Encode}($f$): Given $f\in \mathbb{F}_q^k$, the algorithm outputs the fragments $(g_1,\ldots,g_n)$, where $g_i \in  \mathcal{G}_i$ such that 
        $$g_i=f \cdot G_{p_i}$$
        is the fragment of $p_i\in \mathcal{P}$. If there is no $j \in [m]$ such that $\phi(j)=  p_i$, i.e., $m_i=0$, then $g_i = \bot$.

        \item \textsc{Decode}($u$): Given a vector $u \in \mathcal{G}$, the algorithm returns either $f' \in \mathbb{F}^{k}_q$ or $\bot$.
    \end{description}
    In relation to the classical erasure codes, the matrix $G$ is also called
    the \emph{generator matrix} of~\CC, $k$ is the \emph{dimension} of~\CC,
    and $m$ is the \emph{length} of~\CC.
\end{definition}

According to the earlier definitions, an $[m,k]$-linear monotone erasure code is complete if for all $A\in\mathcal{A}$, the fragments $g_A$ of the nodes in $A$ contain enough information to reconstruct $f$, i.e., \textsc{Decode}$(g_A)=f$. Moreover, let $G_B$ be the matrix given by the columns of $G$ assigned to nodes in a set $B \subseteq \CP$. Then, $B$ is sufficient if $f\cdot G_B$ uniquely determines the information vector~$f$. Otherwise, $B$ is insufficient. 

By the completeness property of a linear monotone erasure code $\mathcal{C}$, each set $A\in \mathcal{A}$ is sufficient. Moreover, by the following theorem, it follows that the labeling function must assign to the nodes in a set $A\in \mathcal{A}$ at least $k$ linearly independent columns of $G$ to make $\mathcal{C}$ complete.
\begin{theorem} \label{Theorem: linear monotone erasure code sufficient property}
    Let $\mathcal{C}$ be an $[m,k]$-linear monotone erasure code over $\mathbb{F}_q$ for $\mathcal{A}$. A set of nodes $B\subseteq \mathcal{P}$ is sufficient if and only if the matrix $G_B$
    has full rank, i.e., $\op{rank}(G_B)=k$.
\end{theorem}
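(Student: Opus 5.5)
The plan is to reduce the statement to the injectivity of the linear map $\mathbb{F}_q^k \to \mathbb{F}_q^{m_B}$, $x \mapsto x \cdot G_B$, where $m_B=\sum_{p_i\in B} m_i$ is the number of columns of $G$ assigned by $\phi$ to nodes of $B$. By the paper's definition, $B$ is sufficient exactly when $f\cdot G_B$ uniquely determines $f$. This means that no two distinct information vectors $f\neq f'$ yield the same fragments $g_B$. Indeed, if such a collision existed, then any \textsc{Decode} would return the same output on $g_B$ for both files, and so it would fail for at least one of them. Uniqueness is therefore the same as injectivity of $x\mapsto xG_B$. By linearity, injectivity is equivalent to the left kernel $\{x : xG_B = 0\}$ being trivial, since $fG_B=f'G_B$ if and only if $(f-f')G_B=0$.

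For the direction ``full rank $\Rightarrow$ sufficient'', assume $\op{rank}(G_B)=k$. Then the $k$ rows of the $k\times m_B$ matrix $G_B$ are linearly independent, so $xG_B=0$ forces $x=0$. I will also make decoding explicit, since \textsc{Decode} is only specified abstractly. Choose $k$ linearly independent columns of $G_B$. They form an invertible $k\times k$ submatrix $M$, and the corresponding entries $u_M$ of the received fragments satisfy $u_M = fM$. Hence $f = u_M M^{-1}$ can be computed by Gaussian elimination. So \textsc{Decode} can be instantiated to output $f$ on $g_B$, and $B$ is sufficient.

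For the converse, assume $\op{rank}(G_B)<k$. The rows of $G_B$ are then linearly dependent, so there is a nonzero $x\in\mathbb{F}_q^k$ with $xG_B=0$. Then $f$ and $f+x$ produce identical fragments on $B$. No decoding function can distinguish them, so $f\cdot G_B$ does not uniquely determine $f$, and $B$ is insufficient. Note that this argument needs $k\le m_B$ only implicitly: if $m_B<k$, the rank is automatically less than $k$.

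The main obstacle is definitional rather than technical. The notion of sufficiency is phrased for a fixed file $f$ together with an unspecified \textsc{Decode}, so I must fix the interpretation as ``$g_B$ determines $f$ for every $f$'', which is the reading the paper adopts just before the statement. I also have to handle the case where $B$ contains nodes with $m_i=0$: their $\bot$ fragments contribute no columns and can be ignored. Once this interpretation is fixed, the rest is the standard rank–nullity fact for the row space of $G_B$.
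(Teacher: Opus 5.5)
Your proof is correct and follows the same route as the paper: for one direction you pick $k$ linearly independent columns of $G_B$ and invert the resulting $k\times k$ submatrix, and for the other you take a nonzero $x$ with $xG_B=0$, so that $f$ and $f+x$ give identical fragments on $B$. Your case split on $\op{rank}(G_B)$ is actually cleaner than the paper's split on whether $G_B$ has at least $k$ columns. Having at least $k$ columns does not by itself give $\op{rank}(G_B)=k$, and the paper's split leaves the case of at least $k$ columns with $\op{rank}(G_B)<k$ unhandled.
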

\begin{proof}
    Assume $G_B \in \mathbb{F}_q^{k\times l}$  where $l\geq k$. Since $\op{rank}(G)=k$ it follows that $\op{rank}(G_B)=k$. Let $\hat{G}_B$ be a matrix containing $k$ columns of $G_B$ that are linearly independent. Then $\hat{G}_B\in \mathbb{F}_q^{k\times k}$ is invertible and thus there exists $U\in \mathbb{F}_q^{k\times k}$ such that $\hat{G}_B\cdot U = I_k$, where $I_k$ is the $k\times k$ identity matrix. Then we have $$(f\cdot \hat{G}_B)\cdot U = f\cdot I_k=f.$$  If $G_B\in \mathbb{F}_q^{k\times l}$, with $l<k$, then $\op{rank}(G_B)=l$. Therefore, there exist $x_1,\ldots,x_k\in \mathbb{F}_q$ not all equal to $0$ such that $x_1g_1+x_2g_2+\ldots +x_kg_k=0$ where $g_1,g_2,\ldots,g_k$ are the rows of $G_B$.  Let $x=(x_1,\ldots,x_k)\in \mathbb{F}_q^k$. Then, $x+f=(x_1+f_1,\ldots,x_k+f_k)\in \mathbb{F}_q^k$ and $(x+f) \cdot G_B=(x_1+f_1)g_1+\ldots+(x_k+f_k)g_k=x_1g_1+\ldots+x_kg_k+ f_1g_1+\ldots+f_kg_k = f_1g_1+\ldots+f_kg_k=f \cdot G_B$. Therefore, $f \cdot G_B$ does not uniquely determine $f$.
\end{proof}

To store an information vector $f\in \mathbb{F}_q^k$ among the nodes, we are interested in the efficiency of a linear monotone erasure code $\mathcal{C}$, in the sense of how much more data must be stored compared to $f$. As previously mentioned, this can be measured using the overhead $\beta$ of $\mathcal{C}$. In this case, it holds that $\kappa=k \cdot \lceil log_2(q) \rceil$ and $\mu= m \cdot \lceil log_2(q) \rceil$, thus $\beta= \frac{m-k}{k}$.

If $k$ is close to $m$, then the overhead $\beta$ is small. 
This is equivalent to adding only little redundancy, since $m-k$ is small.
Therefore, to efficiently store a file $f$, the goal is to find the optimal parameters of a linear monotone erasure code $\mathcal{C}$, i.e., $k$ and $m$ of $\mathcal{C}$ that minimize the overhead. In this way, for a given access structure $\mathcal{A}$ and optimal parameters $k$ and $m$, we represent $f$ as an information vector of length $k$ over $\mathbb{F}_q$.  
 
For some specific access structures, the optimal parameters $k$ and $m$ can be found easily. In particular, if $\mathcal{A}$ is an access structure such that each set $A\in \mathcal{A}$ consists of exactly $w$ nodes, for some $w \in [n]$, then one can find the optimal parameters as follows, which leads us to the definition of a \textit{linear threshold erasure code}. This is well-known in the literature on \textit{information dispersal} \cite{DBLP:journals/jacm/Rabin89, DBLP:conf/srds/CachinT05} where a data bit string is encoded into $n$ pieces and stored among $n$ nodes such that each node stores exactly one piece. 

\begin{definition}[Linear threshold erasure code]
    Let $\mathcal{P}$ be a set of $n$ nodes and let $\mathcal{A}$ be an access structure on $\mathcal{P}$ such that each $A$ in $\mathcal{A}$ consists of exactly $w$ nodes, i.e., $\mathcal{A}=\{ A \text{ } \big| \text{ } |A|=w\}$. 
    An $[m,k]$-\textit{linear threshold erasure code $\mathcal{C}$ 
    for $\mathcal{A}$} is an $[m,k]$-linear monotone erasure code over $\mathbb{F}_q$ for $\mathcal{A}$, with parameters $k=w$ and $m=n$, and the labeling function 
        \begin{align*}
        \phi(i)= p_i \text{ for } i\in [n].
        \end{align*}
\end{definition}

In a linear threshold erasure code, the fragment $g_i$ of each node $p_i$ is made of exactly one field element, which we call \textit{threshold fragment}.
Moreover, the nodes of each set $A\in \mathcal{A}$ have together $k$ different columns of $G$. This means that the overhead is $\frac{n-k}{k}$, which is the best we can achieve for this specific access structure. The reason is that the nodes do not store more information than is required to reconstruct $f$.

We now consider a particular and well-known  generator matrix $G$ for an $[m,k]$-linear threshold erasure code $\mathcal{C}$. Let $q\geq m \geq k$ be integers and let $\alpha = (\alpha_1,\ldots,\alpha_m)\in \mathbb{F}_q^m$ where $\alpha_i \neq \alpha_j$ for all $i\neq j \in [m]$. Then, 
    \begin{align*}
        G=\begin{pmatrix}
            1 & 1 & \ldots & 1\\
            \alpha_1 & \alpha_2 &  \ldots & \alpha_m\\
            \vdots & & &\vdots \\
            \alpha_1^{k-1} & \alpha_2^{k-1} & \ldots & \alpha_m^{k-1}
        \end{pmatrix}\in \mathbb{F}_q^{k\times m}.
    \end{align*}

The matrix $G$ is called \textit{Vandermonde matrix} and generates a \textit{maximum distance separable} (MDS) code, called \textit{Reed-Solomon code}.

\begin{definition}[MDS code]
    A \textit{linear code} $\mathcal{C}$ of length $m$ and dimension $k$ is called  \textit{maximum distance separable} (MDS) if $d=m-k+1$ where $d=\min\{d_H(x,y) \text{ }| \text{ } x, y\in \mathcal{C}, x\neq y\}$ with    
    $d_H(x,y)=|\{i\in [n] \text{ } | \text{ } x_i
    \neq y_i\}|$ is called \textit{minimum distance} of $\mathcal{C}$.
\end{definition}

We refer to an MDS code with length $m$ and dimension $k$ as an $[m,k]$-MDS code.
Notice that any $[m,k]$-MDS code can be viewed as an $[m,k]$-linear threshold erasure code. Indeed, even though the labeling function is not part of the standard definition of an MDS code, it can be naturally added. In particular, one can define a set of $m$ virtual nodes $\mathcal{V}=\{V_1, \ldots, V_m\}$ and set $\phi(i)=V_i$, for all $i \in [m]$. We therefore consider MDS codes to be linear threshold erasure codes.

In particular, any generator matrix of an MDS code can be chosen for $G$. This offers an optimal trade-off between fault-tolerance and storage overhead, as it allows data to be reconstructed from any $k$ threshold fragments, derived from $k$ different columns of $G$. 

Indeed, for an MDS code it holds that any $k$ columns of its generator matrix $G$ are linearly independent. Therefore, if the nodes of a set $B\subseteq \CP$ have together at least $k$ different columns of $G$, then they are able to reconstruct $f$ by Theorem~\ref{Theorem: linear monotone erasure code sufficient property}.

\section{Efficient Construction of Linear Monotone Erasure Codes}
\label{sec:matrix_construction}

In this section, we present an efficient algorithm that constructs a linear monotone erasure code $\CC$ for any given access structure $\CA$ on $\mathcal{P}$. The encoding matrix of $\CC$ has a block-wise MDS property that ensures completeness of the code. In the general case, this construction does not deliver a linear monotone erasure code with minimal overhead.

We first present the intuition behind the code through its encoding algorithm. 
Given a file~$f$ and an access tree $T$ with a root node labeled $t$ that has $r$ children, the
algorithm proceeds as follows.
First, it encodes $f$ using a linear monotone erasure code that generates $r$ fragments and such that any $t$ of them can reconstruct~$f$. Then, it encodes each of the $r$ resulting fragments again. Specifically, for each child~$v_i$ of the root with label~$t_i$ and $r_i$ children, the fragment~$g_i$ assigned to $v_i$ is encoded using a linear monotone erasure code that generates $r_i$ fragments and such that any $t_i$ of them are enough to reconstruct $g_i$. The encoding function proceeds in this way until it reaches the leaves. 
This method is inspired by secret-sharing constructions in the literature (e.g., \cite{cryptoeprint:2004/282}).

To build such a code, we use a recursive algorithm.
We now illustrate the functioning of the algorithm through an example before describing it in detail. Consider the tree $T$ in Figure~\ref{fig:access_tree_efficient_construction}, representing the access structure $$\mathcal{A}=\Theta_2^3(\Theta_3^3(p_1,p_2,p_4),\Theta_2^3(p_3,p_4,p_5),\Theta_1^3(p_6,p_7,p_8)).$$

    \begin{figure}[h]
    \centering
    \scalebox{0.9}{\begin{tikzpicture}[grow=down, level distance=2cm]
    \tikzset{
    circ/.style = {
        draw,
        circle,
        inner sep = 1pt}}
        
\node[circ]{$2$} %
  child [sibling distance=5cm]{node[circ]{$3$}
                   child [sibling distance=1cm]{node[circ]{$p_{1}$}}
            child [sibling distance=1cm]{node[circ]{$p_{2}$}}
           child [sibling distance=1cm]{node[circ]{$p_{4}$}} }
  child [sibling distance=5cm]{node[circ]{$2$}
            child [sibling distance=1cm]{node[circ]{$p_{3}$}}
            child [sibling distance=1cm]{node[circ]{$p_{4}$}}
            child [sibling distance=1cm]{node[circ]{$p_{5}$}}}
            child [sibling distance=5cm]{node[circ]{$1$}
            child [sibling distance=1cm]{node[circ]{$p_{6}$}}
            child [sibling distance=1cm]{node[circ]{$p_{7}$}}
            child [sibling distance=1cm]{node[circ]{$p_{8}$}}}
 ;
\end{tikzpicture}}
\caption{Access tree $T$ of $\mathcal{A}$}\label{fig:access_tree_efficient_construction}

\end{figure}

Let $T_i$ be depth-one subtrees of $T$ for $i\in [3]$ with root labeled $t_i$ with $r_i$ children. We aim to construct an encoding matrix $M\in \mathbb{F}_q^{k\times m}$ for a sufficient large $q$ and a labeling function $\phi$ for $\mathcal{A}$ such that the nodes of each access set yield at least $k$ linearly independent columns of $M$. 
To achieve this, the matrix construction guarantees the following. Each node in $T_i$ yields $k/(2t_i)$ linearly independent columns, so that each subtree $T_i$ collectively contributes a set of $k/2$ linearly independent columns. Moreover, the union of columns of any two such sets remains linearly independent. We pick the order of the field $q$ as a prime power greater than or equal to the following value: 
$$ \max\bigl \{2,\; \max_{v_i, \; t_i \neq 1} \{r_i \}\bigr\}= 3$$
where each $v_i$ is a vertex in $T$ labeled $t_i$ with $r_i$ children. 

To construct a matrix $M$ for $\mathcal{A}$ that satisfies the property above, we first associate to each $T_i$ a $t_i\times r_i$ Vandermonde matrix $M_i$ over $\mathbb{F}_q$. By construction, any $t_i$ columns of $M_i$ are linearly independent. Thus, assigning one column of $M_i$ to each node in $T_i$ ensures that any $t_i$ nodes contribute sufficiently many linearly independent columns. 

\begin{center}
	\begin{tikzpicture}
		\node (M1) at (0,0) {$M_1 = \begin{pmatrix}
		1 & 1 & 1\\
		0 & 1 & 2\\
		0 & 1 & 1
	\end{pmatrix}$};
	
	\draw[thick, ->] (-0.05,-0.75) -- (-0.05,-1.25) node[below] {$p_1$};
	\draw[thick, ->] (0.45,-0.75) -- (0.45,-1.25) node[below] {$p_2$};
	\draw[thick, ->] (0.95,-0.75) -- (0.95,-1.25) node[below] {$p_4$};
	
	\node (M2) at (4,0) {$M_2 = \begin{pmatrix}
		1 & 1 & 1\\
		0 & 1 & 2
	\end{pmatrix}$};
	
	\draw[thick, ->] (3.95,-0.5) -- (3.95,-1.25) node[below] {$p_3$};
	\draw[thick, ->] (4.45,-0.5) -- (4.45,-1.25) node[below] {$p_4$};
	\draw[thick, ->] (4.95,-0.5) -- (4.95,-1.25) node[below] {$p_5$};
	
	\node (M3) at (8,0) {$M_3 = \begin{pmatrix}
		1 & 1 & 1
	\end{pmatrix}$};
	
	\draw[thick, ->] (7.95,-0.2) -- (7.95,-1.25) node[below] {$p_6$};
	\draw[thick, ->] (8.45,-0.2) -- (8.45,-1.25) node[below] {$p_7$};
	\draw[thick, ->] (8.95,-0.2) -- (8.95,-1.25) node[below] {$p_8$};	
	\end{tikzpicture}
\end{center}

To combine $M_1, M_2$ and $M_3$ so that the resulting matrix $M$ satisfies the desired property, we have to lift the matrices $M_i$ to a common number of rows (which we set as the least common multiple). We achieve this by computing the Kronecker product between the matrix $M_i$ and an appropriate identity matrix, resulting into the following matrices $R_i$ with corresponding labeling of the columns:

\begin{center}
	\begin{tikzpicture}
		\node (R1) at (0,0) {$R_1 = M_1 \otimes I_2 = \begin{pmatrix}
		I_2 & I_2 & I_2\\
		\textbf{0} & I_2 & 2I_2\\
		\textbf{0} & I_2 & I_2
	\end{pmatrix}$,};
	
	\draw[thick, ->] (0.6,-0.75) -- (0.6,-1.25) node[below] {$p_1$};
	\draw[thick, ->] (1.2,-0.75) -- (1.2,-1.25) node[below] {$p_2$};
	\draw[thick, ->] (1.8,-0.75) -- (1.8,-1.25) node[below] {$p_4$};
	
	\node (R2) at (5,0) {$R_2 = M_2 \otimes I_3 = \begin{pmatrix}
		I_3 & I_3 & I_3\\
		\textbf{0} & I_3 & 2I_3
	\end{pmatrix}$,};
	
	\draw[thick, ->] (5.6,-0.5) -- (5.6,-1.25) node[below] {$p_3$};
	\draw[thick, ->] (6.2,-0.5) -- (6.2,-1.25) node[below] {$p_4$};
	\draw[thick, ->] (6.8,-0.5) -- (6.8,-1.25) node[below] {$p_5$};
	
	\node (R3) at (10,0) {$R_3 = M_3 \otimes I_6 = \begin{pmatrix}
		I_6 & I_6 & I_6
	\end{pmatrix}$};
	
	\draw[thick, ->] (10.6,-0.2) -- (10.6,-1.25) node[below] {$p_6$};
	\draw[thick, ->] (11.25,-0.2) -- (11.25,-1.25) node[below] {$p_7$};
	\draw[thick, ->] (11.9,-0.2) -- (11.9,-1.25) node[below] {$p_8$};	
	\end{tikzpicture}
\end{center}
By construction of $R_i$, any $t_i$ nodes contribute a set of linearly independent columns of equal size, as they are derived from linearly independent columns of $M_i$ via the Kronecker product.

To construct the final matrix $M$, we embed the matrices $R_i$ for $i\in[3]$ into the auxiliary matrix $\tilde{M}$, a $2\times 3$ Vandermonde matrix over $\mathbb{F}_q$:

\begin{center}
	\begin{tikzpicture}
		\node (MT) at (0,0) {$\Tilde{M} = \begin{pmatrix}
		1 & 1 & 1\\
		0 & 1 & 2
	\end{pmatrix}$};
	
	\draw[thick, ->] (-0.05,-0.5) -- (-0.05,-1.25) node[below] {$T_1$};
	\draw[thick, ->] (0.45,-0.5) -- (0.45,-1.25) node[below] {$T_2$};
	\draw[thick, ->] (0.95,-0.5) -- (0.95,-1.25) node[below] {$T_3$};	
	\end{tikzpicture}
\end{center}
Each column of $\tilde{M}$ corresponds to a subtree $T_i$ of $T$. We finally obtain $M$ by substituting each entry of $\tilde{M}$ by its Kronecker product with the corresponding matrix $R_i$. This results in the following block-structured matrix

\begin{center}
	\begin{tikzpicture}
		\node (M) at (0,0) {$M = \begin{pmatrix}
		R_1 & R_2 & R_3\\
		\textbf{0} & R_2 & 2R_3
	\end{pmatrix}$};
	\end{tikzpicture}
\end{center}

Consequently, each subtree $T_i$ contributes a set of $k/2$ columns, and the sets of columns of any two subtrees $T_i$ and $T_j$ are jointly linearly independent, ensuring that $M$ possesses the desired property.

\subsection{Algorithm}~\label{alg:matrix_construction} 
Given the access tree $T$ of $\CA$,
we set the field size $q$ as the smallest prime power such that $$q \geq \max\bigl\{ 2,\; \max_{v_i, \; t_i \neq 1} \{r_i  \}\bigr\} $$ where
$v_i$ is a vertex in $T$ labeled $t_i$ with $r_i$ children. 
Then, we use a recursive algorithm that takes the tuple \( (T, q) \) as input and returns the tuple \((M_T,\nu_T,k_T, L_T)\) defining $\CC$, where:

\begin{itemize}
	\item $M_T$ is a $k_T\times \nu_T$-matrix over $\mathbb{F}_{q}$ 
	\item $L_T$ is a vector defining the labeling function $\phi_T$ of $\CC$, i.e.,
	\begin{align*}
		\phi_T:[\nu_T]&\rightarrow \mathcal{P}\\
		i&\mapsto L_T(i)
	\end{align*}
\end{itemize}
\noindent
The algorithm works as follows. Note that all operations are done in $\BF_q$. 
\begin{enumerate}
	\item If $T$ has depth zero, i.e., it consists of a single vertex $p_i$, then let $(\nu_T,k_T)=(1,1)$, set 
	 \begin{align*}
				M_T=\left(\begin{array}{c}
			1 
			\end{array}
			\right)_{1\times 1} \qquad \text{ and } \qquad  L_T=(p_i)
			\end{align*}
			 and return the tuple $(M_T,\nu_T,k_T,L_T)$.
	\item If $T$ has depth one or more and consists of a root labeled $t$ with $r$ subtrees $T_1,\ldots,T_r$:
	\begin{enumerate}
		\item Apply the algorithm to each subtree $T_a$ for $a\in [r]$ to obtain the tuple $(M_a,\nu_a,k_a,L_a)$. 
		\item Let $\lambda = lcm(\{k_a \: | \: a\in [r]\})$. Then set for each $a\in [r]$
		\begin{align*}
			\alpha_a=\lambda/k_a \qquad \text{ and } \qquad R_a=M_a\otimes I_{\alpha_a}
		\end{align*}
			where $I_{\alpha_a}$ is the identity matrix of size $\alpha_a$ and $\otimes$ denotes the Kronecker product.
			Moreover, let 
			\begin{align*}
				L'_a=L_a\otimes \textbf{1}_{\alpha_a} 
			\end{align*}
			where $\textbf{1}_{\alpha_a}$ is the all-one row vector of length $\alpha_a$ and $L'_a$ is therefore the vector obtained by replicating each entry of $L_a$ $\alpha_a$ times.
		 	\item Take a $t\times r$ Vandermonde matrix $A$ over $\BF_q$.

Moreover, let 
			\begin{align*}
				M_T=(A\otimes I_\lambda)diag(R_1,\ldots,R_r)
			\end{align*}
			where $I_\lambda$ is the identity matrix of size $\lambda$ and 
			\begin{align*}
				L_T=concat(L'_a) \text{ for all } a\in [r]
			\end{align*}
			with \emph{concat} denoting the concatenation of all vectors $L_a$ for $a\in[r]$.
Finally, let 
			\begin{align*}
				k_T= \lambda t \qquad \text{ and } \qquad   \nu_T= \sum_{a=1}^r \alpha_a \nu_a
			\end{align*}        

		Note that $k_T$ is the number of rows and $\nu_T$ is the number of columns of $M_T$.	
	\end{enumerate}
\end{enumerate}
\noindent
This algorithm has polynomial complexity in the size of the access tree. 
Note that the field size $q$ needs to be greater or equal than each $r_i$ to allow constructing a $t_i \times r_i$ Vandermonde matrix at every step.
Since each $r_i$ is smaller or equal than $n$, we could also choose the field size as a prime power $q \geq n$. In the general case, this leads to a larger field size, but would not require to visit the tree to compute $q$.

\subsection{Analysis}

\paragraph{Completeness of the resulting linear monotone erasure code.} In the following, we prove that Algorithm~\ref{alg:matrix_construction} provides a linear monotone erasure code for an access structure $\CA$ given as a tree $T$ that is complete. We first state two preliminary lemmas, and then prove the statement in Theorem~\ref{theo: completeness efficient construction}.

\begin{lemma}\label{Lemma:completness_of_construction_1} Let \( M \in \mathbb{F}_q^{k \times \nu} \) and let \( \Gamma \subseteq [\nu] \) be an index set such that the columns of \( M \) indexed by \( \Gamma \) are linearly independent. For any \( s \in \mathbb{N}_{>0} \), define \( R = M \otimes I_s \), where $I_s$ is the identity matrix of size $s$. Then the columns of \( R \) corresponding to \( \Gamma \) are linearly independent.
\end{lemma}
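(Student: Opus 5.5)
Our plan is to make precise what ``the columns of $R$ corresponding to $\Gamma$'' means and then argue directly from the block structure of the Kronecker product. By definition, $R = M\otimes I_s$ is the $ks\times \nu s$ block matrix whose $(i,j)$ block is $M_{ij} I_s$. Column $j$ of $M$ therefore gives rise to the $s$ consecutive columns $(j-1)s+1,\ldots,js$ of $R$. Concretely, the $\ell$-th of these columns, for $\ell\in[s]$, is $M_j\otimes e_\ell$, where $M_j$ is the $j$-th column of $M$ and $e_\ell$ is the $\ell$-th standard basis vector of $\BF_q^s$. The columns corresponding to $\Gamma$ are thus the $s|\Gamma|$ vectors $M_j\otimes e_\ell$ with $j\in\Gamma$ and $\ell\in[s]$. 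This matches how the algorithm replicates labels via $L_a\otimes \textbf{1}_{\alpha_a}$.

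Next we take a vanishing linear combination
\[
\sum_{j\in\Gamma}\sum_{\ell=1}^{s} c_{j,\ell}\,(M_j\otimes e_\ell)=0 .
\]
We group the terms by $\ell$ using bilinearity of $\otimes$, which gives $\sum_{\ell=1}^s \bigl(\sum_{j\in\Gamma} c_{j,\ell} M_j\bigr)\otimes e_\ell = 0$. A vector of the form $\sum_\ell x_\ell\otimes e_\ell$ with $x_\ell\in\BF_q^k$ has coordinate $(i-1)s+\ell$ equal to the $i$-th entry of $x_\ell$. Its vanishing therefore forces every $x_\ell=0$.

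Hence, for each fixed $\ell$, we get $\sum_{j\in\Gamma} c_{j,\ell}M_j=0$. The linear independence of $\{M_j\}_{j\in\Gamma}$ then yields $c_{j,\ell}=0$ for all $j$ and $\ell$, which proves the claim.

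The only step that needs any care is the coordinate bookkeeping: verifying that $M_j\otimes e_\ell$ is exactly the correct column of $R$, and that the $\ell$-slices separate. As an alternative, we could note that $M\otimes I_s$ agrees with $I_s\otimes M$ up to a simultaneous permutation of rows and columns (a perfect shuffle). That matrix is block diagonal with $s$ copies of $M$, so its $\Gamma$-columns are $s$ disjoint copies of independent sets and are clearly independent. Permuting rows and columns does not affect linear independence, so either argument suffices.
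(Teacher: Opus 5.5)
Your proof is correct and follows essentially the same route as the paper's: both write the $\Gamma$-columns of $R$ as $M_j\otimes e_\ell$, take a vanishing linear combination, and conclude using bilinearity of $\otimes$ together with the independence of the $M_j$. The only difference is the grouping of terms. You group by the $e_\ell$ index, so splitting the sum into the separate conditions $\sum_{j\in\Gamma} c_{j,\ell} M_j=0$ is an explicit coordinate check. The paper instead groups by the column of $M$, writing the sum as $\sum_\ell a_{i_\ell}\otimes\omega_\ell=0$, and leaves implicit why this forces every $\omega_\ell=0$.
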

\begin{proof}
	Note that by definition of $R$, a column $a_i$ of $M$ corresponds to $s$ columns of $R$. These columns have the form $a_{i}\otimes e_j$, where $e_j$ is the unit column vector with $1$ at position $j$, for all $j\in [s]$. Now, let $a_{i_1},\ldots,a_{i_\gamma}$ be $\gamma$ columns of $M$ that are linearly independent. 
	Assume that 
	\begin{align*}
		\sum_{\ell=1}^\gamma \sum_{j=1}^{s} c_{\ell,j}(a_{i_\ell}\otimes e_j)=0
	\end{align*}
	where $ c_{\ell,j}\in \BF_q$.
	Note that this can be rewritten as 
	\begin{align*}
		\sum_{\ell=1}^\gamma  a_{i_\ell}\otimes \omega_\ell=0
	\end{align*}
	where $\omega_\ell= \sum_{j=1}^{s} c_{\ell,j} e_j $.
	Since $a_{i_\ell}$ for $\ell \in [\gamma]$ are linearly independent, it follows that $\omega_\ell=0$ for all $\ell\in [\gamma]$ which holds if and only if $c_{\ell,j}=0$. Thus, the statement follows.\\
\end{proof}
\begin{lemma}\label{Lemma:completness_of_construction_2} 
	Let $A\in \BF_q^{t\times t}$ and $R_i\in \BF_q^{s\times s}$ for $i\in [t]$ be invertible. Then the matrix $$M=(A\otimes I_s)diag(R_1,\ldots,R_t)\in \BF_q^{ts\times ts}$$ where $I_s$ is the identity matrix of size $s$ is invertible.
\end{lemma}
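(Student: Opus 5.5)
The plan is to write $M$ as a product of two square matrices and show each factor is invertible. The product of two invertible $ts\times ts$ matrices is invertible, so this finishes the proof.

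First, $\mathrm{diag}(R_1,\ldots,R_t)$ is block diagonal with invertible diagonal blocks. Its inverse is $\mathrm{diag}(R_1^{-1},\ldots,R_t^{-1})$, which can be checked by multiplying blockwise. Equivalently, its determinant is $\prod_{i=1}^t \det(R_i)\neq 0$.

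Second, I would show $A\otimes I_s$ is invertible using the mixed-product property of the Kronecker product, $(X\otimes Y)(Z\otimes W)=(XZ)\otimes(YW)$. With it,
\[
(A\otimes I_s)(A^{-1}\otimes I_s)=(AA^{-1})\otimes(I_sI_s)=I_t\otimes I_s=I_{ts},
\]
and the same holds in the other order. So $(A\otimes I_s)^{-1}=A^{-1}\otimes I_s$. Alternatively, one could cite $\det(A\otimes I_s)=\det(A)^s\neq 0$.

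The only step that needs any care is the mixed-product identity. If I wanted the argument self-contained, I would verify it blockwise. The $(i,j)$ block of $(A\otimes I_s)(A^{-1}\otimes I_s)$ is $\sum_\ell a_{i\ell}(A^{-1})_{\ell j} I_s=\delta_{ij}I_s$, so the product is the identity.

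Combining the two steps, $M$ is invertible, with
\[
M^{-1}=\mathrm{diag}(R_1^{-1},\ldots,R_t^{-1})\,(A^{-1}\otimes I_s).
\]

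In the paper's setting, $A$ will be a $t\times t$ submatrix of a Vandermonde matrix. It is invertible because the chosen evaluation points are distinct.
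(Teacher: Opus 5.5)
Your proof is correct and follows the paper's route: both factor $M$ as $(A\otimes I_s)\cdot\mathrm{diag}(R_1,\ldots,R_t)$ and note that the block-diagonal factor is invertible because each $R_i$ is. The one difference is how you justify that $A\otimes I_s$ is invertible: the paper cites its Lemma~\ref{Lemma:completness_of_construction_1} (taking the Kronecker product with $I_s$ preserves the linear independence of the columns of $A$), whereas you exhibit the inverse $A^{-1}\otimes I_s$ via the mixed-product identity, which keeps the argument self-contained and gives $M^{-1}$ explicitly.
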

\begin{proof}
	Since each $R_i$ is invertible, the block diagonal matrix $$diag({R}_{1},\ldots,{R}_{t})\in \BF_q^{ts
	\times ts}$$ is invertible. Thus $M$ is invertible if and only if $$A\otimes I_s\in \BF_q^{ts \times ts}$$ is invertible. Since the matrix $A\otimes I_s$ is invertible by Lemma~\ref{Lemma:completness_of_construction_1}, the statement follows.
\end{proof}

	\begin{theorem}\label{theo: completeness efficient construction}
		For a given access structure $\mathcal{A}$ represented as an access tree $T$, Algorithm~\ref{alg:matrix_construction} provides a monotone erasure code $\CC$ defined by $M_T$ and $\phi_T$ that is complete.
	\end{theorem}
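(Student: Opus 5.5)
We prove by structural induction on the access tree $T$ a slightly stronger statement. For every access set $A$ of the access structure described by $T$, the columns of $M_T$ whose labels under $\phi_T$ lie in $A$ contain $k_T$ linearly independent columns. By Theorem~\ref{Theorem: linear monotone erasure code sufficient property}, $A$ is then sufficient, which is completeness. More precisely, the inductive invariant will be: if $S\subseteq\CP$ satisfies the MBF of $T$, then there is a set $\Gamma_S$ of exactly $k_T$ column indices of $M_T$, all labeled by nodes in $S$, such that the submatrix $M_T[\Gamma_S]$ is a $k_T\times k_T$ invertible matrix. Requiring exactly $k_T$ columns is what lets us use Lemma~\ref{Lemma:completness_of_construction_2}, which is stated for square blocks. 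For the theorem itself we also need $M_T$ to have full rank $k_T$; this follows from the invariant applied to $S=\CP$, since the root formula is satisfied by $\CP$ (labels are at most the number of children).

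The base case, where $T$ is a single leaf $p_i$, is immediate: $M_T=(1)$ and the single column is labeled $p_i\in S$. For the inductive step, let the root have label $t$ and subtrees $T_1,\dots,T_r$, and let $S$ satisfy $T$. Then at least $t$ subtrees, say with indices $a_1<\dots<a_t$, are satisfied by $S$. By induction each $M_{a_j}$ has an invertible $k_{a_j}\times k_{a_j}$ column submatrix $M_{a_j}[\Gamma_j]$ with labels in $S$. Passing to $R_{a_j}=M_{a_j}\otimes I_{\alpha_{a_j}}$, the columns arising from $\Gamma_j$ form $M_{a_j}[\Gamma_j]\otimes I_{\alpha_{a_j}}$. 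This is a $\lambda\times\lambda$ matrix, independent by Lemma~\ref{Lemma:completness_of_construction_1} (equivalently, a Kronecker product of invertible matrices), and its labels are the replicated labels, which are still in $S$ by the definition of $L'_a$.

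We then read off the corresponding columns of $M_T=(A\otimes I_\lambda)\,\mathrm{diag}(R_1,\dots,R_r)$. The column block belonging to subtree $a$ is $(A_{\cdot,a}\otimes I_\lambda)R_a$, where $A_{\cdot,a}$ is the $a$-th column of the Vandermonde matrix $A$. Selecting in block $a_j$ the columns coming from $\Gamma_j$ gives
\[
(A'\otimes I_\lambda)\,\mathrm{diag}\bigl(M_{a_1}[\Gamma_1]\otimes I_{\alpha_{a_1}},\dots,M_{a_t}[\Gamma_t]\otimes I_{\alpha_{a_t}}\bigr),
\]
where $A'$ is the $t\times t$ submatrix of $A$ formed by columns $a_1,\dots,a_t$. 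The main point to check carefully is this block identity: that restricting $\mathrm{diag}(R_1,\dots,R_r)$ to chosen columns and dropping the unselected blocks commutes with the product by $A\otimes I_\lambda$. It reduces to writing $(A\otimes I_\lambda)\,\mathrm{diag}(\cdot)$ blockwise as $(A_{ib}R_b)_{i,b}$. Since $A'$ is a square Vandermonde matrix with distinct nodes (requiring $q\ge r$, guaranteed by the choice of $q$; the case $t=1$ is trivial), it is invertible. Lemma~\ref{Lemma:completness_of_construction_2} then shows that the displayed $t\lambda\times t\lambda=k_T\times k_T$ matrix is invertible, which establishes the invariant for $T$.

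One subtlety is repeated nodes, such as $p_4$ in the example. A node may label columns in several subtrees, but the selected columns are distinct column indices of $M_T$, so the argument is unaffected. Nodes shared across blocks only add columns; they never remove any.
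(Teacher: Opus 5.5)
Your proof is correct and follows the same bottom-up strategy as the paper's proof: Lemma~\ref{Lemma:completness_of_construction_1} lifts independent columns through the Kronecker product, and Lemma~\ref{Lemma:completness_of_construction_2} combines the $t$ satisfied subtree blocks through the Vandermonde layer. Your explicit invariant (exactly $k_T$ selected columns forming an invertible square submatrix) makes rigorous what the paper only sketches, and it also handles unbalanced trees, nodes shared between subtrees, and the full-rank requirement on $M_T$, all of which the paper leaves implicit.
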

	\begin{proof} 
	Let $V$ denote the set of vertices in $T$ whose children are leaves. Note that each vertex $v\in V$ correspond to a submatrix $P_v$ of $M_T$, i.e., $P_v$ consists of the columns of $M_T$ that are generated through vertex $v$. Moreover, for a vertex $v \in V$, let $C_v$ be the set of its leaf children. Let
\[
S_v \subseteq C_v
\]
be a subset of leaves of $v$ chosen to satisfy the threshold condition at $v$. Note that $S_v$ corresponds to a submatrix $P_{S_v}$ of $P_v$. 

Now, an access set $A\in \mathcal{A}$ corresponds to a selection of subsets 
of vertices $(S_{v'})_{v'\in V}$ where $S_{v'}\subseteq C_{v'}$. Therefore, to show that $\CC$ is complete it suffices to show that the number of columns of all submatrices $P_{S_{v'}}$ for $v'\in V$ are at least $k$ and that they are jointly linearly independent. 

We first show that these are exactly $k_T$ columns. Without loss of generality, assume that all leaves in $T$ have the same depth. By construction, for a subtree of depth one consisting of a root $v\in V$ labeled $t$ with $r$ children, the resulting matrix $M_v$ has $t$ rows and $r$ columns. Moreover, exactly one column of $M_v$ is assigned to a child of $v$. Thus, a subset of $t$ children of $v$ own together exactly $t$ different columns of $M_v$. Now, consider a subtree of depth two consisting of a root $\tilde{v}$ labeled $\tilde{t}$ with $\tilde{r}$ children $v_1,\ldots,v_{\tilde{r}}$. Moreover, let $t_i$ be the label of each child $v_i$ of $\tilde{v}$. Note that a column of $M_{v_i}$ corresponds to $\lambda$ columns in $R_{v_i}=M_{v_i}\otimes I_\lambda$. Thus, by construction, $\tilde{t}$ children of $\tilde{v}$ own together $\lambda \cdot \tilde{t}$ different columns, which is the number of rows of $M_{\tilde{v}}$. By continuing this procedure for subtrees with depth greater than two, it follows that the number of columns of all submatrices $P_{S_{v'}}$ are exactly $k_T$.

It remains to show that these $k_T$ columns are linearly independent. Note that each subtree of depth one consisting of a root $v'\in V$ labeled $t$ has the property that any $t$ columns are linearly independent. By the following bottom-up construction to yield the matrices $P_{S_{v'}}$ for $v'\in V$ and by Lemma~\ref{Lemma:completness_of_construction_1} and Lemma~\ref{Lemma:completness_of_construction_2} the statement follows.
	\end{proof}

	\paragraph{Size of resulting matrix $M_T$.} Let $T$ be the tree of $\mathcal{A}$ with root $v_\CR$ and let $V=\{v_1, \dots, v_\delta\}$ be the set of vertices whose children are leaves. Moreover, for each vertex $v_i\in V$, let $R_{v_i}$ denote the set of vertices on the unique path from $v_\CR$ to $v_i$. For a vertex $v$, we denote by $x_v$ the label (threshold) of $v$ and by $y_v$ the number of its children in $T$.

 	\begin{theorem}\label{theo:size_of_matrix}
		The resulting matrix $M_T$ by Algorithm~\ref{alg:matrix_construction} has $k$ rows and $\sum_{i=1}^\delta \psi_i \cdot k$  columns where $\psi_i=\frac{y_{v_i}}{\prod_{v\in R_{v_i}}x_v}$ for $i\in [\delta]$ and $k=lcm\left(den(\psi_j)_{j\in [\delta]}\right)$.
	\end{theorem}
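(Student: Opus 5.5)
The plan is structural induction on the access tree $T$, following the recursion of Algorithm~\ref{alg:matrix_construction}. I would establish two invariants for every subtree $T'$ processed by the algorithm. The first is that $\nu_{T'}/k_{T'} = \sum_{i} \psi_i^{(T')}$, where $\psi_i^{(T')}$ is defined like $\psi_i$ but with the path $R_{v_i}$ truncated to start at the root of $T'$. The second is that $k_{T'}$ equals the least common multiple of the path products $\prod_{v\in R^{(T')}_{v_i}} x_v$. Together these give the theorem: $\nu_T = k\sum_i\psi_i$ and $k = k_T$. As in the proof of Theorem~\ref{theo: completeness efficient construction}, I would first assume without loss of generality that all leaves have the same depth, so that $V$ consists exactly of the vertices whose children are all leaves.

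The base case is a depth-one tree whose root $v$ is labeled $t$ and has $r$ leaf children. Each leaf returns $(1,1,1,\cdot)$, so $\lambda = 1$ and $\alpha_a = 1$. Hence $k_T = t$ and $\nu_T = r$, which gives $\nu_T/k_T = r/t = y_v/x_v = \psi$. For the inductive step, let the root be labeled $t$ with subtrees $T_1,\dots,T_r$, and read off the algorithm:
\[
\frac{\nu_T}{k_T} = \frac{\sum_a \alpha_a \nu_a}{\lambda t} = \frac{1}{t}\sum_{a=1}^r \frac{\nu_a}{k_a},
\]
using $\alpha_a = \lambda/k_a$. By the induction hypothesis, $\nu_a/k_a = \sum_{i\in T_a}\psi_i^{(T_a)}$. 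Extending each path by the new root multiplies every denominator by $x_{\text{root}} = t$, so $\psi_i^{(T)} = \psi_i^{(T_a)}/t$, and the first invariant follows. For the second invariant, $k_T = t\cdot \operatorname{lcm}_a(k_a)$. By induction each $k_a$ is the lcm of the path products inside $T_a$, and multiplying all of these by $t$ commutes with taking the lcm. So $k_T = \operatorname{lcm}_i \prod_{v\in R_{v_i}} x_v$.

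The main obstacle is matching this with the statement's $k = \operatorname{lcm}(den(\psi_j))$. If $den$ means the reduced denominator, the claim can fail. For example, a single vertex with $t=2$ and $r=4$ gives $\psi = 2$ with reduced denominator $1$, while the algorithm produces $k_T = 2$. I would therefore read $den(\psi_i)$ as the formal denominator $\prod_{v\in R_{v_i}} x_v$, and state this reading explicitly. With that reading the induction above proves the theorem exactly; with reduced denominators one obtains only that $k$ is a multiple of the stated lcm.

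A secondary point is justifying the equal-depth assumption. A leaf attached directly to a non-bottom vertex behaves like a subtree with $k_a = \nu_a = 1$. I would handle this by inserting dummy unary vertices labeled $1$, which change neither the matrices nor the products $\prod x_v$.
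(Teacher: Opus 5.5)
Your proposal is correct and follows essentially the same route as the paper. The paper also assumes that all leaves have equal depth and shows, level by level, that the row count is the lcm of the root-to-$v_i$ label products and the column count is $k\sum_i\psi_i$; your ratio invariant $\nu_{T'}/k_{T'}=\sum_i\psi_i^{(T')}$ turns its depth-one/two/three computation followed by ``continuing this procedure'' into a genuine induction.

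Your reading of $den(\psi_i)$ as the unreduced product $\prod_{v\in R_{v_i}}x_v$ is exactly what the paper's proof concludes with. Your caveat is also warranted: already in the paper's running example the reduced denominators give $\operatorname{lcm}(2,4,2)=4$, whereas $M_T$ has $12$ rows.
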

	\begin{proof}

Without loss of generality, assume that all leaves in $T$ have the same depth. First note that for a subtree of depth one consisting of a root $v\in V$ labeled $x_v$ with $y_v$ children, Algorithm~\ref{alg:matrix_construction} returns a matrix with $x_{v}$ rows and $y_v$ columns. Moreover, for a subtree of depth two consisting of a root $v$ labeled $t$ with $r$ children $v_1,\ldots,v_r$, the number of rows of the matrix $\tilde{M}$ returned by the algorithm is $$k_v=t\cdot lcm(x_{v_1},\ldots,x_{v_r})= lcm(t\cdot x_{v_1},\ldots,t\cdot x_{v_r}).$$ This is the least common multiple of the products of vertex labels along the paths from $v_i$ to the root $v$, for all $i\in [r]$. 

Moreover, the number of columns of $\tilde{M}$ is $$\nu_v=\sum_{i=1}^r \frac{lcm(x_{v_1},\ldots,x_{v_r})}{x_{v_i}}y_{v_i}=\sum_{i=1}^r \frac{y_{v_i}}{t\cdot x_{v_i}}k_\nu.$$

Now, consider a subtree of depth three labeled $\tilde{t}$ consisting of a root $v$ with $\tilde{r}$ children $u_1,\ldots,u_{\tilde{r}}$. Assume each vertex $u_i$ is labeled $t_i$ and has $r_i$ children and assume that each child of $u_i$ for $i\in [\tilde{r}]$ is labeled with $x_j^i$ and has $y_j^i$ children for $j\in[r_i]$.
Following the same reasoning as for depth two, the number of rows of the resulting matrix by the algorithm is
\begin{align*}
	\tilde{t} \cdot lcm\left (k_{u_1},\ldots,k_{u_{\tilde{r}}}\right)
	&=\tilde{t}\cdot lcm \left( lcm \left(t_1\cdot x^{1}_{1},\ldots,t_1\cdot x^{1}_{{r_1}}\right),\ldots , lcm \left (t_{\tilde{r}}\cdot x^{\tilde{r}}_{1},\ldots,t_{\tilde{r}}\cdot x^{\tilde{r}}_{{r_{\tilde{r}}}} \right)    \right) \\
	&= lcm \left( \tilde{t} \cdot t_1\cdot x^{1}_{1},\ldots, \tilde{t} \cdot t_1\cdot x^{1}_{{r_1}},\ldots , \tilde{t} \cdot t_{\tilde{r}}\cdot x^{\tilde{r}}_{1},\ldots, \tilde{t} \cdot t_{\tilde{r}} \cdot x^{\tilde{r}}_{{r_{\tilde{r}}}}    \right)
\end{align*}

This is again the least common multiple of the products of vertex labels along the paths from a child of $u_i$ to the root $v$, for all $i\in [\tilde{r}]$, $j\in [r_i]$.

On the other hand, the number of columns of the resulting matrix is
\begin{align}\label{proof:size}
	\sum_{i=1}^{\tilde{r}} \frac{lcm\left(k_{u_1},\ldots,k_{u_{\tilde{r}}}\right)}{k_{u_i}}\cdot \nu_{u_i} &= \sum_{i=1}^{\tilde{r}}\frac{lcm \left( \tilde{t}\cdot t_1\cdot x^{1}_{1},\ldots, \tilde{t}\cdot t_1\cdot x^{1}_{{r_1}},\ldots , \tilde{t}\cdot  t_{\tilde{r}}\cdot x^{\tilde{r}}_{1},\ldots,  \tilde{t}\cdot  t_{\tilde{r}} \cdot x^{\tilde{r}}_{{r_{\tilde{r}}}}    \right)}{\tilde{t}\cdot  lcm \left(t_i\cdot x^{i}_{1},\ldots, t_i\cdot x^{i}_{{r_i}}\right)}\cdot \nu_{u_i}
\end{align}

Since $$\nu_{u_i}=\sum_{j=1}^{r_i} \frac{y_{j}^i}{t_i\cdot x_{j}^i}\cdot lcm(t_i\cdot x^i_{1},\ldots,t_i\cdot x^i_{{r_i}})$$

the sum~(\ref{proof:size}) reduces to
\begin{align*}
 \sum_{i=1}^{\tilde{r}}\left( \frac{y_1^i}{\tilde{t}\cdot  t_1\cdot x_1^i}+ \ldots +\frac{y_{r_i}^i}{\tilde{t}\cdot  t_1\cdot x_{r_i}^i}\right)\cdot lcm \left( \tilde{t}\cdot  t_1\cdot x^{1}_{1},\ldots,\tilde{t}\cdot  t_1\cdot x^{1}_{{r_1}},\ldots ,\tilde{t}\cdot   t_{\tilde{r}}\cdot x^{\tilde{r}}_{1},\ldots, \tilde{t}\cdot   t_{\tilde{r}} \cdot x^{\tilde{r}}_{{r_{\tilde{r}}}}    \right).
\end{align*}
Note that since $\delta=r_1+\ldots+r_{\tilde{r}}$, this sum is equivalent to $$\sum_{\ell=1}^\delta \frac{y_{v_\ell}}{\prod_{v\in R_{v_\ell}}x_v} \cdot k$$ where $v_\ell$ are the vertices labeled $x_{v_\ell}=x_j^i$ for $ i\in [\tilde{r}], j\in [r_i]$ and $k=lcm\left( \left (\prod_{v\in R_{v_\ell}}x_v\right)_{\ell\in [\delta]}\right)$.

By continuing this procedure for subtrees with depths greater than three the statement follows.
	\end{proof}

\paragraph{Overhead of resulting linear monotone erasure code.} The overhead of the code $\CC$ constructed by Algorithm~\ref{alg:matrix_construction} is given by the following proposition. 
	 
	  \begin{proposition}
	For a given access structure $\mathcal{A}$ represented as an access tree $T$, Algorithm~\ref{alg:matrix_construction} provides a monotone erasure code $\CC$ defined by $M_T$ and $\phi_T$ with overhead $\beta=\sum_{i=1}^\delta \psi_i -1$, where  $\psi_i=\frac{y_{v_i}}{\prod_{v\in R_{v_i}}x_v}$ for $i\in [\delta]$.
	\end{proposition}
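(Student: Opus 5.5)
This proposition follows directly from Theorem~\ref{theo:size_of_matrix} together with the formula for the overhead of a linear monotone erasure code. The plan is first to recall that an $[m,k]$-linear monotone erasure code has overhead $\beta=\frac{m-k}{k}$, as derived in Section~\ref{sec:linear} from $\kappa = k\lceil\log_2 q\rceil$ and $\mu = m\lceil\log_2 q\rceil$. For this we must check that the code produced by Algorithm~\ref{alg:matrix_construction} really is an $[m,k]$-linear monotone erasure code in the sense of the definition.

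That check has three parts. Completeness is Theorem~\ref{theo: completeness efficient construction}. The dimension $k$ equals the number of rows of $M_T$. The length $m$ equals the number of columns of $M_T$, since $\phi_T$ assigns every column to exactly one node. Full rank of $M_T$ also follows from completeness: any access set already yields $k_T$ linearly independent columns.

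Next I would invoke Theorem~\ref{theo:size_of_matrix}. It gives $k$ rows with $k=\operatorname{lcm}\bigl(den(\psi_j)_{j\in[\delta]}\bigr)$, and $m=\sum_{i=1}^\delta \psi_i\cdot k$ columns. Substituting into the overhead formula gives
\[
\beta=\frac{m-k}{k}=\frac{k\sum_{i=1}^\delta\psi_i - k}{k}=\sum_{i=1}^\delta \psi_i-1 .
\]
This is the claimed expression. Note that $k$ cancels, so the exact value of the lcm is irrelevant to the overhead; all that matters is that $m/k=\sum_i\psi_i$.

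The only point needing care is the normalization assumed in Theorem~\ref{theo:size_of_matrix}, namely that all leaves have the same depth. I would argue that this assumption is harmless in the same way the earlier proofs do. A leaf at smaller depth can be padded by inserting a chain of vertices labeled $1$ with a single child. Such a vertex contributes a factor $x_v=1$ to every product $\prod_{v\in R_{v_i}}x_v$. It also leaves the algorithm's output unchanged, since a $1\times 1$ Vandermonde matrix is $(1)$ and the Kronecker product with it is trivial. Hence both $m/k$ and each $\psi_i$ are invariant under padding, and the identity holds for the original tree as well.
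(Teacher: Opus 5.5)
Your proof is correct and follows the paper's: take the row count $k$ and the column count $m=k\sum_{i=1}^\delta\psi_i$ from Theorem~\ref{theo:size_of_matrix} and substitute them into $\beta=(m-k)/k$. Your extra remarks are sensible additions the paper leaves implicit: that only the ratio $m/k$ matters, so the exact value of $k$ is irrelevant, and the label-$1$ padding behind the equal-depth assumption. One caveat is that padding preserves each individual $\psi_i$ only if the chain is inserted above the parent $v_i$; in general it preserves only their sum, which is all you need.
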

	\begin{proof}
		By Theorem~\ref{theo:size_of_matrix} it follows that $M_T$ has $k$ rows and $m=\sum_{i=1}^\delta \psi_i \cdot k$  columns where $\psi_i=\frac{y_{v_i}}{\prod_{v\in R_{v_i}}x_v}$ for $i\in [\delta]$ and $k=lcm(den(\psi_j)_{j\in [\delta]})$. Therefore, by definition of the overhead it holds that
		
		$$\beta=\frac{m-k}{k}=\frac{\sum_{i=1}^\delta \psi_i \cdot k-k}{k}=\sum_{i=1}^\delta \psi_i-1$$
		which concludes the proof.
	\end{proof}

\section{Linear Monotone Erasure Codes from Threshold Erasure Codes}
\label{sec:construction_from_threshold} 

In this section, we present an algorithm that, given a set of nodes $\mathcal{P}=\{p_1,\ldots, p_n \}$ and an access structure \(\mathcal{A}\) on \(\mathcal{P}\), produces a linear monotone erasure code that is complete for \(\mathcal{A}\) and has an optimal overhead over all linear monotone erasure codes for \CA.

\subsection{Algorithm details}\label{ssec:algorithm_details}

Our scheme generates an $[m,k]$-linear monotone erasure code $\mathcal{C}$ for $\mathcal{A}$ using an $[m,k]$-linear threshold erasure code $\mathcal{C'}$ as a building block. We refer to $\mathcal{C'}$ as the \textit{base code} and  we call the fragments generated by it \textit{base fragments}. In particular, we choose an $[m,k]$-MDS code as a base code.

The setup phase of our scheme consists of two stages. During the first stage, illustrated in Section~\ref{Setting the parameters}, the scheme aims to find the optimal parameters $m$ and $k$ for $\mathcal{C}'$. Then, in the second stage, we use the base code to build our $[m,k]$-linear monotone erasure code. 
Indeed, we know that we can always build an $[m,k]$-MDS code for all optimal parameters $m$ and $k$, and in Section~\ref{Building the code} we show how to derive an $[m,k]$-linear monotone erasure code from it.

We suppose that the MDS code provides the following two functions:

\begin{itemize}
    \item \textsc{MDSEncode}($f$): given $f \in \mathbb{F}_q^k$, it returns a codeword $c \in \mathbb{F}_q^m$;
    \item \textsc{MDSDecode}($u$): given a vector $u$ of length $m$ whose entries are either a symbol or $\bot$, representing an erasure, it returns a vector $f' \in \mathbb{F}_q^k$ or $\bot$.%
\end{itemize}

Once the setup phase is completed, the resulting code can be used to encode any number of files.
The encoding and decoding procedures are described subsequently.

\subsubsection{Setting the parameters}\label{Setting the parameters}

The procedure receives as input a set of nodes $\mathcal{P}$, along with an access structure $\mathcal{A}=\{A_1,\ldots,A_\omega\}$ on $\mathcal{P}$. The goal of this phase is to compute the optimal values for the parameters $m$ and $k$ of the base code, and thus of the resulting linear monotone erasure code~$\mathcal{C}$. %
By optimal, we mean that they minimize the overhead $\beta = \frac{m-k}{k}$ of $\mathcal{C}$ over all parameters and labeling functions of linear monotone erasure codes for \CA.
To compute such parameters, we first solve the following linear programming problem to find an optimal solution $y$. Then, we derive from $y$ the threshold $k$ and the number of fragments $m_i$ assigned to each $p_i$. Finally, $m$ is obtained by summing all the $m_i$ values.

\textbf{LPP:} Find $y=(y_1,\ldots,y_n)\in \mathbb{R}^n$ such that 
\begin{align}\label{Converted LP problem}
    \min &\sum_{i \in [n]}y_i\notag \\
    \text{subject to } \Gamma \cdot y^\top &\geq 1_\omega^\top\tag{LPP} \\ 
    y &\geq 0 \notag
\end{align}

where $\Gamma$ is a binary $\omega \times n$ matrix such that $\Gamma_{ij}=1$ if and only if $p_j\in A_i$, and $1_\omega=(1,\ldots,1)\in \mathbb{N}^\omega$. 
Note that \ref{Converted LP problem} is solvable, as $y_i=1$ for every $i\in [n]$ is always a solution. Since \ref{Converted LP problem} has only rational coefficients, it follows that an optimal solution $y$ lies in $\mathbb{Q}^n$. 

Given such a solution of \ref{Converted LP problem}, we set $k$ as the lcm of the denominators of all the $y_i$'s, where each $y_i$ is expressed as a reduced fraction,
and set $m_i=y_i\cdot k$, for $i \in [n]$. We then compute $m = \sum_{i =1}^n m_i$, obtaining the optimal values for $m$ and $k$. Indeed, notice that $$\beta = \frac{\sum_{i \in [n]} y_i \cdot k - k }{k} = \sum_{i \in [n]} y_i-1 $$
so minimizing $ \sum_{i \in [n]}y_i$ is equivalent to minimizing $\beta$.
Moreover, the vectors $s=(m_1,\ldots,m_n)\in \mathbb{N}^n$ and $\kappa_\omega=(k,\ldots,k)\in \mathbb{N}^\omega$ satisfy the constraint $\Gamma \cdot s^\top \geq \kappa_\omega$. This corresponds to requiring that for all $A\in \mathcal{A}$ the nodes in $A$ hold at least $k$ base fragments together. This way, completeness is guaranteed since each access set has at least $k$ fragments, and thus, by Theorem~\ref{Theorem: linear monotone erasure code sufficient property}, each access set is sufficient.

Notice that there is an infinite number of solutions. Indeed, we can take $k$ as any multiple of the lcm of the denominators of all the $y_i$'s, and the corresponding $m_i$'s would still minimize $\beta+1$. 

\begin{example}
    Let $\mathcal{A}=\{\{a,c,d,e\}, \{b,c,d,e\}, \{a,b,c\},\{a,b,d\}, \{a,b,e\}\}$ be an access structure on $\mathcal{P}= \{a,b,c,d,e \}$. Then, we have
    \begin{align*}
    \Gamma = \begin{pmatrix}
        1 & 0 & 1 & 1 & 1 \\
        0 & 1 & 1 & 1 & 1 \\
        1 & 1 & 1 & 0 & 0 \\
        1 & 1 & 0 & 1 & 0 \\
        1 & 1 & 0 & 0 & 1 
    \end{pmatrix}
    \in \mathbb{F}_2^{5 \times 5}
\end{align*}
    and the optimal solution of the LP problem 
    \begin{align*}
    \min_y \sum_{i=1}^5 y_i\\
    \text{ subject to } \Gamma \cdot y^\top \geq 1^\top_5\\
     y \geq 0
\end{align*}
is $y=(y_1,\ldots,y_5)=(\frac{2}{5},\frac{2}{5},\frac{1}{5},\frac{1}{5},\frac{1}{5})\in \mathbb{Q}^5$. Setting $k=5$ gives $(m_1,\ldots,m_5)=(y_1\cdot k,\ldots,y_5\cdot k)=(2,2,1,1,1)$. This fragment distribution yields the optimal value for the overhead of the corresponding linear erasure code, which is $\frac{m-k}{k}=\frac{2}{5}$.  \hfill $\triangle$
\end{example}

\subsubsection{Building the code}\label{Building the code}

Once we have determined the optimal parameters $m$ and $k$ of $\mathcal{C}$ for $\mathcal{A}$, together with the values $\{m_i\}_{i \in [n]}$, we choose a prime power $q$ such that $q\geq m$. 
We then take an $[m,k]$-MDS code over  $\mathbb{F}_q$ and turn it into an $[m,k]$-linear monotone erasure code by building an appropriate labeling function $\phi$.
In particular, we partition the set $[m]$ into $n$ disjoint subsets $\Omega_1,\ldots,\Omega_n$ of cardinalities $m_1, \ldots, m_n$, where $\Omega_1$ contains the first $m_1$ indices, $\Omega_2$ the indices from $m_1+1$ to $m_1+m_2$, and so on.
Then for each $i \in [m]$, we set
$$\phi(i)= p_j \text{ iff. } i\in \Omega_j.$$

\textbf{Encoding.}
To encode a file $f$, we first give $f$ as input to the function \textsc{MDSEncode}$()$ of the base code, which is part of the given implementation. \textsc{MDSEncode}$(f)$ then returns the base fragments $(c_1,\ldots,c_m) \in \mathbb{F}_q^m$. Finally, we assign to each $p_i$ a list of base fragments $g_i = (c_{j_1}, \ldots, c_{j_{m_i}})$, where $p_i = \phi(j_\ell)$ for $\ell=1, \dots, m_i$. Details of the procedure are given in Algorithm~\ref{alg:encode-bb}.

Note that since we are using a linear MDS code, the base fragments are given by $(c_1,\ldots,c_m) = f \cdot G$, where $G\in \mathbb{F}_q^{k\times m}$ is the generator matrix of the MDS code. Moreover, the fragment assigned to $p_i$ is $g_i =  f \cdot G_{p_i}$. Finally, if $m_i=0$, then the fragment of $p_i$ is $\bot$.

\begin{algo*}
\caption{\textsc{Encode}$(f)$}\label{alg:encode-bb}
\raggedright
\textbf{Input:} file $f \in \mathbb{F}_q^k$, disjoint sets $\Omega_1,\ldots,\Omega_n$ such that $\bigcup_{i =1}^n \Omega_i = [m]$.\\
\textbf{Output:} fragments $(g_1,\ldots,g_n)$.

  \vbox{
  \small
  \begin{numbertabbing}
    xxxx\=xxxx\=xxxx\=xxxx\=xxxx\=xxxx\=MMMMMMMMMMMMMMMMMMM\=\kill
    $(c_1,\ldots,c_m) \xleftarrow[]{} \textsc{MDSEncode}(f)$ \label{}\\
    \textbf{for} $i \in [n]$ \textbf{do} \label{}\\
    \>  $g_i \xleftarrow[]{} [ \; ]$ \label{}\\
    \> \textbf{for} $j \in \Omega_i$ \textbf{do} \label{}\\
    \>\>  $\op{append}(g_i,  c_j)$ \label{}\\
    \> \textbf{if} $g_i = [ \;]$ \textbf{then}\label{}\\
    \>\> $g_i \gets \bot$ \label{}\\
    \textbf{return}  $(g_1, \dots, g_n)$ \label{}
     \end{numbertabbing}
  }
\end{algo*}

\textbf{Decoding.} The decoding procedure of $\mathcal{C}$ corresponds to the decoding procedure \textsc{MDSDecode} of the base code. %

\subsection{Bounds on linear monotone erasure codes} \label{Bounds}
In this section, we give bounds on the total number of base fragments $m$ and the overhead $\beta$ of any linear monotone erasure code $\mathcal{C}$ for $\CA$ built as in Section \ref{ssec:algorithm_details}. In the computation of such bounds for $\mathcal{A}$, a central role is played by the size of the smallest access set, i.e, $\min\{ |A|, A \in \mathcal A\}$, which we denote by $\tau$.

\begin{theorem}\label{Theorem: lower and upper bound on m}
    Let $\mathcal{P}$ be a set of nodes with $|\mathcal{P}|=n$, and let $\mathcal{A}$ be an access structure on \CP. If \CC is a linear monotone erasure code that encodes files of length $k$, then the optimal value for $m$ is bounded by
    $$ k
    \leq m\leq\left\lceil\frac{ k}{\tau}\right\rceil\cdot n.
    $$
\end{theorem}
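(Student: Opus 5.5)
The plan is to prove the two inequalities separately. The lower bound uses Theorem~\ref{Theorem: linear monotone erasure code sufficient property}. The upper bound exhibits an explicit feasible fragment distribution and then invokes optimality of $m$.

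\emph{Lower bound.} Let \CC be any linear monotone erasure code for \CA with dimension~$k$ and generator matrix $G\in\mathbb{F}_q^{k\times m}$. Assume \CA is nonempty and pick any $A\in\CA$. By completeness, $A$ is sufficient. Theorem~\ref{Theorem: linear monotone erasure code sufficient property} then gives $\op{rank}(G_A)=k$, so $G_A$ has at least $k$ columns. Since the columns of $G_A$ are a subset of the $m$ columns of $G$, we get $m\geq k$. This holds for every code, hence in particular for the optimal one. Alternatively, $G$ is full rank $k$ by definition, so it needs at least $k$ columns.

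\emph{Upper bound.} It suffices to exhibit one linear monotone erasure code for \CA with dimension $k$ and length at most $\lceil k/\tau\rceil\cdot n$; the optimal $m$ can only be smaller. The construction is the one of Section~\ref{Building the code}, with the uniform distribution $m_i=\lceil k/\tau\rceil$ for every $p_i\in\CP$. Then $m=\sum_i m_i=\lceil k/\tau\rceil\cdot n$.
\begin{itemize}
\item Take an $[m,k]$-MDS code over $\mathbb{F}_q$ with $q\geq m$, for instance a Reed--Solomon code with a $k\times m$ Vandermonde generator. This requires $m\geq k$, which holds because $\lceil k/\tau\rceil n\geq (k/\tau)n\geq k$, using $\tau\leq n$.
\item Assign the columns to nodes with the block labeling function $\phi$ of Section~\ref{Building the code}.
\item Every access set $A$ has $|A|\geq\tau$ by definition of $\tau$. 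Its members therefore hold together $|A|\cdot\lceil k/\tau\rceil\geq \tau\cdot k/\tau=k$ distinct columns of $G$.
\item Any $k$ columns of an MDS generator matrix are linearly independent, so $\op{rank}(G_A)=k$. Theorem~\ref{Theorem: linear monotone erasure code sufficient property} then shows that $A$ is sufficient.
\end{itemize}
So the code is complete, and the optimal $m$ is at most $\lceil k/\tau\rceil\cdot n$. Equivalently, in the language of \ref{Converted LP problem} scaled by~$k$, the vector $s=(\lceil k/\tau\rceil,\ldots,\lceil k/\tau\rceil)$ satisfies $\Gamma s^\top\geq\kappa_\omega$, because each row of $\Gamma$ contains at least $\tau$ ones.

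\emph{Main obstacle.} The only delicate point is interpretive: fixing what ``optimal value for $m$'' means for a fixed $k$. I take it as the minimum length over all complete linear monotone erasure codes for \CA of dimension $k$. With this reading the lower bound applies to every such code and the upper bound needs only a witness. I also make explicit the implicit assumptions that \CA is nonempty, that $\tau\geq 1$, and that the field can be chosen large enough for the MDS code to exist.
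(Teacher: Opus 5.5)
Your proposal is correct and follows the paper's proof. The lower bound comes from needing at least $k$ columns (fragments), and the upper bound comes from giving every node $\lceil k/\tau\rceil$ base fragments, so that each access set, having at least $\tau$ members, holds at least $k$ of them. Your version is more explicit than the paper's: it justifies sufficiency through the MDS property and Theorem~\ref{Theorem: linear monotone erasure code sufficient property}, and it checks $\lceil k/\tau\rceil n\ge k$ so that the MDS base code exists.
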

\begin{proof}
The lower bound is straightforward, as at least $k$ fragments must be stored to make decoding possible.
To prove the upper bound, assume that each node  $p_i\in \mathcal{P}$ receives exactly $\left\lceil\frac{ k}{\tau}\right\rceil$ base fragments. Then, for all $A\in \mathcal{A}$, i.e., $|A|=| \{p_{i_1}, \ldots, p_{i_s}\}|=s$, we have 
$$|g_{i_1}\cup \ldots \cup g_{i_s}|= s \cdot \left\lceil\frac{ k}{\tau} \right\rceil \geq
s\cdot
\frac{ k}{\tau}\geq k$$ 
since $s\geq \tau$. Thus, every set $A\in \mathcal{A}$ is able to reconstruct the information vector. Therefore, $\left\lceil\frac{ k}{\tau}\right\rceil\cdot n$ fragments are sufficient. Moreover, using a higher number of fragments would make the overhead worse, so we have \(m\leq \left\lceil\frac{ k}{\tau}\right\rceil\cdot n\).
\end{proof}

The upper bound of Theorem \ref{Theorem: lower and upper bound on m} means that any set $B\subseteq \mathcal{P}$ with $|B|\geq \tau$ is able to reconstruct the information vector $f$. 
In general, since there is a set $B'\in \mathcal{A}$ such that $|B'|=\tau$, there is at least one node $p\in B'$ that stores at least $\left \lceil\frac{k}{\tau}\right \rceil$ base fragments.
Moreover, we have the following direct consequence of Theorem \ref{Theorem: lower and upper bound on m} that highlights how efficient a linear monotone erasure code for a given access structure $\mathcal{A}$ can be.

\begin{corollary}\label{Corollary: Upper bound on overhead}
    The overhead $\beta$ of an $[m,k]$-linear monotone erasure code for a given access structure~$\mathcal{A}$ is bounded for large $k\in \mathbb{N}$ by
    $$\beta \leq \frac{n-\tau}{\tau}.$$
\end{corollary}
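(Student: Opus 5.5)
The plan is to derive the corollary directly from the upper bound in Theorem~\ref{Theorem: lower and upper bound on m}, choosing $k$ suitably. That theorem gives an optimal code with $m \leq \lceil k/\tau\rceil \cdot n$. Substituting into $\beta = \frac{m-k}{k}$ yields
\[
\beta \leq \frac{\lceil k/\tau\rceil \cdot n - k}{k} = \frac{\lceil k/\tau\rceil}{k}\, n - 1 .
\]

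Next, I will bound the ceiling. Writing $\lceil k/\tau\rceil \leq k/\tau + (\tau-1)/\tau$ gives
\[
\beta \leq \frac{n}{\tau} - 1 + \frac{n(\tau-1)}{\tau k} = \frac{n-\tau}{\tau} + \frac{n(\tau-1)}{\tau k}.
\]
The last term tends to $0$ as $k \to \infty$, so the bound holds asymptotically in $k$. This is how I read the phrase ``for large $k$''.

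To get the bound exactly, I will exploit the remark in Section~\ref{Setting the parameters} that $k$ may be replaced by any multiple without affecting optimality. Choose $k$ to be a multiple of $\tau$. Then $\lceil k/\tau\rceil = k/\tau$, and the theorem gives $m \leq nk/\tau$. Hence $\beta \leq \frac{nk/\tau - k}{k} = \frac{n-\tau}{\tau}$ exactly. Since the optimal overhead $\sum_i y_i - 1$ from \ref{Converted LP problem} does not depend on the chosen scaling, this bound transfers to the optimal code for every admissible $k$.

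The only delicate point is interpreting ``for large $k$''. I would present both readings: the exact bound for $k$ divisible by $\tau$, and the limiting bound above for general $k$. There is also a cleaner alternative that avoids $k$ altogether: the vector $y_i = 1/\tau$ for all $i$ is feasible for \ref{Converted LP problem}, because every access set has at least $\tau$ members. Its objective value is $n/\tau$, so the LP optimum satisfies $\beta + 1 \leq n/\tau$.
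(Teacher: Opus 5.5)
Your proposal is correct. Its first part is the paper's proof. The paper likewise substitutes $m \le \lceil k/\tau\rceil\, n$ into $\beta=(m-k)/k$, bounds the ceiling more crudely by $k/\tau+1$ (obtaining $\beta < n/k + (n-\tau)/\tau$), and lets $k\to\infty$.

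Your additions take a genuinely different route, and they give something the paper's argument does not. A limiting argument, yours or the paper's, only shows that for a given $k$ the best overhead exceeds $(n-\tau)/\tau$ by at most $n(\tau-1)/(\tau k)$. It does not give $\beta\le (n-\tau)/\tau$ for every large $k$, and that stronger reading is false. Take the $2$-out-of-$3$ structure ($n=3$, $\tau=2$). Summing the three constraints $m_i+m_j\ge k$ gives $2m\ge 3k$, so $m\ge (3k+1)/2$ for odd $k$. Hence $\beta\ge (k+1)/(2k)>1/2$ for every odd $k$.

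Restricting to $k$ divisible by $\tau$ gives an exact bound. So does your observation that $y_i=1/\tau$ is feasible for \ref{Converted LP problem}, and it is the cleaner of the two. It yields $\beta=\sum_i y_i^*-1\le (n-\tau)/\tau$ for the optimal code of Section~\ref{Setting the parameters}, with no limit needed. This is the cleanest correct reading of the corollary.

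One step in your second argument needs tightening. To transfer the bound from a multiple of $\tau$ to ``every admissible $k$'', take $k$ to be a common multiple of $\tau$ and of the lcm of the denominators of an optimal $y^*$. Then the integer optimum at that $k$ coincides with the LP optimum. The LP feasibility argument avoids this step altogether.
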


\begin{proof}
    From Theorem \ref{Theorem: lower and upper bound on m} we have $m\leq\left\lceil\frac{ k}{\tau}\right\rceil\cdot n$. Therefore, $$\beta=\frac{m-k}{k}\leq \frac{\left\lceil\frac{ k}{\tau}\right\rceil\cdot n-k}{k}< \frac{\left(\frac{k}{\tau}+1\right)\cdot n -k }{k}=\frac{n}{k}+\frac{n-\tau}{\tau}.$$ Now, if $k$ tends to infinity, we get the result.
\end{proof}
As a consequence of this result, we have that if \(k\) is large enough and $\tau$ is close to $n$, then the overhead is close to zero. In particular, this means that when the size of the smallest access set is close to $n$, we add little redundancy. This is not surprising, as this situation goes toward the threshold case with $k=n$, in which each node stores one base fragment only and the overhead is zero.

\section{Partitioned Access Structures}
\label{sec:partitioned_access}

In the previous section, we have shown how to construct a linear monotone
erasure code~$\mathcal{C}$ for an arbitrary access structure~$\mathcal{A}$
from a threshold erasure code.  This works by solving~\ref{Converted LP
  problem}, which is derived from~\CA, and is efficient if \CA has a compact (e.g., polynomial-size) description in the number of
nodes.  However, there may be exponentially many (in~$n$) access sets in \CA.
In this section, we consider a special kind of access structure called
\emph{partitioned access structure}; such an access structure can be
hierarchically decomposed in the sense that the MBF that represents it corresponds to a tree on the inputs, i.e., on the set of nodes~\CP.
Every node in~\CP appears exactly once in the MBF that describes 
the access structure.

\begin{definition}\label{def:part_as}
  Let $\mathcal{P}$ be a set of nodes. A \emph{partitioned access structure}
  $\mathcal{A}$ on $\mathcal{P}$ is given by an access tree $T$ with root 
  $v_{\mathcal{R}}$, in which every internal vertex~$v$ with $r$ children
  is labeled with $t$ and corresponds to the threshold operator~$\Theta_t^r$.
  The leaves correspond to the nodes in
  $\mathcal{P}$ and are partitioned into $\delta$ sets $B_i$. In particular,
  for each $i\in [\delta]$, there is a uniquely determined vertex $v_i$ such
  that $B_i$ consists exactly of the children of~$v_i$.
\end{definition}

An \emph{$L$-level partitioned access structure} $\mathcal{A}$ is represented
by a tree with $L$ levels.  We denote the number of vertices on
level~$j \in [L]$ by $s_j$, where $s_1=1$ since the root is on level one;
the leaves are on level $L+1$.
In the following we always assume that the access tree is balanced, i.e., that the path from the root to each leaf node contains exactly $L+1$ vertices. Note that an access tree $T$ can always be made balanced by adding vertices with label $1$ to any path shorter than $L+1$.

We focus on partitioned access structures here because they admit an efficient
construction of monotone erasure codes, as shown in the remainder of this
section.

\begin{example}
Consider several organizations (universities, governments, companies, etc.) that each run a set of nodes in a distributed system, and an access structure corresponding to the assumption that more than two thirds of the organizations will remain available and, among each organization that remains available, more than half of the organization's nodes will remain available. The Stellar network~\cite{DBLP:conf/sosp/LokhavaLMHBGJMM19} is a real-world example of a system using this trust model.  Its node set~\CP is partitioned into 
organizations~$B_1, \dots, B_r$, and the access structure can be expressed 
by an MBF like
\[
  \Theta_{t}^r(\Theta_{z_1}^{b_1}(B_1), \dots, \Theta_{z_r}^{b_r}(B_r)),
\]
where the notation $\Theta_x^y(B_i)$ stands for $\Theta_x^y(\{p_j\}_{p_j \in B_i})$.

More generally, one can consider sub-organizations or groups of organizations (e.g. all companies, all universities, etc.) and create hierarchical, partitioned access structures with arbitrarily many levels.
\end{example}

\begin{figure}[h]
    \centering
    \scalebox{0.9}{\begin{tikzpicture}[grow=down, level distance=1.5cm]
    \tikzset{
    circ/.style = {draw, circle, minimum size = 6mm, inner sep = 2pt}}
\node[circ]{$2$} %
  child [sibling distance=4cm]{node[circ]{$1$}
        child [sibling distance=0.8cm]{node[circ]{$a$}}}
  child [sibling distance=4cm]{node[circ]{$1$}
        child [sibling distance=0.8cm]{node[circ]{$b$}}}
  child [sibling distance=4cm]{node[circ]{$1$}
        child [sibling distance=0.8cm]{node[circ]{$c$}}
        child [sibling distance=0.8cm]{node[circ]{$d$}}
        child [sibling distance=0.8cm]{node[circ]{$e$}}}
 ;
\end{tikzpicture}}
\caption{Access tree of Example \ref{example: uniform assignment is not necessarily optimal}.}
\label{Figure: Reduced access tree 1}
\end{figure}

\subsection{Uniform assignment}\label{sec:uniform_assignment}

Let $\mathcal{A}$ be an $L$-level partitioned access structure with tree $T$ and partition $B_1, \dots, B_\delta$. %
We propose here a fragments assignment, called \emph{uniform assignment}, that is efficient to compute but not always optimal in terms of overhead. In more detail, the uniform assignment provides parameters \((m, k)\) for the base code such that the resulting monotone erasure code is complete. 

Let $R_v$ be the set of vertices on the unique path from the root to vertex $v$. For all $i\in [\delta]$, the number of base fragments assigned to each $p_j\in B_i$ is
$$m_j=\frac{k}{\prod\limits_{v\in R_{v_i}}x_v},$$ 
where $v_i$ is the vertex whose set of children is $B_i$ and $k=lcm(den(m_j)_{j\in [n]})$. Figure~\ref{Illustration: uniform fragment distribution for special access structures} shows a visual representation of the uniform assignment on a two-level partitioned access structure. The idea is that each child of a vertex in the tree gets the same amount of information. In more detail, if the file has size $k$, then any child of the root gets an information of size $k/t$. In this way, any $k$ children of the root have enough information for reconstruction. With the same reasoning, the information is further distributed among all paths until we reach the leaves. This ensures that any set $A\in \mathcal{A}$ holds exactly $k$ fragments, and thus it is able to reconstruct the file. Intuitively, giving less fragments to some $p \in \CP$ makes some access set insufficient, which violates completeness of the code.

\begin{figure}[h]
\begin{center}
    \scalebox{0.9}{\begin{tikzpicture}
    \tikzset{
    circ/.style = {draw, circle, inner sep=2pt}}
        \node[circ, scale=1.2] (O) at (0,2) {$1$};

        \node (K) at (0,1) {$k$};
        
        \node[circ, scale=1.2] (a) at (0,0) {$t$};

        \node[circ, scale=1.2] (b) at (-4,-3) {$z_{1}$};
        \node[scale=1.2] (b1) at (-2,-1.5) {$\frac{k}{t}$};
        \node[scale=1.2] (e) at (0,-3) {$\cdots$};
        \node[scale=1.2] (e1) at (0,-1.5) {$\frac{k}{t}$};
        \node[circ, scale=1.2] (g) at (4,-3) {$z_r$};
        \node[scale=1.2] (g1) at (2,-1.5) {$\frac{k}{t}$};

        \node[circ, scale=1.2] (h) at (-6,-6) {$p_{1,1}$};
        \node[scale=1.2] (h1) at (-5,-4.5) {$\frac{k}{t \cdot z_{1}}$};
        \node[scale=1.2] (j) at (-4,-6) {$\cdots$};
        \node[circ, scale=1.2] (k) at (-2,-6) {$p_{1,b_1}$};
        \node[scale=1.2] (k1) at (-3,-4.5) {$\frac{k}{t \cdot z_1}$};

        \node[circ, scale=1.2] (p) at (2,-6) {$p_{r,1}$};
        \node[scale=1.2] (p1) at (3,-4.5) {$\frac{k}{t \cdot z_r}$};
        \node[scale=1.2] (r) at (4,-6) {$\cdots$};
        \node[circ, scale=1.2] (s) at (6,-6) {$p_{r,b_r}$};
        \node[scale=1.2] (s1) at (5,-4.5) {$\frac{k}{t \cdot z_r}$};

        \draw[thick] (O) -- (K) -- (a);
        
        \draw[thick] (a) -- (b1) -- (b);
        \draw[thick,dotted] (a) -- (e1) -- (e);
        \draw[thick] (a) -- (g1) -- (g);

        \draw[thick] (b) -- (h1) -- (h);
        \draw[thick,dotted] (b) -- (j);
        \draw[thick] (b) -- (k1) -- (k);

        \draw[thick] (g) -- (p1) -- (p);
        \draw[thick,dotted] (g) -- (r);
        \draw[thick] (g) -- (s1) -- (s);
    \end{tikzpicture}}
\caption{Uniform assignment of base fragments.}
\label{Illustration: uniform fragment distribution for special access structures}
\end{center}
\end{figure}

\begin{example}
\label{example: uniform assignment is not necessarily optimal}
Consider the following two-level partitioned access structure 
$$\mathcal{A}=\Theta_2^3(\Theta_1^1(a),\Theta_1^1(b),\Theta_1^3(c,d,e)).$$ Figure \ref{Figure: Reduced access tree 1} shows the tree of $\mathcal{A}$. With the uniform assignment to $\mathcal{A}$, each node receives $\frac{k}{2}$ fragments, and this leads to an overhead of $\beta=\frac{3}{2}$. Note that nodes in $\{c,d,e\}$ receive more base fragments together compared to $a$ and $b$. Therefore, the question arises whether the overhead gets smaller if we reduce the number of base fragments given to nodes in $\{c,d,e\}$ and assign more fragments to $a$ and $b$. For instance, if each node in $\{c,d,e\}$ receives no base fragment, we can consider $c,d$ and $e$ as absent. This leads to a new access structure $\mathcal{A}_0=\Theta_1^2(\Theta_1^1(a),\Theta_1^1(b)) = \Theta_1^2(a,b)$.
Note that for each $A \in \mathcal{A}$ there exists a set $A_0\in \mathcal{A}_0$ such that $A_0 \subset A$. This implies that $\mathcal{A}_0$ contains $\mathcal{A}$.
Therefore, $a$ and $b$ must compensate the loss of the base fragments of the nodes in $\{c,d,e\}$ in order to make each access set $A\in \mathcal{A}$ qualified. 
To achieve this, we apply the uniform assignment to $\mathcal{A}_0$, which allocates $k$ fragments to both $a$ and $b$. 
This leads to the overhead $\beta_0=1<\beta$, which shows that the uniform assignment is not optimal for $\mathcal{A}$. 
\hfill $\triangle$
\end{example}
Example~\ref{example: uniform assignment is not necessarily optimal} demonstrates that in some cases it is beneficial to disregard nodes from~$\mathcal{P}$, i.e., to assign them zero base fragments. Based on this, in Section~\ref{ssec:algorithm_partitioned}, we provide a method that builds an optimal assignment for any partitioned access structure.

\subsection{Optimal fragment assignment algorithm}
\label{ssec:algorithm_partitioned}

We present here an algorithm that finds an optimal solution to~\ref{Converted LP problem} for any partitioned access structure in a way that substantially reduces the computational effort compared to solving the minimization problem directly. 
The algorithm takes an access tree \(T\) of a partitioned access
structure as input, and returns the optimal parameters \((\nu_T, k_T, h_T)\) for the base code, where:

\begin{itemize}
    \item $\nu_T$ is total number of base fragments held by all nodes in $T$;
    \item $k_T$ is the reconstruction threshold of the code;
    \item $h_T: [n] \to \BN$ is a function that returns the number of base
      fragments assigned to node~$i \in [n]$ by the construction.
\end{itemize}
We now give an overview of the algorithm. 
Recall that each vertex in $T$ with $r$ children is labeled with a value $t$ representing the threshold of the corresponding operator $\Theta_t^r$.

\begin{enumerate}
\item If \(T\) has depth zero and consists of only one (leaf) vertex~$i$, then
  let $(\nu_T, k_T) = (1, 1)$, let $h_T(i) = 1$, and let $h_T(j) = 0$ for
  $j \neq i$. Return \((\nu_T, k_T, h_T)\).
\item If \(T\) has depth one or more and consists of a root labeled \(t\) and
  \(r\) subtrees \(T_1,\dots,T_r\):
  \begin{enumerate}  \item Apply the algorithm to each subtree \(T_a\) for $a \in [r]$, to obtain
    $(\nu_a, k_a, h_a)$.  Let \(\rho_a=\nu_a/k_a\).  W.l.o.g., assume the
    subtrees are ordered so that if \(a<b\) then \(\rho_a\le \rho_b\).
  \item Consider any $s\in \{r-t+1,\ldots,r-1\}$ such that
    \[
      \rho_{s+1}\geq \frac{1}{s-r+t}\sum_{a=1}^{s} \rho_a
    \]
    and let $s^*$ be the smallest such~$s$ or let $s^* = r$ if no suitable $s$
    satisfies the condition.
    Note that if the condition on $\rho_s$ holds for some $s<r$, then it also holds for any $s'>s$. 
    Let 
    \[
      S=\{1,\ldots,s^*\}
    \]
    and
    \[
      \lambda=\operatorname{lcm}(\{\,k_a \mid a\in S\,\}).
    \]
  \item For each \(a\in S\), set \(\alpha_a=\lambda/k_a\) (an
    integer).  Then rescale the base fragment counts $\nu_a$ and
    $h_a$ returned for each subtree $a \in [r]$ as
    \[
      \nu'_a =
      \begin{cases}
        \alpha_a \nu_a & \text{ if } a \in S \\
        0 & \text{ if } a \not\in S
      \end{cases}
    \]
    Analogously, for each $a \in [r]$ and $i \in [n]$ set
    \[
      h'_a(i) =
      \begin{cases}
        \alpha_a h_a(i) & \text{ if } a \in S \\
        0 & \text{ if } a \not\in S
      \end{cases}
    \]
  \item Finally, let
    \begin{align*}
      \nu_T &=\sum_{a=1}^r \nu'_a,\\
      k_T &=(s^*-r+t)\,\lambda
    \end{align*}
    and the function $h_T: [n] \to \BN$ be such that for $i \in [n]$
    \[
      h_T(i) =
      \begin{cases}
        h'_a(i) & \text{ if there is some $a \in S$ with $h_a(i) > 0$}\\
        0 & \text{ otherwise. }
      \end{cases}
    \]
    For the definition of $h_T$, recall that \CA is partitioned, hence, every
    node appears in at most one subtree.
  \end{enumerate}
\item Return $(\nu_T, k_T, h_T)$.
\end{enumerate}
The resulting code~\CC therefore is a $[\nu_T, k_T]$-linear monotone erasure
code such that the fragment of each node~$i$ in \CC consists of $h_T(i)$
(base) fragments of~$\CC'$. Details of the procedure can be
found in function FA($T,L$) in Algorithm~\ref{alg:fragments_distribution_multi-level}, which takes as input an access tree $T$ with $L$ levels. We assume that each leaf is assigned a global index $i \in [n]$, where~$n$ is globally known.

The time complexity of the algorithm is $\CO(n^2)$, which is considerably lower than the complexity required to directly solve~\ref{Converted LP problem}.

Note that for partitioned access structures, the algorithm discussed in Section~\ref{alg:matrix_construction} can be extended with the algorithm presented in this section. In this case, the resulting linear monotone erasure code is optimal. As the encoding matrix has a block-wise MDS property, this leads to a smaller required field size.

\begin{algo*}[htb]
\caption{\textsc{Fragments assignment}}
  \label{alg:fragments_distribution_multi-level}
\setcounter{g@linecounter}{0}
\raggedright
// $n$ is a global parameter indicating the number of nodes in \CP  

  \vbox{
  \small
  \begin{numbertabbing}
    xxxx\=xxxx\=xxxx\=xxxx\=xxxx\=xxxx\=MMMMMMMMMMMMMMMMMMM\=\kill 
    \textbf{function} FA($T,L$)\label{}\\
    \>$T_1, \dots, T_r \gets \text{subtrees of } T$ \label{}\\
    \>$t \gets \text{label of root } v \text{ of } T$ \label{}\\
    \>$h_T \gets [\;]$:  hash map from $[n]$ to \BN, initially $h_T[z] = 0$ for $z \in [n]$ \label{}\\
    \>$h_i \gets [\;]$ for $i \in [r]$:  hash maps from $[n]$ to \BN, initially $h_i[z] = 0$ for $z \in [n]$ \label{}\\
    \>\textbf{if} $L=0$ \textbf{then} \`// base case where $T$ is the leaf $v_i$ \label{}\\
    \>\> \textbf{for} $j \in [n]$ \textbf{do} \label{}\\
    \>\> \> $h_T[j] \gets 0$ \label{}\\
    \>\> $h_T[i] \gets 1$ \label{}\\
    \>\> $(\nu_T, k_T) \gets (1,1)$ \label{}\\
    \> \textbf{else}\label{}\\
    \>\> \textbf{for} $v_i$ child of $v$ \textbf{do} \label{}\\
    \>\>\> $(\nu_i, k_i, h_i) \gets \text{FA}(T_i, L-1)$ \label{} \\
    \>\>\> $\rho_i \gets \frac{\nu_i}{k_i}$ \label{} \\
    \>\> $\pi \gets [\rho_i]_{i \in [r]}$ \label{} \\
    \>\> $\pi \gets \text{SORT}(\pi) $ \label{} \\
    \>\> $I \gets \left\{s\in \{r-t+1,\ldots,r-1\} \mid
    \rho_{{s+1}}\geq \frac{1}{s-r+t}\sum_{i=1}^{s} \rho_{i}\right\}$ \label{} \\
    \>\> $s^* \gets \min\{I \cup \{r\}\}$ \label{} \\
    \>\> $\lambda \gets \text{lcm}(\{k_i \mid i \leq s^*\}) $\label{}\\
    \>\> \textbf{for} $i \in [s^*]$ \textbf{do} \label{}\\
    \>\>\> $\alpha_i\gets \frac{\lambda}{k_i} $ \label{}\\
    \>\>\> $ \nu_i \gets \alpha_i \cdot  \nu_i $ \label{}\\
    \>\>\> \textbf{for} $ z \in [n]$ \textbf{do} \label{}\\
    \> \>\> \> $h_i[z]\gets \alpha_ih_i[z] $ \label{} \\
    \>\>\textbf{for} $i = s^*+1, \dots, r$ \textbf{do} \label{}\\
    \> \>\> $\nu_i\gets 0$ \label{}\\
    \> \>\>\textbf{for} $ z \in [n]$ \textbf{do}\label{}\\
    \> \>\>\> $h_i[z]\gets 0 $ \label{}\\
    \>\>\textbf{for} $z \in [n]$ \textbf{do} \label{}\\
    \>\>\> \textbf{if} $\exists i \in [r]$ s.t. $h_i[z]>0$ \textbf{then} \label{}\\ 
    \> \>\>\> $h_T [z]\gets h_i[z]$ \label{}\\
    \> \>\>\textbf{else} \label{}\\ 
    \> \>\>\> $h_T[z]\gets 0$ \label{}\\
    \>\>$\nu_T=\sum_{i=1}^r \nu_i$ \label{}\\
    \>\>$k_T=(s^*-r+t)\cdot\lambda$ \label{}\\
    \>\textbf{return} $(\nu_T, k_T, h_T)$ \label{}
    \end{numbertabbing}
  }
\end{algo*}

\begin{example}\label{example: general partitioned access structure}
Consider the following partitioned access structure $$\mathcal{A}=\Theta_2^3(\Theta_3^3(p_1,p_2,p_3),\Theta_2^3(p_4,p_5,p_6),\Theta_1^3(p_7,p_8,p_9)).$$
Figure \ref{fig:access_tree_example_algorithm_partitioned} shows the access tree $T$ of $\mathcal{A}$. Let $T^{(1)},T^{(2)}$ and $T^{(3)}$ be the depth one subtrees of $T$. Note that each subtree $T^{(i)}$ is further decomposed into multiple subtrees of depth zero, each consisting of only one leaf. For simplicity, we identify each of these subtrees with its root node $p_i$. Then, for each leaf $p_i$, the algorithm first delivers the following tuples 
    \begin{align*}
        (\nu_i,k_i,h_i)=\text{FA}(p_i,0)=(1,1,h_i) \text{ where }\\
        h_i(z)=\begin{cases}
            1 & \text{for } z=i\\
            0 &\text{else}
        \end{cases}
    \end{align*}
Let $T$ be the subtree corresponding to $T_{1}^{(1)}$. We get $\nu_T=1, k_T=1$, and 
    \begin{align*}
    h_T(z)=
        \begin{cases}
            1 &\text{for }z=1\\
            0 &\text{else}
        \end{cases}
    \end{align*}
    Thus, $\text{FA}\left(T_{1}^{(1)},1\right)=(1,1,h_1)$, where $h_1=h_T$. Analogously, we get $\text{FA}\left(T_{2}^{(1)},1\right)=(3,2,h_2)$ and $\text{FA}\left(T_{3}^{(1)},1\right)=(3,1,h_3)$ where
        \begin{align*}
        h_2(z)&=\begin{cases}
            1 &\text{if } z=4,5,6\\
            0 &\text{else}
        \end{cases}\\
        h_3(z)&=\begin{cases}
            1 &\text{if } z=7,8,9\\
            0 &\text{else}
        \end{cases}
    \end{align*}
Finally, the output is $\text{FA}(T,3)=(5,2,h_T)$, where
    \begin{align*}
    h_T(z)=
        \begin{cases}
            2 &\text{if } z=1\\
            1 &\text{if } z=4,5,6\\ 
            0 &\text{else}
        \end{cases}
    \end{align*}

    \begin{figure}[h]
    \centering
    \scalebox{0.9}{\begin{tikzpicture}[grow=down, level distance=2cm]
    \tikzset{
    circ/.style = {
        draw,
        circle,
        inner sep = 2pt}}
        
\node[circ]{$2$} %
  child [sibling distance=5cm]{node[circ]{$3$}
                   child [sibling distance=1cm]{node[circ]{$p_{1}$}}
            child [sibling distance=1cm]{node[circ]{$p_{2}$}}
           child [sibling distance=1cm]{node[circ]{$p_{3}$}} }
  child [sibling distance=5cm]{node[circ]{$2$}
            child [sibling distance=1cm]{node[circ]{$p_{4}$}}
            child [sibling distance=1cm]{node[circ]{$p_{5}$}}
            child [sibling distance=1cm]{node[circ]{$p_{6}$}}}
            child [sibling distance=5cm]{node[circ]{$1$}
            child [sibling distance=1cm]{node[circ]{$p_{7}$}}
            child [sibling distance=1cm]{node[circ]{$p_{8}$}}
            child [sibling distance=1cm]{node[circ]{$p_{9}$}}}
 ;
\end{tikzpicture}}
\caption{Access tree $T$ of $\mathcal{A}$}\label{fig:access_tree_example_algorithm_partitioned}

\end{figure}
\hfill $\triangle$
\end{example}

\subsection{Analysis}\label{sec:fragments_distribution_analysis}
We now show that applying Algorithm~\ref{alg:fragments_distribution_multi-level} to any partitioned access structure \CA provides an optimal %
linear monotone erasure code for \CA.  To do that, we first rephrase the problem we want to solve, namely, \ref{Converted LP problem}, in a recursive way, which makes the proof easier.

Recall that, given an access structure $\mathcal{A} = \{A_1, \dots, A_\omega\}$, LPP is defined as follows:
\begin{align}\label{Converted LP problem_2}
    \min_{y\in \BR^n} &\sum_{i \in [n]}y_i \notag \\
    \text{subject to } \Gamma \cdot y^\top &\geq 1_\omega^\top\tag{LPP} \\ 
    y &\geq 0 \notag
\end{align}
where $\Gamma$ is a binary $\omega \times n$ matrix such that $\Gamma_{ij}=1$ if and only if $p_j\in A_i$, and $\Gamma_{ij}=0$ otherwise, and $1_\omega=(1,\ldots,1)\in \mathbb{N}^\omega$.
Consider a partitioned access structure \CA with access tree $T$, where the root is labeled $t$ and there are $r$ subtrees $T_1, \dots, T_r$. 
Let us denote by $\CA_\ell$ the access structure determined by subtree $T_\ell$.
Notice that, for all access sets $A \in \CA$, there exist sets $S_1, \dots, S_t$ and a permutation \(\sigma\) of \(\{1, \dots, r\}\) such that $$A=\bigcup_{i \in [t]} S_i$$
where, for each $\ell \in [t]$,  $S_\ell \in \mathcal{A}_{\sigma(\ell)}$. %
For simplicity, assume $\sigma(\ell)=\ell$ for each $\ell \in [t]$. Additionally, let $\CP_\ell$ be the set of nodes corresponding to the leaves in each subtree $T_\ell$.

As required by the first inequality in \ref{Converted LP problem_2}, for each $A \in \CA$ it must hold
\begin{align*}
    \sum_{p_i \in A} y_i  \geq 1 \Longleftrightarrow \sum_{\ell \in [t]} \sum_{p_i \in S_\ell} y_i  \geq 1.
\end{align*}

Let $y' \in \mathbb{R}_{\geq 0}^n$ and $\gamma' \in \mathbb{R}_{\geq 0}^r$ be such that $y_i = \gamma'_\ell y'_i$ for all $p_i \in S_\ell$, with $\ell \in [r]$. Then, we can rewrite~\ref{Converted LP problem_2} as follows
\begin{align}\label{intermediate_problem}
    \min_{\gamma',y'} &\sum_{\ell\in [r]} \gamma'_{\ell}\sum_{p_i\in \mathcal{P}_\ell} y_i' \notag\\ 
    \text{ subject to } &\sum_{\ell\in J}  \gamma'_\ell\sum_{p_i\in S} y_i' \geq 1 \qquad \text{ for all } J \subseteq [r], |J| = t, S \in \CA_\ell \tag{$P'$}\\
    & \gamma', y' \geq  0 \notag
\end{align}

Now, define $d_\ell$ as $\min\{\sum_{p_i\in S} y_i' \: | \: (\gamma', y') \text{ is a solution to~\ref{intermediate_problem} and } S \in \CA_\ell\}$.
Then, 
for each valid $y'$ and for each $S \in \CA_\ell$ it holds
$$
\sum_{p_i\in S} y_i' \geq d_\ell \Longleftrightarrow \sum_{p_i\in S} y_i'' \geq 1
$$
where $y''_i= \frac{y'_i}{d_\ell}$ for all $p_i \in S$. 
We can therefore derive the following subproblem for each $\ell \in [r]$:
\begin{align}\label{P_ell_problem}
    \min_{y''} &\sum_{p_i\in \mathcal{P}_{\ell}} y''_i \notag\\
    \text{ subject to } &\sum_{p_i \in S} y''_i  \geq 1 \qquad \text{ for all $S\in \CA_\ell$}\tag{$P_\ell$} \\ 
    & \qquad y'' \geq 0 \notag
\end{align}
which corresponds to~\ref{Converted LP problem_2} instantiated on $T_\ell$.
As explained in Section~\ref{Setting the parameters}, the threshold of subproblem~(\ref{P_ell_problem}) is $k_{\ell} = lcm(den(y''_i))_{p_i \in \CP_{\ell}}$, and the number of fragments assigned to each $p_i$ is $m_i= y''_i k_\ell$. In particular, this means that $$\sum_{p_i \in \CP_\ell} y''_i \cdot k_\ell= \nu_{\ell}$$
which leads to the following chain of equalities
\begin{align*}
    & \sum_{i \in [n]} y_i = \sum_{\ell\in [r]}\gamma'_\ell\sum_{p_i \in \CP_\ell} y'_i
    = \sum_{\ell \in [r]}\gamma'_\ell d_\ell \sum_{p_i \in \CP_{\ell}}y''_i =
    \sum_{\ell \in [r]}\gamma'_\ell d_\ell  \sum_{p_i \in \CP_{\ell}} \frac{y''_i \cdot k_\ell}{k_{\ell}}
    =\sum_{\ell \in [r]} \gamma'_\ell d_\ell\frac{\nu_\ell}{k_{\ell}} 
\end{align*}
Moreover, if $\gamma_\ell = \gamma'_\ell d_\ell$, then the constraint in Problem~(\ref{intermediate_problem}) is equivalent to
$$
\sum_{\ell\in J}  \gamma_\ell\sum_{p_i\in S} y_i'' \geq 1
$$
which is equivalent to
$\sum_{\ell\in J}  \gamma_\ell \geq 1$
as $\sum_{p_i\in S} y_i'' \geq 1$.%
We can therefore rewrite \ref{Converted LP problem_2} as follows: 
\begin{align*}\label{LPP:to_solve}
    \min_{\gamma,\psi} &\sum_{\ell=1}^r \gamma_\ell \frac{\nu_\ell}{k_\ell}\\
        \text{ subject to }&\sum_{j\in J}\gamma_j\geq 1 \qquad \text{ for all $J\subseteq [r]$, $|J|=t$}\tag{$P$} \\
         & \qquad \gamma \geq  0 
\end{align*}
where $\psi=(\frac{\nu_\ell}{k_\ell})_{\ell \in [r]}$, with $\frac{\nu_\ell}{k_\ell}$ given by a valid (not necessarily optimal) solution of subproblem~(\ref{P_ell_problem}) instantiated on $T_\ell$, for each $\ell \in [r]$.

We are now ready to state some preliminary lemmas that will be used in the proof of
Theorem~\ref{theo:optimal_parameters}, which ensures that Algorithm~\ref{alg:fragments_distribution_multi-level} provides the optimal parameters of a linear monotone erasure code for any partitioned access structure.
Our approach consists in showing that, at each step, the algorithm finds an optimal solution of a subproblem, and then it combines such solutions optimally to build a solution of the main problem.
In particular, Lemma~\ref{lemma:cost_preserving} shows that an optimal solution on the main problem induces optimal solutions on all subproblems. Then, Lemma~\ref{lemma:bins} shows how to find an optimal solution of a subproblem.

\begin{lemma}\label{lemma:cost_preserving}
    Let $T$ be the access tree of a partitioned access structure consisting of a root labeled $t$ and $r$ subtrees $T_1,\ldots,T_r$. Then, every optimal solution of~(\ref{LPP:to_solve}) induces an optimal solution of each subproblem~(\ref{P_ell_problem}) for $\ell\in [r]$. 
\end{lemma}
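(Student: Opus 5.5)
We argue by contradiction through an exchange argument, using the decomposition $y_i=\gamma'_\ell y'_i = \gamma_\ell y''_i$ set up before the statement. Fix an optimal solution $y$ of \ref{Converted LP problem_2} on $T$. For each $\ell\in[r]$ we read off the induced data: the restriction $y|_{\CP_\ell}$, the scalar
\[
\gamma_\ell=\min_{S\in\CA_\ell}\sum_{p_i\in S}y_i ,
\]
and, whenever $\gamma_\ell>0$, the rescaled vector $y''=y|_{\CP_\ell}/\gamma_\ell$. By construction $y''$ is feasible for (\ref{P_ell_problem}). The claim to establish is that $y''$ is in fact optimal for (\ref{P_ell_problem}) for every $\ell$ with $\gamma_\ell>0$. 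For $\ell$ with $\gamma_\ell=0$, optimality of $y$ forces $y|_{\CP_\ell}=0$: otherwise we could set it to zero without violating any constraint, since the constraints involving subtree $\ell$ are then already met by the other $t$ subtrees. In that case the subproblem is irrelevant (weight zero), and we interpret the statement as holding trivially, or for any optimal solution of $P_\ell$ scaled by $0$.

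The first step is to check that feasibility of $y$ for \ref{Converted LP problem_2} is equivalent to $\sum_{\ell\in J}\gamma_\ell\ge1$ for all $|J|=t$. This holds because every access set of $\CA$ is a union of access sets of $t$ distinct subtrees, and, since $\CA$ is partitioned, the minima over the different subtrees can be chosen independently. This is exactly the reformulation (\ref{LPP:to_solve}), with objective
\[
\sum_\ell \gamma_\ell\cdot\sum_{p_i\in\CP_\ell}y''_i .
\]

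The second step is the exchange. Suppose that for some $\ell$ with $\gamma_\ell>0$ there is a feasible $z$ for (\ref{P_ell_problem}) with $\sum_{p_i\in\CP_\ell} z_i<\sum_{p_i\in\CP_\ell} y''_i$. Replace $y|_{\CP_\ell}$ by $\gamma_\ell z$ and leave all other coordinates unchanged. Every $S\in\CA_\ell$ still satisfies $\sum_{p_i\in S}\gamma_\ell z_i\ge\gamma_\ell$, so the minimum over subtree $\ell$ remains at least $\gamma_\ell$. Hence all constraints $\sum_{\ell\in J}\gamma_\ell\ge1$ remain valid, and the new vector is feasible for \ref{Converted LP problem_2}. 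Its objective decreases by $\gamma_\ell\bigl(\sum y''_i-\sum z_i\bigr)>0$, which contradicts optimality of $y$. Disjointness of the $\CP_\ell$, i.e.\ the partitioned assumption, is what makes this local replacement legitimate without affecting other subtrees.

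The main obstacle is making the reformulation rigorous, in particular that the constraint for each access set decomposes as a sum of per-subtree minima. This requires that access sets of $\CA$ are exactly the unions $\bigcup_{\ell\in J}S_\ell$ with $S_\ell\in\CA_\ell$ and $|J|=t$ (up to minimality, which does not affect the LP since supersets only add nonnegative terms). I will state this via the recursive MBF semantics of $\Theta_t^r$ and the disjointness of leaves. Finally, I translate the conclusion into the parameters of the paper: optimality of $y''$ means $\nu_\ell/k_\ell$ attains the optimum of (\ref{P_ell_problem}), so an optimal solution of (\ref{LPP:to_solve}) is induced by optimal subproblem solutions.
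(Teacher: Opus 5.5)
Your proposal is correct and uses the same exchange argument as the paper: keep the outer weights $\gamma$ fixed, swap one subtree's solution for a strictly better feasible one, and observe that feasibility is preserved while the objective drops. Your version is more careful than the paper's. You work directly with the original LP variables and get a strict decrease by requiring $\gamma_\ell>0$ (the paper only writes $\le$). You also handle the degenerate case $\gamma_\ell=0$, where the induced subproblem solution is irrelevant and the statement holds only in the vacuous sense you describe.
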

\begin{proof} %

Assume we have a valid solution $(\gamma,\psi)$ to~(\ref{LPP:to_solve}) corresponding to the objective value $\sum_{\ell=1}^r\gamma_\ell \frac{\nu_{\ell}}{k_{\ell}}$. Then, assume there exists a valid solution $(\gamma,\tilde{\psi})$ to~(\ref{LPP:to_solve}) such that there is some $q\in [r]$  with $\frac{\tilde{\nu}_{q}}{\tilde{k}_{q}}\leq \frac{\nu_{q}}{k_{q}}$, and $\frac{\tilde{\nu}_{\ell}}{\tilde{k}_{\ell}}= \frac{\nu_{\ell}}{k_{\ell}}$ for $\ell \neq q$. Then we have $$\sum_{i=1, i\neq q}^r \gamma_i \frac{\nu_{i}}{k_{i}} + \gamma_q\frac{\tilde{\nu}_{q}}{\tilde{k}_{q}} \leq \sum_{i=1}^r\gamma_i \frac{\nu_{i}}{k_{i}}.$$ Therefore, minimizing $\frac{\nu_q}{k_q}$ reduces the objective value of~(\ref{LPP:to_solve}), thus, an optimal solution to~(\ref{LPP:to_solve}) must minimize $\frac{\nu_{i}}{k_{i}}$ for each $i\in [r]$, which concludes the proof.
\end{proof} 
As a consequence of Lemma \ref{lemma:cost_preserving}, (\ref{LPP:to_solve}) reduces to 
\begin{align*}
\label{LPP:to_solve_reduced}
    \min_{\gamma} &\sum_{\ell=1}^r \gamma_\ell \frac{\nu^*_\ell}{k^*_\ell}\\
        \text{subject to }&\sum_{j\in J}\gamma_j\geq 1 \qquad \text{ for all $J\subseteq [r]$, $|J|=t$}\tag{$P^*$}\\
        & \qquad \gamma \geq  0 
\end{align*}
where $\frac{\nu^*_\ell}{k^*_\ell}$ is given by an optimal solution
of subproblem~(\ref{P_ell_problem}) instantiated on $T_\ell$, for each $\ell \in [r]$. Therefore, the solution to each subproblem~(\ref{P_ell_problem}) can be seen as a constant $c_\ell$. The next lemma shows how to find an optimal solution to (\ref{LPP:to_solve_reduced}).
\begin{lemma}\label{lemma:bins}
Let $r,t$ be integers such that $t\leq r$, and let $c=(c_i)_{i\in [r]}$, where $c_i\in \BQ_{\geq 1}$ such that $c_i\leq c_j$ for $i < j$. Moreover, let 
$$s^* = \min\left\{s\in \{r-t+1,\ldots,r-1\}\mid
                      c_{s+1}\geq \frac{1}{s-r+t}\sum_{i=1}^{s} c_i\right\}.$$
If it does not exist, set $s^*=r$. Then, the optimal solution of the following problem
\begin{align}
    \min_{\gamma} &\sum_{i=1}^r \gamma_i c_i \notag\\
    &\sum_{j\in J} \gamma_j\geq 1 \text{ for all } J\subseteq [r] \text{ with } |J|=t \tag{$P(c)$}\\
    &\qquad \gamma\geq 0 \notag
\end{align}
where $\gamma_i\in \BR_{\geq0}$ is given by $\gamma_{i}=0$ for all $i\geq s^*+1$ and $\gamma_j=\frac{1}{s^*-r+t}$ for all $j\leq s^*$. 

\end{lemma}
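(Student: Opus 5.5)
The plan is to prove optimality with LP duality. First I show the proposed $\gamma$ is feasible and compute its cost. Then I build a feasible dual solution with the same value, so weak duality closes the argument. Throughout, write $u = s^*-r+t$; since $s^*\geq r-t+1$, we have $u\geq 1$.

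\emph{Feasibility and value.} Every $J\subseteq[r]$ with $|J|=t$ misses at most $r-s^*$ indices of $\{s^*+1,\ldots,r\}$. Hence it contains at least $t-(r-s^*)=u$ indices of $[s^*]$, and so
\[
\sum_{j\in J}\gamma_j\geq u\cdot\tfrac1u=1 .
\]
Clearly $\gamma\geq 0$. The objective value is $V:=\frac1u\sum_{i\le s^*}c_i$.

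\emph{Dual certificate.} The dual of $P(c)$ is to maximize $\sum_J z_J$ subject to $\sum_{J\ni i} z_J\le c_i$ for all $i$ and $z\geq 0$. I restrict $z$ to sets of the form $J=U\cup\{s^*+1,\ldots,r\}$ with $U\subseteq[s^*]$ and $|U|=u$; these have exactly $t$ elements. I want the marginals to be exactly $c_i$ for $i\le s^*$.

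Set $x_i = u\,c_i/\sum_{j\le s^*}c_j$ for $i\le s^*$. Then $\sum_i x_i = u$. If moreover $0\le x_i\le 1$, the vector $x$ lies in the hypersimplex $\{x\in[0,1]^{s^*}:\sum x_i=u\}$. The hypersimplex is the convex hull of the indicator vectors of $u$-subsets (a standard fact, e.g.\ by total unimodularity or a direct rounding argument). So $x=\sum_U \theta_U \mathbf 1_U$ with $\theta\ge0$ and $\sum\theta_U=1$.

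Put $z_{U\cup\{s^*+1,\ldots,r\}} = V\theta_U$. Then:
\begin{itemize}
\item for $i\le s^*$, the marginal is $V x_i = c_i$;
\item for $i>s^*$, the marginal is the total mass $V$;
\item the dual value is $\sum_J z_J = V$.
\end{itemize}
By weak duality, every feasible $\gamma$ has cost at least $V$. Hence the proposed $\gamma$ is optimal.

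\emph{The two inequalities needed (the main step).} Two conditions must be checked.
\begin{enumerate}
\item For dual feasibility at $i>s^*$ we need $V\le c_i$. It suffices to have $V\le c_{s^*+1}$, because $c$ is sorted increasingly. If $s^*<r$, this is exactly the defining condition $c_{s^*+1}\ge\frac1u\sum_{i\le s^*}c_i$. If $s^*=r$, there is nothing to check.
\item For $x_i\le 1$ we need $u\,c_{s^*}\le\sum_{i\le s^*}c_i$. If $u=1$, this holds because $c\ge0$. If $u\ge 2$, then $s^*-1\geq r-t+1$ lies in the allowed range, so the minimality of $s^*$ says the condition fails at $s^*-1$. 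That gives $c_{s^*}<\frac1{u-1}\sum_{i<s^*}c_i$. Adding $c_{s^*}$ to both sides of $(u-1)c_{s^*}<\sum_{i<s^*}c_i$ yields the claim.
\end{enumerate}
Case 2 is where the minimality of $s^*$ is used, and it is the main obstacle. Besides this, I must double-check the hypersimplex decomposition step and handle the degenerate case $\sum_{i\le s^*}c_i=0$. That case cannot occur, since $c_i\ge1$.
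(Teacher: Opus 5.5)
Your proof is correct, and it takes a genuinely different route from the paper.

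\textbf{The paper's route.} The paper argues entirely on the primal side. It takes an optimal $\gamma$ and transforms it step by step without increasing the cost:
\begin{itemize}
\item it sorts $\gamma$ opposite to $c$;
\item it flattens $\gamma_1,\ldots,\gamma_{r-t+1}$;
\item it normalizes so that $\gamma_{r-t+1}+\cdots+\gamma_r=1$;
\item it moves the mass of the indices above $s^*$ uniformly onto $[s^*]$, which is where the defining inequality at $s^*$ is used;
\item finally, it equalizes $\gamma_{r-t+2},\ldots,\gamma_{s^*}$ one at a time with the earlier entries, which uses the failure of the condition at every $s\in\{r-t+1,\ldots,s^*-1\}$.
\end{itemize}

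\textbf{Your route.} You exhibit a dual certificate supported on the $t$-sets $U\cup\{s^*+1,\ldots,r\}$. Weak duality then gives the lower bound $V$ for every feasible $\gamma$ at once.

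\textbf{Comparison.} Both arguments rest on the same two inequalities, but yours is shorter and cleaner. It avoids the rearrangement and normalization steps. It also avoids the bookkeeping of checking that each local move keeps feasibility and lowers the objective; the paper verifies the cost decrease explicitly only for the first equalization step. In addition, you need minimality of $s^*$ only at $s^*-1$.

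The price is reliance on LP duality and on the integrality of the hypersimplex. Both are standard. The latter can be made self-contained by systematic-sampling rounding: lay intervals of lengths $x_i\le 1$ end to end on $[0,u)$, and for uniform $\theta\in[0,1)$ take the indices whose intervals contain $\theta,\theta+1,\ldots,\theta+u-1$. The paper's argument, in exchange, is elementary and shows directly why the optimizer is flat on $[s^*]$ and zero beyond it.

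One small wording fix: in your feasibility step, a $t$-set \emph{contains} at most $r-s^*$ indices of $\{s^*+1,\ldots,r\}$; it does not ``miss'' them.
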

\begin{proof}
    First, note that under the conditions $c_i\leq c_j$ and $\gamma_i\leq  \gamma_j$, where $i\neq j\in [r]$, swapping $\gamma_j$ and $\gamma_i$ decreases the objective value and does not violate validity of the solution. Therefore, an optimal solution satisfies
    $$c_1\leq \ldots \leq c_r \text{ and } \gamma_1\geq \ldots \geq \gamma_r.$$
    Moreover, note that for a solution to be valid, it must hold that $\gamma_{r-t+1} + ... + \gamma_r \geq 1$. This means that we can set $\gamma_1 = \ldots = \gamma_{r-t+1}$ because it decreases the objective value without violating validity. Furthermore, if $\gamma_{r-t+1} +\ldots + \gamma_r=v>1$, then setting $\tilde{\gamma}_i=\frac{\gamma_i}{v}$ for $r-t+1 \leq i\leq r$ leads to a smaller objective value without violating validity.
    Therefore, we can assume that $\gamma_{r-t+1} + \ldots + \gamma_r=1$. If $s^*<r$, we can reduce $\gamma_j$ with $j> s^*$ and increase $\gamma_i$ with $i \in [s^*]$. In particular, setting $\tilde{\gamma}_i=\gamma_i+\sum_{j=s^*+1}^r\frac{\gamma_{j}}{s^*-r+t}$, for all $i\in [s^*]$, and $\tilde{\gamma}_j=0$, for all $j\geq s^*+1$, decreases the objective value. Indeed, we have
    \begin{align*}
        \sum_{i=1}^{s^*}\tilde{\gamma}_ic_i=\sum_{i=1}^{s^*}\left[\left( \gamma_i + \sum_{j=s^*+1}^r \frac{\gamma_j}{s^*-r+t}\right)c_i\right] &= \sum_{i=1}^{s^*}\gamma_i c_i + \sum_{i=1}^{s^*} \left[ \frac{c_i}{s^*-r+t}\cdot \sum_{j=s^*+1}^r\gamma_j \right] \\
        &= \sum_{i=1}^{s^*}\gamma_i c_i + \sum_{j=s^*+1}^r\gamma_j\cdot \left(\frac{1}{s^*-r+t}\cdot \sum_{i=1}^{s^*}c_i\right)\\
        &\leq \sum_{i=1}^{s^*}\gamma_i c_i + \sum_{j=s^*+1}^r \gamma_j c_j = \sum_{i=1}^r\gamma_i c_i
    \end{align*}
since $\frac{1}{s^*-r+t}\cdot \sum_{i=1}^{s^*}c_i \leq c_j$ for all $j\in \{s^*+1,\ldots,r\}$ and $\gamma_j\geq 0$. Moreover, this step does not violate validity. Indeed, if $J\subseteq [s^*]$ with $|J|=s^*-r+t$, then
\begin{align*}
    \sum_{i\in J}\tilde{\gamma}_i=\sum_{i \in J} \left[\gamma_i + \sum_{j=s^*+1}^r \frac{\gamma_j}{s^*-r+t} \right] &=  \sum_{i \in J} \gamma_i + (s^*-r+t) \sum_{j=s^*+1}^r \frac{\gamma_j}{s^*-r+t} = \sum_{i \in J} \gamma_i + \sum_{j=s^*+1}^r \gamma_j \geq 1
\end{align*}
Thus, we can assume that $\gamma_r=\ldots=\gamma_{s^*+1}=0$. Otherwise, if $s^*=r$, then it must hold $\gamma_i>0$ for all $i\in [r]$.

Similarly to how we reasoned to set some values of $\gamma$ zero, observe that if $\gamma_{r-t+2}\neq 0$, it is beneficial to set $\tilde{\gamma}_\ell=\gamma_\ell -\epsilon$ with $\epsilon>0$ for all $\ell \in [r-t+1]$, and $\tilde{\gamma}_{r-t+2}=\gamma_{r-t+2}+\epsilon$. Indeed,  
\begin{align*}
     \sum_{i=1}^{s^*} c_i \gamma_i - \left[ \sum_{i \in [r-t+1]} c_i (\gamma_i -\epsilon) + c_{r-t+2} (\gamma_{r-t+2} +\epsilon) + \sum_{i =r-t+3}^{s^*} c_i \gamma_i  \right] =\epsilon\left[ \sum_{i \in [r-t+1]} c_i - c_{r-t+2}\right]
\end{align*}
is greater than zero since $s^*\geq r-t+2$ and thus $\sum_{i \in [r-t+1]}c_i > c_{r-t+2} $. Therefore, setting $\epsilon= \frac{\gamma_{r-t+1} - \gamma_{r-t+2}}{2}$ leads to $\tilde{\gamma}_1 = \ldots = \tilde{\gamma}_{r-t+2}$. This process can be iterated for $s^*\geq j>r-t+2$ with $\epsilon=\frac{\tilde{\gamma}_{j-1} - \tilde{\gamma}_{j}}{j- r+t}$, i.e., set $\gamma'_i=\tilde{\gamma}_i-\epsilon$ for $i \leq j-1$ and set $\gamma_j'=\tilde{\gamma}_j+\epsilon (j - r+t-1)$. This step preserves validity since 
\begin{align*}
    &\sum_{i = r-t+1}^{j-1} (\tilde{\gamma}_i - \epsilon) + \tilde{\gamma}_j + (j-r+t-1) \epsilon = \sum_{i = r-t+1}^{j} \tilde{\gamma}_i
\end{align*}
Moreover, setting $\epsilon=\frac{\tilde{\gamma}_{j-1} - \tilde{\gamma}_{j}}{j - r+t}$ ensures that $\gamma'_i=\gamma'_j$ for $i < j\leq s^*$, since 
\begin{align*}
&\gamma'_{j-1}-\gamma'_j=\tilde{\gamma}_{j-1} - \frac{\tilde{\gamma}_{j-1} - \tilde{\gamma}_{j}}{j - r+t} - \left[\tilde{\gamma}_j + (j - r+t-1)\frac{\tilde{\gamma}_{j-1} - \tilde{\gamma}_{j}}{j - r+t}\right] = \tilde{\gamma}_{j-1} - \tilde{\gamma}_j - (\tilde{\gamma}_{j-1} - \tilde{\gamma}_{j}) = 0 
\end{align*}

Therefore, we end up with $\gamma_i = \frac{1}{s^*-r+t}$ for all $i \in [s^*]$, and $\gamma_j=0$ for all $s^*+1\leq j \leq r$. Since the objective function can not be increased further by changing the values of $\gamma$, the Lemma is proven.
\end{proof}

\begin{theorem}\label{theo:optimal_parameters}
    Given an $L$-level partitioned access structure $\mathcal{A}$, Algorithm~\ref{alg:fragments_distribution_multi-level} finds the optimal parameters $m$ and $k$ for an $[m,k]$-linear monotone erasure code for $\mathcal{A}$.
\end{theorem}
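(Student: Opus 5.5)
We argue by induction on the number of levels $L$ of the (balanced) access tree $T$. The claim we carry through the induction is stronger than the theorem: on input $T$, FA returns $(\nu_T,k_T,h_T)$ such that $y_i = h_T(i)/k_T$ is an optimal solution of~\ref{Converted LP problem} for the access structure of $T$. In particular $\nu_T/k_T$ equals the optimal value of~\ref{Converted LP problem}. Since Section~\ref{Setting the parameters} shows that $\beta+1=\sum_i y_i$ and that minimizing $\sum_i y_i$ gives the optimal $m$ and $k$, this stronger claim yields the theorem directly.

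For the base case $L=0$, $T$ is a single leaf $p_i$ and $\CA=\{\{p_i\}\}$. The LP then reduces to $\min y_i$ subject to $y_i\ge 1$, whose optimum is $y_i=1$. FA returns $(1,1,h_T)$ with $h_T(i)=1$, which matches.

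For the inductive step, let the root be labeled $t$ with subtrees $T_1,\dots,T_r$. By the induction hypothesis, FA on $T_a$ returns $(\nu_a,k_a,h_a)$ with $\rho_a=\nu_a/k_a$ equal to the optimum of $(P_a)$, and $h_a/k_a$ is an optimal solution of $(P_a)$. We then invoke the reformulation of~\ref{Converted LP problem} as~(\ref{LPP:to_solve}). Lemma~\ref{lemma:cost_preserving} reduces it to~(\ref{LPP:to_solve_reduced}) with constants $c_a=\rho_a$. After sorting so that $c_1\le\dots\le c_r$, the hypotheses of Lemma~\ref{lemma:bins} need $c_a\ge 1$. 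This holds because any access set of a subtree must receive total weight at least $1$, so every optimum of $(P_a)$ is at least $1$.

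Lemma~\ref{lemma:bins} then gives the optimal $\gamma$: $\gamma_a=1/(s^*-r+t)$ for $a\le s^*$ and $\gamma_a=0$ otherwise, where $s^*$ is exactly the index computed by FA. It remains to check that FA's output realizes this solution. For $p_i$ in $T_a$ with $a\le s^*$, FA sets $h_T(i)=\alpha_a h_a(i)=\lambda h_a(i)/k_a$. Dividing by $k_T=(s^*-r+t)\lambda$ gives
\[
\frac{h_T(i)}{k_T}=\frac{1}{s^*-r+t}\cdot\frac{h_a(i)}{k_a}=\gamma_a\,y''_i .
\]
Nodes in subtrees with $a>s^*$ get $0$. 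Hence $\nu_T/k_T=\sum_a\gamma_a\rho_a$, which is the optimum of~(\ref{LPP:to_solve_reduced}). We also verify that $\alpha_a$ and $h_T(i)$ are integers, and that every access set $A=\bigcup_{\ell\in J}S_\ell$ with $|J|=t$ gets at least $k_T$ fragments. This holds because at least $s^*-r+t$ of the indices in $J$ are at most $s^*$, and each such $S_\ell$ contributes at least $\alpha_\ell k_\ell=\lambda$ fragments. Thus the output is also feasible, which closes the induction.

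The main obstacle is justifying the reduction from~\ref{Converted LP problem} to~(\ref{LPP:to_solve_reduced}) rigorously. Concretely, we must show that an optimal $y$ can always be written in the product form $\gamma_\ell y''_i$ with $y''$ optimal for $(P_\ell)$. We would do this directly, without leaning on the informal rescaling. Given any feasible $y$, set $d_\ell=\min_{S\in\CA_\ell}\sum_{i\in S}y_i$. Feasibility of $y$ forces $\sum_{\ell\in J}d_\ell\ge 1$ for every $|J|=t$, because for each $\ell\in J$ we can choose the minimizing $S_\ell$ and take their union. Moreover, $\sum_{i\in\CP_\ell}y_i\ge d_\ell\rho_\ell$, since $y/d_\ell$ is feasible for $(P_\ell)$. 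Hence $\sum_i y_i\ge\sum_\ell d_\ell\rho_\ell$, which is at least the optimum of~(\ref{LPP:to_solve_reduced}). This lower bound, combined with the matching construction above, proves optimality without needing Lemma~\ref{lemma:cost_preserving} in full generality.
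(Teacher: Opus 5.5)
Your proof is correct. Its skeleton matches the paper's: induction on the depth, reduction of the root problem to $P(c)$ with $c_a=\nu_a/k_a$, an appeal to Lemma~\ref{lemma:bins}, and the observation that FA's $s^*$, $\lambda$ and $k_T=(s^*-r+t)\lambda$ realize the optimal $\gamma$.

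The real difference is how you justify the reduction from \ref{Converted LP problem} to (\ref{LPP:to_solve_reduced}). The paper rewrites the LP in the product form $y_i=\gamma'_\ell y'_i$ and normalizes by $d_\ell$ to get (\ref{LPP:to_solve}). It then uses Lemma~\ref{lemma:cost_preserving} to replace each $\nu_\ell/k_\ell$ by its optimum, and finally identifies FA's $(\nu_T,k_T)$ with the value given by Lemma~\ref{lemma:bins}. You instead sandwich the optimum:
\begin{itemize}
\item For the upper bound, FA's assignment is integral and feasible, because any $J$ with $|J|=t$ has at least $s^*-r+t$ indices $\le s^*$, each contributing at least $\lambda$ fragments. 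Its value is $\sum_{a\le s^*}\rho_a/(s^*-r+t)$.
\item For the lower bound, take any feasible $y$ and set $d_\ell=\min_{S\in\CA_\ell}\sum_{i\in S}y_i$. The vector $(d_\ell)$ is feasible for $P(\rho)$ and $\sum_i y_i\ge\sum_\ell d_\ell\rho_\ell$.
\end{itemize}

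What your route buys is self-containment. You rely only on Lemma~\ref{lemma:bins}, not on the informal product reformulation or on Lemma~\ref{lemma:cost_preserving}. Read literally, that lemma fails when some $\gamma_\ell=0$, since then $\psi_\ell$ does not affect the objective. For instance, in Example~\ref{example: uniform assignment is not necessarily optimal} the optimal $y=(1,1,0,0,0)$ restricts to the infeasible point $0$ of the subproblem on $\{c,d,e\}$.

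Your stronger invariant also certifies that the returned $h_T$ is itself a complete optimal assignment, not just that $\nu_T/k_T$ has the optimal value. You also check the hypothesis $c_a\ge 1$ of Lemma~\ref{lemma:bins}, which the paper leaves implicit. The paper's route, in return, presents the LP as an exact optimal-substructure decomposition, which explains why a bottom-up algorithm is natural.

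When you write it up, handle $d_\ell=0$ separately: there $y/d_\ell$ is undefined, but the bound is trivial. Also note that splitting $\sum_{i\in A}y_i$ over $A=\bigcup_{\ell\in J}S_\ell$ uses the disjointness of the leaf sets $\CP_\ell$. That is exactly where partitionedness enters.
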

\begin{proof} 

The proof is structured as follows. We first show that the solutions provided by the algorithm for each subproblem corresponds to the optimal solution given by Lemma~\ref{lemma:bins}. This is done by induction over the depth of the tree.
We then conclude the proof by noticing that the algorithm combines subproblem solutions optimally.%

By Lemma \ref{lemma:cost_preserving}, we know that if a solution to~(\ref{LPP:to_solve}) is optimal, then all induced subproblems must also be solved optimally. Therefore, if those subproblems further decompose, their subproblems must also be solved optimally. In other words, every optimal solution to~(\ref{LPP:to_solve}) induces optimal solutions of all subproblems at every level of the decomposition. Therefore, we first show that Algorithm~\ref{alg:fragments_distribution_multi-level} optimally solves each subproblem of depth $\ell$ where $0\leq \ell\leq L-1$. We will show this by induction over $\ell$.

\textbf{Base cases:} 
$\ell=0$: Consider a subtree $T'$ consisting of a single leaf node. Note that the minimization problem of $T'$ is 
$\frac{\nu_{T'}}{k_{T'}}=\min\limits_{\gamma_1} \gamma_1$ with constraint $\gamma_1\geq 1$. Thus, the optimal solution is $\gamma_1=1$. This solution corresponds to the solution given by Algorithm~\ref{alg:fragments_distribution_multi-level}, as it returns $\nu_{T'}=1$ and $k_{T'}=1$.
Therefore, the algorithm solves every subtree problem of depth zero optimally.

Case $\ell=1$: Assume a subtree $T'$ has depth one and its root is labeled $t$ with $r$ subtrees $T_1,\ldots, T_r$ (leaves). Since an optimal solution to $T'$ must induce optimal solutions of its subproblems by Lemma~\ref{lemma:cost_preserving}, the algorithm must correspond to the solution given by the following problem 
\begin{align*}\label{subproblem:L=1}
    \frac{\nu_{T'}}{k_{T'}}=\min_{\gamma}  &\sum_{i=1}^r \gamma_i \frac{\nu_{i}}{k_{i}}  \\
     &\sum_{j\in J}\gamma_j\geq 1 \qquad \text{ for all } J\subseteq[r] \text{ with } |J|=t \tag{1}
\end{align*}
where $\gamma_i\in \BR_{\geq 0}$ and $(\nu_{i},k_{i})$ is given by the optimal solution for subproblem instantiated on $T_i$. Note that $\frac{\nu_{i}}{k_{i}}=1$. 
Therefore, if $r\neq t$, an optimal solution to problem~(\ref{subproblem:L=1}) is given by $\gamma_i=\frac{1}{t}$ for all $i\in [r]$, leading to $(\nu_{T'}, k_{T'})=(r, t)$. 
Otherwise, if $r=t$, then an optimal solution is $\gamma_i=1$ for some $i\in [r]$ and $\gamma_j=0$ for all $i\neq j\in [r]$.
Therefore, $\nu_{T'}=1$ and $k_{T'}=1$. 
Note that in both cases this corresponds to the solution achieved by Algorithm~\ref{alg:fragments_distribution_multi-level}. Indeed, since $\frac{\nu_{i}}{k_{i}}=1$, we have $$\frac{1}{s-r+t}\sum_{j=1}^{s} 1=\frac{s}{s-r+t}.$$ 
Therefore, for all $s\in \{r-t+1,\ldots,r-1\}$, $1\geq\frac{s}{s-r+t}$ only if $r=t$. Thus, if $r\neq t$, we have $s^*=r$, implying $\lambda=1$ and $\alpha_i=1$ for all $i\in [r]$. 
Then, the solution given by the algorithm is also $\nu_{T'}=r$ and $ k_{T'}=t$.
If $r=t$, then $s^*=1$ and hence $\lambda=1, \alpha_1=1$ and $\alpha_j=0$ for all $ j\in [r]\setminus\{1\}$. Thus, $(\nu_{T'}, k_{T'})=(1,1)$, which corresponds to the optimal solution.

\textbf{Induction hypothesis:} For all depths $\Delta$ such that $2\leq \Delta \leq \ell-2$, the following holds. For all subtrees $T'$ with depth $\Delta$, labeled $t$ and with $r$ subtrees $T_1,\ldots,T_r$ (depth $\Delta-1$ subtrees), the algorithm optimally solves the following problem  
\begin{align*}
    \frac{\nu_{T'}}{k_{T'}}=\min_{\gamma} &\sum_{i=1}^r \gamma_i \frac{\nu_{i}}{k_{i}} \\
    &\sum_{j\in J}\gamma_j\geq 1 \qquad \text{ for all } J\subseteq[r] \text{ with } |J|=t
\end{align*}
where $\gamma_i\in \BR_{\geq 0}$ and $(\nu_{i},k_{i})$ is given by the optimal solution for subproblem instantiated on subtree~$T_i$.

\textbf{Induction step:} Consider a subtree $T'$ with depth $\Delta+1$, labeled $t$ with $r$ subtrees $T_1,\ldots,T_r$, all with depth $\Delta$. By induction hypothesis, the algorithm delivers $(\nu_{i},k_{i})$ given by the optimal solution to each subproblem instantiated on subtree~$T_i$. Since an optimal solution to $T'$ induces optimal solutions to each subproblem $P_{i}$, we have to consider the following problem:
\begin{align*}
    \frac{\nu_{T'}}{k_{T'}}=\min_{\gamma} &\sum_{i=1}^r \gamma_i \frac{\nu_{i}}{k_{i}} \\
    &\sum_{j\in J}\gamma_j\geq 1 \qquad \text{ for all } J\subseteq[r] \text{ with } |J|=t
\end{align*}
where $\gamma_i\in \BR_{\geq 0}$. Moreover, by Lemma~\ref{lemma:bins}, we have $\frac{\nu_{T'}}{k_{T'}}=\sum_{i=1}^{s^*} \frac{\nu_{i}}{(s^*-r+t)\cdot k_{i}}$. Therefore, it holds that $k_{T'}=(s^*-r+t)\cdot \operatorname{lcm}(\{\,k_{i} \mid i\in S\,\})$ and $\nu_{T'}=\sum_{i=1}^{s^*}\nu_{i}\cdot \omega_i$ where $\omega_i=\frac{\operatorname{lcm}(\{\,k_{i} \mid i\in S\,\})}{k_{i}}$. This corresponds exactly to the solution given by Algorithm~\ref{alg:fragments_distribution_multi-level}, proving that the algorithm solves each induced subproblem optimally. 

As a last step, note that due to the same reasoning, the algorithm combines the subtree solutions of $T$ to an optimal solution, concluding the proof.
\end{proof}

\section{General Asynchronous Verifiable Information Dispersal}
\label{sec:gavid}
In this section, we propose the novel notion of \emph{general asynchronous
verifiable information dispersal (GAVID)} and implement it efficiently
based on monotone erasure codes.  GAVID generalizes the asynchronous verifiable
information dispersal (AVID) problem, as introduced by Cachin and
Tessaro~\cite{DBLP:conf/srds/CachinT05}, from threshold Byzantine quorum
systems to general Byzantine quorum systems.
Our implementation also generalizes the algorithms of Cachin and Tessaro.

AVID and GAVID are protocols run by servers that communicate with each other
by sending messages to each other over an asynchronous network.  We assume
that every two servers are linked by an asynchronous reliable point-to-point
channel, in the sense that all messages are eventually delivered to the
recipient and that the message originator is authenticated.

\subsection{Byzantine quorum systems and reliable systems}

To specify which servers may fail, we assume the presence of an adversary
capable of corrupting servers.  A \emph{fail-prone
  system}~$\CF \subseteq 2^\CP$ specifies the sets of servers that can be
jointly corrupted or be faulty during an execution (\CF is maximal, in the sense that none
of its members is contained in another).  Non-faulty servers are called \emph{honest}.
In particular, we consider a
\emph{Byzantine quorum system}~\cite{DBLP:journals/dc/MalkhiR98} \CQ for a set
of $n$ servers \CP and the corresponding \emph{canonical fail-prone system}
$\CF = \{\CP \setminus Q \mid Q \in \CQ\}$.  Then, a quorum system \CQ for \CF on
\CP is a set of subsets of \CP, called \emph{quorums}, satisfying
\emph{consistency}, that the intersection of any two quorums contains at least
one process that is not faulty, i.e.,
\[
  \forall F \in \CF, \forall Q_1, Q_2 \in \CQ : Q_1 \cap Q_2 \not\subseteq F;
\]
and \emph{availability}, namely that for any set of processes that may fail
together, there exists a disjoint quorum, that is,
\[
  \forall F \in \CF : \exists Q \in \CQ : F \cap Q \neq \emptyset.
\]

Given a Byzantine quorum system \CQ, a set \( K \subseteq \CP \) that
intersects every quorum is called a
\emph{kernel} of \CQ, that is, when
\[
  \forall\, Q' \in \CQ: K \cap Q' \neq \emptyset
\]
and $K$ is minimal among all such sets, i.e.,
\[
  \forall\, K' \subsetneq K: \exists\, Q' \in \CQ:\; Q' \cap K' = \emptyset.
\] 
The \emph{kernel system} \CK of \CQ is defined as the set of all kernels
of~\CQ.  It follows from the consistency property that for every quorum
$Q \in \mathcal{Q}$, there exists a kernel $K\in \mathcal{K}$ such that
$K\subseteq Q$: every quorum contains a kernel.

Additionally, we introduce the notion of a \emph{reliable} set, which will
play a central role in our scheme.
\begin{definition}
  A set $R \subseteq \CP$ is said to be \emph{\reliable} for a fail-prone
  system~\CF whenever one can remove an arbitrary faulty set from~$R$ and the
  difference still contains a kernel, i.e.,
  \[
    \forall F \in \mathcal{F}, \exists K\in \CK: K \subseteq R \setminus F.
  \]
  In other words, a \reliable set contains a kernel that is disjoint any
  faulty set in~\CF.
\end{definition}
The collection of all \reliable sets is called the \emph{\reliable system} of
the fail-prone system.

Note that each quorum is a reliable set, but not all reliable sets are
quorums.  To see this, just consider a Byzantine quorum system on $n$ servers
for a fail-prone system that consists of all sets of $f$ servers for
$f \ll n/3$, e.g., when $f=1$ and $n = 10$.

\subsection{Definition of general asynchronous dispersal}\label{ssec:GAVID_def}

For extending protocols that involve erasure codes from threshold to general
quorum systems, we first note that a quorum system, a kernel system, and a
reliable system is each also an access structure in the sense of
Definition~\ref{def:access.structure}.  A GAVID scheme for a quorum system \CQ
uses a suitable kernel system~\CK as access structure for its monotone erasure
code.

As in the existing notion of AVID, a GAVID scheme for \CQ consists of two
protocols, called \textsc{Disperse} and \textsc{Retrieve}, and relies on a
monotone erasure code whose access structure is \CK, the kernel system of~\CQ.
If a server~$\hat{p}$ wants to store a file~$f$, it starts protocol
\textsc{Disperse} with input~$f$, and the other servers start the protocol
with no input.  Every honest server should terminate this protocol and output
either \strf{stored} or \strf{abort}.

We say that a server \emph{completes} the dispersal if it terminates
\textsc{Disperse} and outputs \strf{stored} and that it \emph{aborts} the
dispersal if it terminates and outputs \strf{abort}.  To ensure storing
consistent data, we require that either all honest servers complete the
dispersal or no honest server completes the dispersal.  This way, we enable
any kernel (and consequently also every quorum) to reconstruct the file.

If a server $\tilde{p}$ wants to retrieve a file, it starts \textsc{Retrieve}
to obtain enough information from the servers to reconstruct the stored file.
Server~$\tilde{p}$ terminates this protocol when it outputs a file or~$\bot$.

More precisely, we define a GAVID scheme for \CQ as follows.

\begin{definition}
  A \emph{general asynchronous verifiable information dispersal} (GAVID)
  scheme for a Byzantine quorum system \CQ and a fail-prone system \CF
  consists of a dispersal protocol and a retrieval protocol.  For any
  adversary that corrupts a set $F \in \CF$ of servers and for any server
  \(\hat{p}\) that starts the dispersal, the following conditions hold:
  \begin{description}
  \item[\textbf{Termination.}] If \(\hat{p}\) is honest, then all honest
    servers eventually complete the dispersal.

  \item[\textbf{Agreement.}] If some honest server completes the dispersal,
    then all honest servers eventually complete the dispersal.

  \item[\textbf{Availability.}] If a kernel of honest servers completes the
    dispersal, and an honest server starts the retrieval protocol, then it
    eventually reconstructs some file~\(f'\).

  \item[\textbf{Correctness.}] If a kernel of honest servers completes the
    dispersal, then there exists a fixed value \(f_0\) such that:
    \begin{enumerate}
    \item If \(\hat{p}\) is honest and has dispersed a file \(f\), then
      \(f_0 = f\).
    \item If an honest server reconstructs \(f'\), then \(f_0 = f'\).
    \end{enumerate}
  \end{description}
\end{definition}

\subsection{Protocol~\textsf{GAVID}}

In the following, we present Protocol~\textsf{GAVID} that implements GAVID. We assume a collision-resistant hash function $H$ and denote the encoding and decoding functions of the monotone erasure code for $\mathcal{K}$ by \textsc{Encode} and \textsc{Decode}, respectively.

Protocol \textsf{GAVID} consists of two sub-protocols, called
\textsc{Disperse} and \textsc{Retrieve}.
Algorithm \textsc{Disperse} is instantiated by a server $\hat{p}$ wanting to disperse a file $f$. The first step is to encode $f$ using \textsc{Encode}, obtaining fragments $(g_1, \dots, g_n)$. Then, $\hat{p}$ builds the \emph{verification vector} $D= (H(g_1), \dots, H(g_n))$. 
At this  point, $\hat{p}$ sends $g_i$ along with $D$ to $p_i$, for all $i \in [n]$. Each node checks that the received fragment $g_i$ is valid with respect to $D$, i.e., $H(g_i)= D_i$, and, if so, it echoes $g_i$. Once a node $p_i$ receives valid $\strf{echo}$ messages from a quorum, it decodes them using \textsc{Decode} and encodes the result again to check if it is valid with respect to $D$. In that case, it sends a $\strf{ready}$ message containing a valid fragment $\bar g_i$ to each $p_j$. Otherwise, it means that the sender $\hat{p}$ acted maliciously, so the node aborts.

If a node does not receive valid $\strf{echo}$ messages from a quorum, but receives valid $\strf{ready}$ messages from a kernel, it can recover its fragment by decoding the received fragments and encoding again, as above.
Finally, when a node receives valid $\strf{ready}$ messages from a \reliable set $R[D]$,
it stores $D$ and its valid fragment.

When a server $\tilde{p}$ wants to retrieve a file, it runs \textsc{Retrieve}. Here, $\tilde{p}$ asks the servers to send their stored information, namely, the vector $D$ and their fragments. Once $\tilde{p}$ has received enough fragments that are valid with respect to the same $D$, it decodes them to recover the file.

The details of \textsc{Disperse} and \textsc{Retrieve} can be found in Algorithm~\ref{alg:disperse} and Algorithm~\ref{alg:retrieve}, respectively.

\begin{algo*}[htbp]
\caption{Protocol~\textsc{Disperse} (server~$p_i$)}
  \label{alg:disperse} 
\setcounter{g@linecounter}{0}

\vbox{
  \small
  \begin{numbertabbing}
    xxxx\=xxxx\=xxxx\=xxxx\=xxxx\=xxxx\=MMMMMMMMMMMMMMMMMMM\=\kill
\textbf{initialization} \label{} \\
\> $E \gets []$: hash map from $\{0,1\}^*$ to sets of nodes \label{}\\
\> \`// $E[D]$ is the set of nodes that have sent valid $\strf{echo}$ messages for $D$ \label{}\\
\> $R \gets []$: hash map from $\{0,1\}^*$ to sets of nodes \label{}\\
\> \`// $R[D]$ is the set of nodes that have sent valid $\strf{ready}$ messages for $D$ \label{}\\
\> $A \gets []$: hash map from $\{0,1\}^*$ to sets of node-fragment pairs \label{}\\
\> \`// $A[D]$ contains nodes and valid fragments for $D$ \label{}\\
\> $\var{stored} \gets \false$ \label{}\\
\label{}\\
\textbf{upon} starting dispersal with input a file~$f$ \textbf{do} \`// only server $\hat{p}$ \label{}\\
\> $[g_1, \dots, g_n] \gets \text{\textsc{Encode}}(f)$\label{}\\
\> $D \gets [H(g_1), \ldots, H(g_n)]$\label{}\\
\>\textbf{for} $j \in [n]$ \textbf{do} \label{}\\
\>\>send message $(\strf{send}, D, g_j)$ to $p_j$ \label{}\\
\label{}\\
\textbf{upon} receiving message $(\strf{send}, D, g_i)$ from $\hat{p}$ for the first time \textbf{do} \label{} \\
\>\textbf{if} $H(g_i) = D_i$ \textbf{then} \label{}\\
\>\>\textbf{for} $j \in [n]$ \textbf{do}\label{}\\
\>\>\>send message $(\strf{echo}, D, g_i)$ to $p_j$ \label{}\\
\label{}\\
\textbf{upon} receiving message $(\strf{echo}, D, g_j)$ from $p_j$ for the first time \textbf{do} \label{} \\
\>\textbf{if} $H(g_j) = D_j$ \textbf{then} \label{}\\
\>\>$A[D] \gets A[D] \cup \{(j,g_j)\}$ \label{}\\
\>\>$E[D] \gets E[D] \cup \{j\}$ \label{}\\
\>\>\textbf{if} $E[D] \in \CQ \land R[D] \not\in \CK$ \textbf{then} \`// \strf{ready} message not yet sent \label{}\\
\>\>\>$\bar{f} \gets \text{\textsc{Decode}}( (g_1, \dots, g_n) )$,
  where $(l, g_l) \in A[D]$ if $l\in E[D]\cup R[D]$ and $g_l =\bot$ otherw.\label{}\\ %
\>\>\>$[\bar{g}_1, \dots, \bar{g}_n]\gets \text{\textsc{Encode}}(\bar f)$\label{}\\
\>\>\>\textbf{if} $H(\bar g_l) = D_l$ for all $l \in [n]$  \textbf{then}\label{}\\
\>\>\>\>\textbf{for} $j \in [n]$ \textbf{do}\label{}\\
\>\>\>\>\>send message $(\strf{ready}, D, \bar g_i)$ to $p_j$\label{}\\
\>\>\>\textbf{else}\label{}\\
\>\>\>\>output $\strf{abort}$\label{}\\
\label{}\\
\textbf{upon} receiving message $(\strf{ready}, D, g_j)$ from $p_j$ for the first time \textbf{do} \label{}\\
\>\textbf{if} $H(g_j) = D_j$ \textbf{then} \label{}\\
\>\>$A[D] \gets A[D] \cup \{(j,g_j)\}$ \label{}\\
\>\>$R[D] \gets R[D] \cup \{j\}$ \label{}\\
\>\>\textbf{if} $E[D] \notin \CQ \land R[D] \in \CK$ \textbf{then} \`// \strf{ready} message not yet sent \label{}\\ 
\>\>\>$\bar{f} \gets \text{\textsc{Decode}}( (g_1, \dots, g_n) )$, where $(l, g_l) \in A[D]$ if $l\in E[D]\cup R[D]$ and $g_l =\bot$ otherw.\label{}\\ 
\>\>\>$[\bar{g}_1, \dots, \bar{g}_n] \gets \text{\textsc{Encode}}(\bar f)$ \label{}\\
\>\>\>\textbf{if} $H(\bar g_l) = D_l$ for all $l \in [n]$ \textbf{then} \label{}\\
\>\>\>\>\textbf{for} $j \in [n]$ \textbf{do}\label{}\\
\>\>\>\>\>send message $(\strf{ready}, D, \bar g_i)$ to $p_j$\label{}\\
\>\>\>\textbf{else}\label{}\\
\>\>\>\>output $\strf{abort}$\label{}\\
\>\>\textbf{else if} $\neg \var{stored} \land R[D] \in \CR$ \textbf{then} \label{}\\
\>\>\>store $(D, g_i)$ \label{} \\
\>\>\>$\var{stored} \gets \true$ \label{}\\
\>\>\>output $\strf{stored}$ \label{}
  \end{numbertabbing}
}
\end{algo*}

\begin{algo*}[htbp]
\caption{Protocol~\textsc{Retrieve} (server $p_i$)}\label{alg:retrieve}
\setcounter{g@linecounter}{0}

\vbox{
  \small
  \begin{numbertabbing}
    xxxx\=xxxx\=xxxx\=xxxx\=xxxx\=xxxx\=MMMMMMMMMMMMMMMMMMM\=\kill
\textbf{initialization} \label{} \\
\> $(D, g_i) \gets \text{values retrieved from \textsc{Disperse}}$ \label{}\\
\> $R \gets []$: hash map from $\{0,1\}^*$ to sets of nodes \label{}\\
\> \`// $R[D]$ is the set of nodes that have sent valid $\strf{fragment}$ messages for $D$ \label{}\\
\> $A \gets []$: hash map from $\{0,1\}^*$ to sets of node-fragment pairs \label{}\\
\> \`// $A[D]$ contains nodes and valid fragments for $D$ \label{}\\
\> $\bar{f} \gets \bot$ \`// the retrieved file \label{}\\
\label{}\\
\textbf{upon} starting retrieval \textbf{do} \`// only server $\tilde{p}$ \label{}\\
\>\textbf{for} $j \in [n]$ \textbf{do} \label{}\\
\>\>send message $(\strf{retrieve})$ to $p_j$ \label{}\\
\label{}\\
\textbf{upon} receiving message $(\strf{retrieve})$ from $p_{j}$ \textbf{do} \label{} \\
\>\textbf{if} $D \neq \bot$ \textbf{then} \label{}\\
\>\>send message $(\strf{fragment}, D, g_i)$ to $p_{j}$ \label{}\\
\label{}\\
\textbf{upon} receiving message $(\strf{fragment}, D, g_j)$ from $p_j$ for the first time \textbf{and} $\bar{f} = \bot$ \textbf{do} \label{}\\
\>\textbf{if} $H(g_j)=D_j$ \textbf{do} \label{}\\
\>\>$A[D] \gets A[D] \cup \{(j,g_j)\}$ \label{}\\
\>\>$R[D] \gets R[D] \cup \{j\}$ \label{}\\
\>\>\textbf{if} $R[D] \in \CK$ \textbf{then} \label{}\\
\>\>\>$\bar{f} \gets \text{\textsc{Decode}}( (g_1, \dots, g_n) )$,
  where $(l, g_l) \in A[D]$ for $l \in [n]$ and $g_l=\bot$ otherwise\label{}\\
\>\>\>output $\bar{f}$\label{}
 \end{numbertabbing}
  }
\end{algo*}
\subsection{Analysis of Protocol~\textsf{GAVID}}\label{ssec:GAVID_analysis}
In this section, we prove that, given an $[m,k]-$linear monotone erasure code for a quorum system \CQ, \textsf{GAVID} provides a general asynchronous verifiable information dispersal scheme for \CQ. We show this statement in Theorem~\ref{theo:GAVID}, building on the following lemma.

\begin{lemma} \label{Lemma: D^i=D^j}
    In the dispersal protocol of \textsf{GAVID}, if two honest servers $p_i$ and $p_j$ send $(\strf{ready}, D^i)$ and $(\strf{ready}, D^j)$, respectively, then $D^i=D^j$.
\end{lemma}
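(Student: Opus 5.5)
We follow the classical argument of Cachin and Tessaro: honest servers first send \strf{ready} only after quorum intersection pins down a unique $D$, and a kernel-triggered \strf{ready} can always be traced back to an honest quorum-triggered one. We distinguish the two ways an honest server may send $(\strf{ready}, D)$ in Algorithm~\ref{alg:disperse}. Either (a) it received valid \strf{echo} messages for $D$ from a quorum $Q\in\CQ$, i.e., $E[D]\in\CQ$; or (b) it received valid \strf{ready} messages for $D$ from a kernel $K\in\CK$, i.e., $R[D]\in\CK$. An honest server sends at most one \strf{echo} message, namely for the $D$ contained in the unique \strf{send} message it processed from $\hat p$.

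The first step treats the case in which both $p_i$ and $p_j$ send their \strf{ready} messages because of (a). Let $Q_i$ be the echo quorum for $D^i$ and $Q_j$ the echo quorum for $D^j$, and let $F\in\CF$ be the set actually corrupted. By the consistency property of the Byzantine quorum system, $Q_i\cap Q_j\not\subseteq F$, so some honest server $p_h$ lies in both quorums. Since $p_h$ echoes only once and for a single vector, we get $D^i=D^j$.

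The second step reduces case (b) to case (a) by considering the \emph{first} honest server that sends a \strf{ready} message for a given vector $D$. We claim that this server must have done so via (a). Indeed, if it had used (b), then it held valid \strf{ready} messages for $D$ from a kernel $K$. A kernel intersects every quorum, in particular the quorum $\CP\setminus F$, which is a quorum because $\CF$ is the canonical fail-prone system. Hence $K\not\subseteq F$, so $K$ contains an honest server that sent \strf{ready} for $D$ earlier, contradicting minimality. This argument also uses that honest servers send \strf{ready} at most once, which the guards $R[D]\notin\CK$ and $E[D]\notin\CQ$ are meant to ensure.

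The third step combines the two. Given $D^i$ and $D^j$, take the first honest \strf{ready} senders $p_{i'}$ for $D^i$ and $p_{j'}$ for $D^j$. Both used (a), so the first step yields $D^i=D^j$.

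The main obstacle we expect is making the claim that a kernel always contains an honest server rigorous, since this depends on how kernels relate to $\CF$. The concrete fact needed is that for every $F\in\CF$ there is a quorum disjoint from $F$, which is the intended reading of availability and holds for canonical fail-prone systems. Combined with the definition of a kernel as a set hitting every quorum, this gives $K\setminus F\neq\emptyset$. We will also remark that the echo \strf{ready} path checks re-encoding against $D$, but this check is not needed for the lemma.
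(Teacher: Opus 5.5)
Your proof is correct and follows the paper's route. Two echo-triggered \strf{ready} messages are handled by quorum intersection plus the uniqueness of an honest server's \strf{echo}, and kernel-triggered ones are traced back to an echo-triggered one; your ``first honest sender of $(\strf{ready}, D)$'' argument, together with the observation that $\CP\setminus F$ is a quorum for the canonical fail-prone system (so every kernel contains an honest server), makes rigorous the paper's Cases~2--3 and its ``w.l.o.g.'' step. One remark: the minimality argument never uses that honest servers send \strf{ready} at most once, so you can drop that claim, which the guards in Algorithm~\ref{alg:disperse} do not in fact enforce.
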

\begin{proof}
    Assume by contradiction that $D^i\neq D^j$. We know that $p_i$ and $p_j$ send $\strf{ready}$ messages once they have received either $\strf{echo}$ messages from a quorum, or $\strf{ready}$ messages from a kernel. Then, we can have the following three cases.
    
    Case 1: Assume both $p_i$ and $p_j$ receive $\strf{echo}$ messages from a quorum. Then since any two quorums intersect, at least one honest server $p_r$ has sent two different $\strf{echo}$ messages to $p_i$ and $p_j$, which gives a contradiction.

    Case 2: Assume one of $p_i$ and $p_j$ has received $\strf{echo}$ messages from a quorum and the other has received $\strf{ready}$ messages from a kernel. W.l.o.g, assume $p_i$ has received $(\strf{echo}, D^i)$ messages from a quorum and $p_j$ has received $(\strf{ready}, D^j)$ messages from a kernel. Then one honest server must have sent $(\strf{ready}, D^j)$ from a quorum. But then it reduces to case 1.

    Case 3: Assume both $p_i$ and $p_j$ have received $(\strf{ready}, D^i)$, resp. $(\strf{ready}, D^j)$ messages from a kernel. Then at least one honest server $p_r$ must have sent a $(\strf{ready}, D^i)$, and at least one honest server $p_l$ must have sent $(\strf{ready}, D^j)$. W.l.o.g., we can assume that both $p_r$ and $p_l$ have received $(\strf{echo}, D^i)$, resp. $(\strf{echo}, D^j)$ messages from a quorum. But then it reduces again to case 1.  
\end{proof}

\begin{theorem}\label{theo:GAVID}
    Given an $[m,k]-$linear monotone erasure code for a quorum system \CQ, \textsf{GAVID} is a general asynchronous verifiable information dispersal scheme for \CQ. %
\end{theorem}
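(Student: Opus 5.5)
We verify the four GAVID properties one at a time, for an adversary corrupting some $F\in\CF$. Lemma~\ref{Lemma: D^i=D^j} supplies a single verification vector $D$ that all honest \strf{ready} messages carry. The monotone erasure code is used with the kernel system $\CK$ as its access structure. By the completeness property, \textsc{Decode} applied to valid fragments of any kernel returns the encoded file. Collision resistance of $H$ is assumed throughout, so every fragment that passes $H(g_j)=D_j$ equals the fragment committed in $D$.

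\textbf{Termination.} Let $\hat p$ be honest. Every honest $p_j$ receives a valid $(\strf{send},D,g_j)$ and echoes it. By availability there is a quorum $Q$ disjoint from $F$, so every honest server eventually has $E[D]\supseteq Q$, i.e.\ $E[D]\in\CQ$ up to monotonicity. That quorum contains a kernel, so \textsc{Decode} returns $f$. Re-encoding reproduces $D$, so every honest server sends \strf{ready}.
\begin{itemize}
\item[] It remains to show that $R[D]$ eventually becomes reliable at every honest server. I would argue that the set of honest servers is reliable: it contains $Q$, which avoids $F$, and $Q$ contains a kernel.
\end{itemize}
The catch is that reliability quantifies over all $F'\in\CF$, not just the actual $F$. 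So the real claim is that every quorum is reliable. For this I would use consistency: for any $F'\in\CF$ take $Q'$ with $Q'\cap F'=\emptyset$. Then $Q\cap Q'\not\subseteq F$. I would try to show that $Q\setminus F'$ intersects every quorum, and hence contains a kernel.

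\textbf{Agreement.} Suppose an honest server completes. Then it has seen valid \strf{ready} messages for $D$ from a reliable set $R$. Removing the actual $F$ leaves a kernel of honest senders, which by Lemma~\ref{Lemma: D^i=D^j} all used $D$. Each of these senders either saw an echo quorum or recursively saw a ready kernel. Unwinding this chain gives an honest server that decoded from an echo quorum and verified the re-encoding against $D$. Hence $D$ is a consistent encoding of some $\bar f$.
\begin{itemize}
\item[] Every honest server eventually receives that honest kernel's \strf{ready} messages, so $R[D]\in\CK$ holds (monotonically). It then decodes the kernel's valid fragments. By completeness this yields $\bar f$, the re-encoding check passes, and it sends \strf{ready}.
\end{itemize}
All honest servers therefore send \strf{ready} for $D$, and, as in Termination, their set is reliable, so everyone stores.

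\textbf{Availability and Correctness.} If an honest kernel has stored $(D,g_i)$, then a retrieving honest server eventually collects valid fragments for $D$ from that kernel. So $R[D]\in\CK$ and it outputs \textsc{Decode}, which establishes Availability. For Correctness, set $f_0:=\bar f$, the unique file whose encoding hashes to $D$ (established in Agreement). Any kernel of valid fragments decodes to $f_0$ by completeness together with collision resistance, which gives part~2. If $\hat p$ is honest, $D$ is derived from $\textsc{Encode}(f)$, so $f_0=f$, which gives part~1. One caveat: a retriever might collect a kernel for a different $D'$ sent by faulty servers. I would rule this out by noting that a kernel cannot consist only of faulty servers, since some quorum avoids $F$; the honest members of the kernel stored only $D$.

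\textbf{Main obstacle.} I expect the main obstacle to be showing that the set of honest servers, or some quorum, is reliable for every $F'\in\CF$. I would first try the intersection argument above. If it fails, I would fall back on the paper's remark that ``each quorum is a reliable set''.
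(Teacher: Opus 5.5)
Your proposal is correct and follows the paper's overall structure: the four properties are checked one by one, Lemma~\ref{Lemma: D^i=D^j} pins down a single $D$, collision resistance identifies every valid fragment with the committed one, and completeness of the code is applied on kernels. Where you genuinely depart from the paper is the step showing that every honest server eventually receives \strf{ready} messages from a reliable set.

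\begin{itemize}
\item \textbf{The paper's argument.} It argues by contradiction that $\bigcup_{F\in\mathcal{F}} K_F$, with $K_F\subseteq Q_F$ and $Q_F\cap F=\emptyset$, is reliable. As written, this only shows that \emph{some} reliable set exists. That union may contain faulty servers that never send \strf{ready}, so it does not show that the set of actual \strf{ready}-senders is reliable.
\item \textbf{Your argument.} The honest servers contain a quorum by availability, reliability is upward closed, and every quorum is reliable. This is the argument that actually yields the property, in both Termination and Agreement.
\item \textbf{Your flagged obstacle.} It closes in one line if you apply consistency with $F'$ itself rather than with the real $F$. For every quorum $Q''$ we have $Q\cap Q''\not\subseteq F'$, i.e.\ $(Q\setminus F')\cap Q''\neq\emptyset$. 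So $Q\setminus F'$ meets every quorum and therefore contains a minimal such set, which is a kernel. The auxiliary quorum $Q'$ and the inclusion $Q\cap Q'\not\subseteq F$ play no role.
\end{itemize}

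Your other additions fill in points the paper only asserts:
\begin{itemize}
\item Unwinding the chain of \strf{ready} messages to the first honest sender shows that $D$ really is the hash vector of some $\textsc{Encode}(\bar f)$. That sender must have used the echo path, because every kernel meets the all-honest quorum.
\item The same fact, that every kernel contains an honest server, rules out a retriever decoding a kernel of \strf{fragment} messages for some $D'\neq D$. The paper handles this only indirectly, through the lemma on \strf{ready} messages.
\end{itemize}
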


\begin{proof}
\textbf{Termination.} %
If $\hat{p}$ is honest and starts \textsc{Disperse} with $D$ then all honest servers will receive $(\strf{send},D)$ together with its fragments. Therefore, each honest server will send $(\strf{echo}, D)$ and receive echo messages from a quorum.  Thus, each honest server will send $(\strf{ready},D)$. To see this, assume there is an honest server $p_j$ that aborts and an honest server $p_i$ that sends $(\strf{ready},D)$. Therefore, there must exist a vector $\overline{g}$ such that $H(\overline{g}_\ell)=D_\ell$ for all $\ell\in [n]$. However, since $p_j$ aborts, by Lemma \ref{Lemma: D^i=D^j} there exists $(\ell,g'_\ell)\in A^j[D]$ such that $H(g'_\ell)=D_\ell$ and $g'_\ell\neq \overline{g}_\ell$. 
This means that the adversary has found a collision for $H$, which is impossible. Therefore, at the end each honest server will receive valid fragments with the same $D$ from a \reliable set $R[D]$. Thus, each honest server will terminate with $D$.

\textbf{Agreement.} Assume one honest server has completed the dispersal protocol and has stored the verification vector $D$. This means that it has received valid fragments with the same $D$ from a \reliable set $R[D]$. 
Therefore, some kernel $K$ of honest servers has sent valid ready messages with $D$, %
which means that each honest server receives valid ready messages with $D$ from $K$. %
This implies that all honest servers send a valid ready message with $D$. This can be seen as follows. Assume a server $p_i$ completes the dispersal with $D$ and there is a server $p_j$ that aborts. This means that there exists a value $\overline{g}$ such that $H(\overline{g}_\ell)=D_\ell$ for all $\ell\in [n]$. However, since $p_j$ aborts and by Lemma \ref{Lemma: D^i=D^j}, there exists $(\ell,g'_\ell)\in A^j[D]$ such that $H(g'_\ell)=D_\ell$, but $g'_\ell\neq \overline{g}_\ell$, i.e., the adversary has found a collision for $H.$  %
Therefore, in the end, each honest server receives valid ready messages from a \reliable set $R'[D]$. %
We prove this fact by contradiction. 
Suppose that for all $R'[D] \subseteq \CP$ there exists $ F \in \CF$ such that $ \nexists K\in \CK: K \subseteq R'[D] \setminus F$. From the availability property of Byzantine quorum systems, it follows that there is a quorum $Q_F$ such that $Q_F \cap F = \emptyset$. %
Since every quorum contains a kernel, there exists $K_F \in \CK$  such that $K_F \subseteq Q_F$. Thus, $K_F \cap F = \emptyset$. We can therefore set $R'[D] = \bigcup_{F \in \mathcal{\CF}} K_F$, obtaining a contradiction. 

\textbf{Availability.} Since a kernel $K$ of honest servers has accepted, servers in $K$ also hold the same verification vector
$D$, and thus any honest server is always able to reconstruct some value. 

\textbf{Correctness.} Let $R[D] \subseteq \CP$ be a kernel of honest servers that have completed the dispersal of~$f$. Define $f_0 = \text{\textsc{Decode}}(\overline{g}_{R[D]})$.
For the first point, we have to show that $f_0$ corresponds to the file $f$ a honest server has started to disperse. Assume an honest server $\hat{p}$ has shared a file $f$ and $f \neq f_0$. Then every $\strf{echo}$ message from an honest $p_i$ to an honest $p_j$ contains $D$ and $g_i$ as computed by $\hat{p}$. If the servers in $R[D]$ computed their $\overline{g}_j$ from these $\strf{echo}$ messages, then $g_j = \overline{g}_j$, i.e., their fragments correspond to the one sent by $\hat{p}$.
However, since $f \neq f_0$, there must be an honest server $p_j$ with $\overline{g}_j \neq g_j$. Thus, it must
have received a value $g'_\ell \neq g_\ell$ from a corrupted server $p_\ell$ (either in an $\strf{echo}$ or $\strf{ready}$ message),
which it accepted. Since $H(g_\ell) = H(g'_\ell)$ holds, the adversary has found a collision for $H$, which is impossible.

For the second point, we have to show that if an honest server reconstructs a value $f'$, then this value corresponds to $f_0$. For this,  assume an honest server reconstructs a value $f' \neq f_0$ using some kernel $R'[D] \neq R[D]$.  Since $R'[D]$ is a kernel, the value of $D$ the server chooses must be the unique
one held by the correct servers by Lemma~\ref{Lemma: D^i=D^j}. On the other hand, if $f' \neq f_0$, there must be some
value $g'_\ell$ received by some server $p_\ell$, with $\ell \in R'[D] \setminus R[D]$, such that $g'_\ell \neq \overline{g}_\ell$, but $H(g'_\ell) = D_\ell$.
On the other hand, in order for a server
$p_j \in R[D]$ to accept, it must hold that $H(\overline{g}_\ell) = D_\ell$.
Thus, the adversary must have found a collision for $H$ either in the dispersal protocol or in the retrieval protocol.
\end{proof}

\textbf{Complexity.}  
We use the following complexity measures to analyze \textsf{GAVID}: 
\begin{itemize}
    \item \emph{message complexity}: number of messages associated to an instance of the protocol;
    \item \emph{communication complexity}: bit length of all messages associated to an instance of the protocol;
    \item \emph{storage complexity}: overall bit length of the information stored in the memory of the honest servers after they have completed the dispersal protocol.
\end{itemize} 
Let us denote by $|H|$ the bit-size of the range of the hash function. %
The message complexity of \textsc{Disperse} is $\mathcal{O}(n^2)$. 
Regarding the communication complexity, we know that the bit length of each message is $m_i \cdot \lceil log_2(q) \rceil + n \cdot |H|$, for some $i \in [n]$. Since each party sends the same message $n$ times, %
we have that the communication complexity is 
\begin{align*}
    & \mathcal{O} \Big(\sum_{i \in [n]} (n \cdot m_i\cdot \lceil log_2(q) \rceil + n^2 |H|) \Big) =  \\
    & \mathcal{O} \Big(n \cdot m\cdot \lceil log_2(q) \rceil + n^3 |H|) \Big) = \\
    & \mathcal{O}(n\cdot m\cdot \frac{|f|}{k}+n^3\cdot |H|) =\\
    & \mathcal{O}(n(1+\beta)\cdot |f|+n^3\cdot |H|)
\end{align*}
where $m=k(1+\beta)$ and $|f|=\mathcal{O}(k\cdot \lceil log_2(q)\rceil )$.
The storage complexity is given by the size of all stored fragments, which is $m \cdot \lceil log_2(q) \rceil $, together with the size of vector $D$ taken $n$ times, i.e., $n^2 \cdot |H|$. The resulting complexity is $\mathcal{O} (m \cdot \lceil log_2(q) \rceil + n^2 \cdot |H|) = \mathcal{O} ((1+\beta) \cdot |f| + n^2 \cdot |H|)$. %

The message complexity of \textsc{Retrieve} is $\mathcal{O}(n)$.
For the same reasoning as above, the communication complexity is $\mathcal{O} ((1+\beta) \cdot |f| + n^2 \cdot |H|)$.

\textbf{\textsf{GAVID-H}.} The storage and communication complexities of the protocol can be improved using a Merkle tree instead of the vector $D$, following the approach of Cachin and Tessaro in~\cite{DBLP:conf/srds/CachinT05}. In particular, the idea is to build the Merkle tree whose leaves are
labeled with the hashes of the fragments $H(g_1), \dots, H(g_n)$. Let $l= \lceil \log(n) \rceil$, and denote by $v_0, \dots, v_l$ the unique path from the root $v_r$ to leaf $i$. Assign $label(v_i)=H(label(v'), label(v''))$ to each vertex $v_i$ with children $v'$ and $v''$. 
Moreover, let $w_j$ be the unique sibling of $v_j$ for all $j \in [l]$.
Then, for each $i \in [n]$, the \emph{fingerprint} for $g_i$ is $FP(i)= [label(w_1), \dots, label(w_l)]$.

\textsf{GAVID} can then be modified as follows. Replace $D$ with the root hash $H_r$ in all messages, and let server $p_i$ store $(FP(i), H_r)$ instead of $D$. Moreover, the check $H(g_i)=D_i$ is substituted with the following test. Let $\textsf{verify}(i, g_i, FP, H_r)$ be the function that computes the iterated hash on $H(g_i)$ and $FP$, with $FP$ a vector of length $l$, and then compares the result with $H_r$. If they are the same, the function returns 1, otherwise it returns 0.
We call the resulting protocol \textsf{GAVID-H}. 

We now show that \textsf{GAVID-H} is more efficient \textsf{GAVID}. First, notice that the bit length of each message now is $m_i \cdot \lceil log_2(q) \rceil + (\lceil log_2(n)\rceil + 1) \cdot |H|$. For the same reasoning as above, this leads to the following communication complexities:
\begin{itemize}
    \item $\mathcal{O}(n(1+\beta)\cdot |f|+n^2(\lceil log_2(n)\rceil + 1)\cdot |H|)$ bits in \textsc{Disperse};
    \item $\mathcal{O} ((1+\beta)\cdot |f| + n \cdot (\lceil log_2(n)\rceil + 1)\cdot |H|))$ bits in \textsc{Retrieve}.
\end{itemize}
Moreover, the storage complexity of \textsc{Disperse} is $\mathcal{O} ((1+\beta)\cdot |f| + n \cdot (\lceil log_2(n)\rceil + 1)\cdot |H|))$.

\subsection{Communication-efficient general Byzantine reliable broadcast}

Byzantine reliable broadcast is a fundamental asynchronous communication
primitive, which exists as a building block also in many consensus protocols.
It ensures that a sender can transmit a message to all servers such that if
one correct server delivers some message, then all correct servers deliver a
message.  Moreover, they all deliver the same message, and if the sender is
correct, then this is the message that the sender has broadcast.

Bracha's~\cite{DBLP:journals/iandc/Bracha87} celebrated protocol implements
this notion in an asynchronous network with an initial message from the sender
to all servers followed by two rounds of all-to-all communication.

As already observed by Cachin and Tessaro~\cite{DBLP:conf/srds/CachinT05},
their notion of AVID and the corresponding protocol are closely related to the
notion of Byzantine reliable broadcast and Bracha's protocol.  Hence,
communication-efficient reliable broadcast for a Byzantine quorum system~\CQ
can be implemented with a protocol that is closely related to the dispersal
algorithm of Protocol~\textsf{GAVID} for~\CQ.  Relating to the description of
\textsf{GAVID} in Algorithm~\ref{alg:disperse}, the broadcast protocol
involves the same \strf{send} message from the sender to all servers, and it
runs almost the same two rounds of exchanging \strf{echo} and \strf{ready}
messages.  The only difference occurs in the last step: once a server has
received valid fragments from a \reliable set $R[D]$, it immediately decodes
the valid fragments and outputs the resulting message.  This method extends to
related communication-efficient broadcast protocols that use erasure
codes~\cite{DBLP:conf/wdag/Shoup24, DBLP:conf/eurocrypt/LocherS25}.

\section{Related work}
\label{sec:related_work}

We first review the most commonly used erasure codes. We then discuss storage methods that leverage different properties of erasure codes and present data availability sampling, another key application of erasure codes. Additionally, we mention broadcast and consensus protocols based on erasure coding techniques.
Finally, we discuss the possibility of building monotone erasure codes for access structures expressed as \emph{monotone span programs} (MSPs).

\textbf{Erasure codes.}
One of the most important classes of erasure codes is given by \emph{maximum distance separable} (MDS) codes, as they provide an optimal trade-off between fault tolerance and storage overhead. The most widely used MDS codes are \emph{Reed–Solomon} (RS) codes~\cite{ReedS60, DBLP:journals/jacm/Rabin89}, due to their simplicity and efficiency. Whenever a node fails and needs to recover its fragment, traditional MDS codes repair it by downloading the whole file and encoding it again.
To address this issue, Dimakis et al.~\cite{DBLP:journals/tit/DimakisGWWR10} introduced \emph{regenerating codes} and characterized the fundamental tradeoff between storage and repair bandwidth. Among them, \emph{minimum-storage regenerating} (MSR) codes minimize repair bandwidth while maintaining the minimum storage overhead. In particular, they achieve the same storage optimality as Reed–Solomon codes and are therefore MDS.

\textbf{Distributed storage.} 
Erasure codes are typically employed to efficiently store large amount of information in a distributed manner.

\emph{Redundant Array of Independent Disks} (RAID)~\cite{DBLP:conf/sigmod/PattersonGK88} is an erasure coding technique involving multiple disk arrays, organized in different ways. The most widely used variants are RAID-5 and RAID-6, which give single-disk and double-disk fault tolerance, respectively.

Filecoin~\cite{DBLP:conf/dsn/PsarasD20} is a \emph{decentralized storage network} (DSN) scheme that offers a data storage and retrieval service through a decentralized network of independent storage providers. 
A flaw of Filecoin is given by slow reads in the absence of a node storing a hot copy of the file. To address this issue, Williams et al. propose Arweave~\cite{Arweave}. Their approach, however, leads to replication levels comparable to classic state machine replication, implying a very high overhead. 
A much more efficient decentralized storage system is Storj~\cite{Storj}, which achieves an overhead of 1.75. %
Nevertheless, a drawback of Storj is its inefficient data recovery process, which requires reconstructing the entire file in order to recover lost parts. 
A more flexible scheme is Walrus~\cite{DBLP:journals/corr/abs-2505-05370}, which provides an efficient reconstruction mechanism while ensuring Byzantine Fault Tolerance. However, Walrus is less efficient in terms of overhead, which is 3.5. %
The state-of-the-art protocol for distributed storage is Shelby~\cite{DBLP:journals/corr/abs-2506-19233}, which achieves an overhead of less than 1, %
lower than that of all other distributed storage systems present in the literature. Shelby employs Clay codes~\cite{DBLP:conf/fast/VajhaRPKLSKBYNH18}, a particular kind of MSR codes, to provide \emph{hot storage}, i.e, an infrastructure that can support low-latency, high-throughput reads. 

All the distributed storage systems presented here assume a $t$-out-of-$n$ access structure, which makes them less flexible than the scheme proposed in this paper.

\textbf{Data availability sampling.} Another relevant application of erasure codes is \emph{data availability sampling} (DAS). 
In the context of blockchains, DAS allows light clients to verify that the data of a block has been fully published without downloading the entire block. 
The idea is that the block is first encoded using erasure codes (typically Reed–Solomon codes). The encoded data is then committed using Merkle trees. Light clients randomly sample fragments and verify them against the commitment. If enough randomly selected fragments are available, the client can conclude with high probability that the entire block is available. Blockchains such as Celestia~\cite{Celestia} are built around DAS as a core component. Moreover, a prominent data availability layer is EigenDA~\cite{EigenDA}, which is designed for Ethereum rollups.

\textbf{Communication-efficient broadcast and consensus.} Many recent
Byzantine-tolerant consensus protocols disseminate data with erasure
codes~\cite{DBLP:journals/csur/YangCWWLZ24}. %
Indeed, data dissemination based on erasure coding can be leveraged to improve the efficiency of multiple protocols. In particular, such techniques can be used to realize fast validated agreement~\cite{DBLP:conf/wdag/CivitDGGKVZ24, DBLP:conf/dsn/0001LM025}.

Moreover, erasure coding techniques can be leveraged to build efficient broadcast protocols. For example, Shoup~\cite{DBLP:conf/wdag/Shoup24} builds on AVID to reduce the communication complexity of the Simplex atomic broadcast protocol~\cite{DBLP:conf/tcc/ChanP23}. 

\textbf{Erasure coding using Monotone Span Programs.}
For threshold access structures, erasure coding schemes can be constructed directly from polynomials. The idea is to embed the information into the coefficients of a degree-$t$ polynomial. Any set of at least $t+1$ nodes can later reconstruct the polynomial, thus recovering the information.
At the same time, it is known that \emph{monotone span programs} (MSPs) are a generalization of polynomials~\cite{DBLP:conf/coco/KarchmerW93}~-- as a matter of fact, access structures can be represented by MSPs~\cite{Beimel96, DBLP:conf/coco/KarchmerW93}, and secret-sharing schemes have been described using MSPs~\cite{DBLP:conf/eurocrypt/CramerDM00,Gennaro96,DBLP:conf/sss/AlposC23}. Although it is also possible to build erasure codes from MSPs, doing so would not be efficient. For this reason, in this paper we take a different approach and do not rely on MSPs.

\section{Conclusion}
\label{sec:conclusion}

In this work, we have introduced monotone erasure codes that respect arbitrary trust assumptions characterized by a monotone access structure. They extend existing erasure codes, which rely on threshold failure models, to applications with more general assumptions, where some nodes are more likely to fail than others or may be more trustworthy. Practical systems that rely on such assumptions can be found in the blockchain and cryptocurrency space.

We have shown how to efficiently build a linear monotone erasure code for any given access structure~$\CA$, without minimizing the overhead.
To generate a linear monotone erasure code that is optimal, we have presented an alternative method that uses a threshold erasure code as a building block.
An optimal code of this kind results from solving a linear programming problem derived from the access structure. The procedure is efficient if \CA has a compact (e.g., polynomial-size) description in the number of nodes. We have also provided an efficient construction of monotone erasure codes for partitioned access structures.

Monotone erasure codes have also direct applications to Byzantine-tolerant communication protocols: We have extended AVID so that it works not only in scenarios where any $f$ out of $n$ nodes may fail, but also in the general case of Byzantine quorum systems.  The resulting GAVID primitive has led us also to obtain a communication-efficient Byzantine reliable broadcast protocol for an arbitrary Byzantine quorum system.

This work points to future work in multiple directions.
Perhaps the most interesting open question concerns the efficient construction of an optimal monotone erasure code for an arbitrary monotone access structure, which is expressed as MBF or monotone span program, but that has exponentially many access sets. 
Furthermore, constructing \textit{dynamic} monotone erasure codes that can adapt to access structures that evolve over time would be valuable in practical applications.
Finally, it would be interesting to determine to which extent other methods from coding theory, such as rateless codes or low-density parity-check codes, can be employed in the setting of monotone erasure codes.

\section*{Acknowledgments}

V.B. and A.C. have been supported by a grant from the Stellar Development
Foundation to the University of Bern.

\printbibliography

\end{document}